\documentclass[pdflatex,sn-mathphys-num]{sn-jnl}


\usepackage{graphicx}%
\usepackage{multirow}%
\usepackage{amsmath,amssymb,amsfonts}%
\usepackage{amsthm}%
\usepackage{mathrsfs}%
\usepackage[title]{appendix}%
\usepackage{xcolor}%
\usepackage{textcomp}%
\usepackage{manyfoot}%
\usepackage{booktabs}%
\usepackage{algorithm}%
\usepackage{algorithmicx}%
\usepackage{algpseudocode}%
\usepackage{listings}%
\usepackage{subcaption}
\usepackage[ruled,noend,algo2e]{algorithm2e}
\newtheorem{lemma}{Lemma}
\newtheorem{corollary}{Corollary}

\newcommand{\BF}{\textsf{BF-IO}\xspace}
\newcommand{\BFA}{\textsf{BF-Approx}\xspace}
\newcommand{\FCFS}{\textsf{FCFS}\xspace}

\theoremstyle{thmstyleone}%
\newtheorem{theorem}{Theorem}
\newtheorem{proposition}[theorem]{Proposition}%

\theoremstyle{thmstyletwo}%
\newtheorem{example}{Example}%
\newtheorem{remark}{Remark}%

\theoremstyle{thmstylethree}%
\newtheorem{definition}{Definition}%

\raggedbottom

\begin{document}

\title[A Universal Load Balancing Principle and Its Application to Large Language Model Serving]{A Universal Load Balancing Principle and Its Application to Large Language Model Serving}


\author[3]{\fnm{Zixi} \sur{Chen}}\email{chenzixi22@stu.pku.edu.cn}

\author[2]{\fnm{Tianci} \sur{Bu}}\email{btc010001@gmail.com}
\equalcont{These authors contributed equally to this work.}

\author[1]{\fnm{Chendong} \sur{Song}}\email{songcd@ust.hk}
\equalcont{These authors contributed equally to this work.}

\author*[2]{\fnm{Xin} \sur{Lu}}\email{xin.lu.lab@outlook.com}

\author*[1,4]{\fnm{Yinyu} \sur{Ye}}\email{yyye@ust.hk}

\author*[1]{\fnm{Zijie} \sur{Zhou}}\email{jerryzhou@ust.hk}

\affil*[1]{\orgdiv{Department of Industrial Engineering and Decision Analytics}, \orgname{HKUST}, \orgaddress{Clear Water Bay, \country{Hongkong, China}}}


\affil[2]{\orgdiv{Department of System Engineering}, \orgname{National University of Defense Technology}, \orgaddress{\street{109 Deya Road}, \city{Changsha}, \postcode{410073}, \state{Hunan}, \country{China}}}

\affil[3]{\orgdiv{School of Mathematical Sciences}, \orgname{Peking University}, \orgaddress{\street{Yiheyuan Road},  \postcode{100871}, \state{Beijing}, \country{China}}}

\affil[4]{\orgdiv{Department of Management Science and Engineering}, \orgname{Stanford University}, \orgaddress{\street{450 Jane Stanford Way}, \city{Stanford},   \postcode{94305}, \state{California}, \country{United States}}}


\abstract{Over 40\% of computational power in Large Language Model (LLM) serving systems can be systematically 
wasted—not from hardware limits, but from load imbalance in barrier-synchronized 
parallel processing. When progress is gated by the slowest worker at each step, 
heterogeneous and evolving workloads create persistent stragglers; faster workers 
idle while drawing power, producing nothing. In large language model inference 
alone, this translates to gigawatt-hours of wasted electricity daily. Here we 
develop a universal load-balancing principle for barrier-synchronized systems 
with non-migratable state. We prove worst-case theoretical guarantees: imbalance 
reduction grows with system scale, and the resulting energy savings can exceed 
52\% for modern hardware at fleet scale. Experiments corroborate the theory, demonstrating 28\% energy reduction alongside substantial 
throughput and latency improvements. Formulated as an online integer optimization with provable guarantees, the 
principle extends beyond LLM serving to broad classes of barrier-synchronized 
parallel systems, establishing a theoretical foundation for sustainable 
high-performance computing.
}

\maketitle

\section{Introduction}\label{sec:intro}

Load balancing—the problem of distributing work across parallel resources to minimize latency, energy, or cost—is a central challenge across the natural and engineering sciences and in large-scale service systems. In large-scale molecular dynamics and biomolecular simulations, dynamic load-balancing schemes are essential to maintain scalability as particles move and local density fluctuates \citep{hess2008gromacs4,deng2000adaptive}. In climate and weather models, specialized load-balancing strategies are required to handle localized, high-intensity physics kernels and heterogeneous meshes, improving time-to-solution for global simulations and regional storm-surge predictions \citep{foster1994load,rodrigues2010comparative}. Similar issues arise in cloud computing platforms \citep{mishra2020load,khiyaita2012load}, manufacturing systems \citep{chen1987task,li2017big}, and emergency medical management \citep{kolomvatsos2015load,friesen2011load}, where workloads and resource availability fluctuate over time. In this work, we study a new instance of this broad class: load balancing in large-scale LLM (Large Language Model) serving. This setting adds stringent constraints—jobs whose processing times are unknown and evolve as computation proceeds, per-request state that cannot be migrated between workers without prohibitive cost, and step-wise synchronization barriers across workers—that make it substantially more challenging than many classical load-balancing problems. Our goal is to develop a \textbf{universal and scientifically sound} load balancing principle that not only resolves this emerging bottleneck in LLM serving, but also applies, in appropriately simplified form, to more traditional load-balancing tasks.

Modern generative LLMs \citep{brown2020language,chowdhery2023palm,openai2023gpt,kaplan2020scaling,wei2022emergent} have transformed artificial intelligence by achieving unprecedented performance in natural-language generation, reasoning, and task generalization across diverse domains. LLM serving refers to the process of hosting these models in production environments to process user prompts and generate responses in real time. Meeting this demand requires vast networks of AI datacenters that deliver timely and reliable inference at massive scale \citep{kim2025inquiry,Milmo2025}. Even considering OpenAI alone, its LLM serving services handle at least $2.5$ billion user prompts per day \citep{chatgptdemand}, corresponding to roughly on the order of gigawatt-hour of electricity consumption daily \citep{chatgptpower}. As LLM usage continues to accelerate, the energy and cooling resources required for such inference are approaching practical and environmental limits—often referred to as the emerging energy wall \citep{energywall}. Despite remarkable advances in model architecture and hardware acceleration, the overall efficiency and energy use of LLM serving remains constrained by how computational workloads are balanced across parallel workers.

In LLM serving, inference unfolds in two stages: a \textit{prefill} stage that encodes the input prompt and initializes internal state, followed by a \textit{decode} stage that generates the response one token at a time. In modern deployments, decoding is executed in parallel across many computational workers, and each token step typically includes a coordinated stage across workers (e.g., collective communication in model-parallel execution), which induces barrier synchronization: the step completes when the slowest worker finishes. As active requests evolve and complete at different times, per-worker workloads drift, creating persistent stragglers and forcing faster workers to idle. Figure~\ref{fig:intro1} quantifies the resulting inefficiency in a real industrial trace: both the mean and median barrier-induced idle time exceed \textbf{40\%} per decode step, wasting more than two-fifths of aggregate compute on average. This inefficiency scales directly into energy cost at fleet scale. Because GPUs draw substantial power even while waiting at synchronization barriers—idle time consumes energy \cite{ozcan2025quantifying} without producing useful computation—the 40\% barrier-induced idle in Figure~\ref{fig:intro1} represents not merely wasted time but wasted watts. Figure~\ref{fig:powerintro} demonstrates this concretely: reducing decode-stage imbalance via the approach introduced in this paper yields a \textbf{28.2\%} reduction in total energy consumption relative to the default serving policy over this trace, with the improvement growing as system scale increases. At the scale of modern LLM deployments—where inference alone can consume gigawatt-hours of electricity daily—such reductions translate to megawatt-hour savings with proportionate reductions in cooling load and carbon footprint. We next unpack the decode-side mechanics that create this imbalance and clarify why routing decisions are both critical and difficult.

\begin{figure}[h]
\centering
\includegraphics[width=0.45\textwidth]{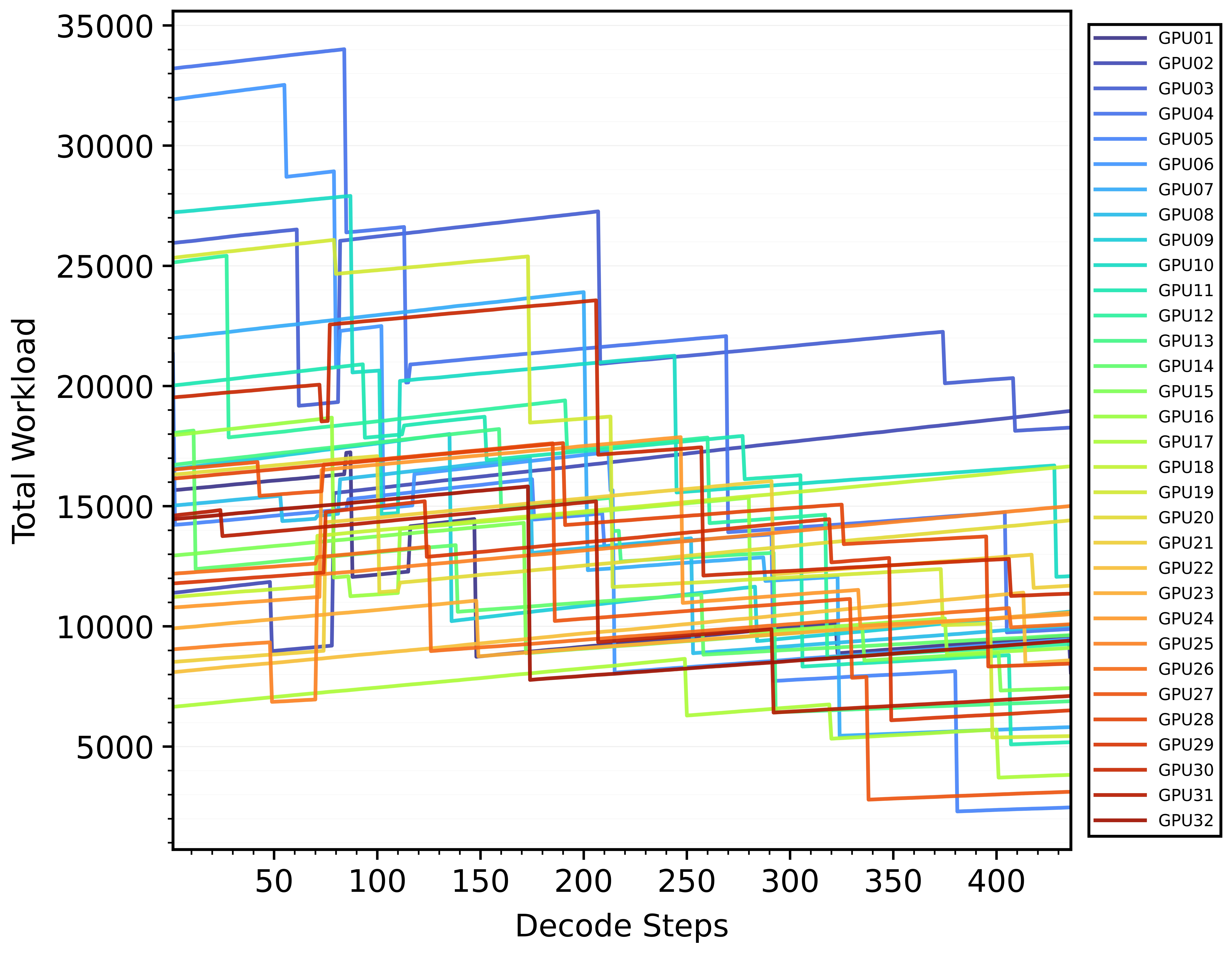}
\includegraphics[width=0.5\textwidth]{stacked_mla_wait_epmoe_breakdown_square.jpg}
\caption{Workload imbalance and idle time in $436$ decoding steps over $3$ minutes from a real industrial trace. 
\textbf{(Left): } Dynamic total workload across 32 GPUs. At each step, the completion time for all GPUs is dictated by the heaviest workload unit, demonstrating significant and persistent workload imbalance.
\textbf{(Right): } Per-step idle time resulting from this imbalance. The yellow bars represent the average percentage of time that GPUs spend waiting during each decode step. With a mean (and median) idle time of 40\% (and 41\%), more than two-fifths of the aggregate computational resources are wasted on average per step.
}\label{fig:intro1}
\end{figure}

\begin{figure}[t]
    \centering
    \includegraphics[width=0.52\linewidth]{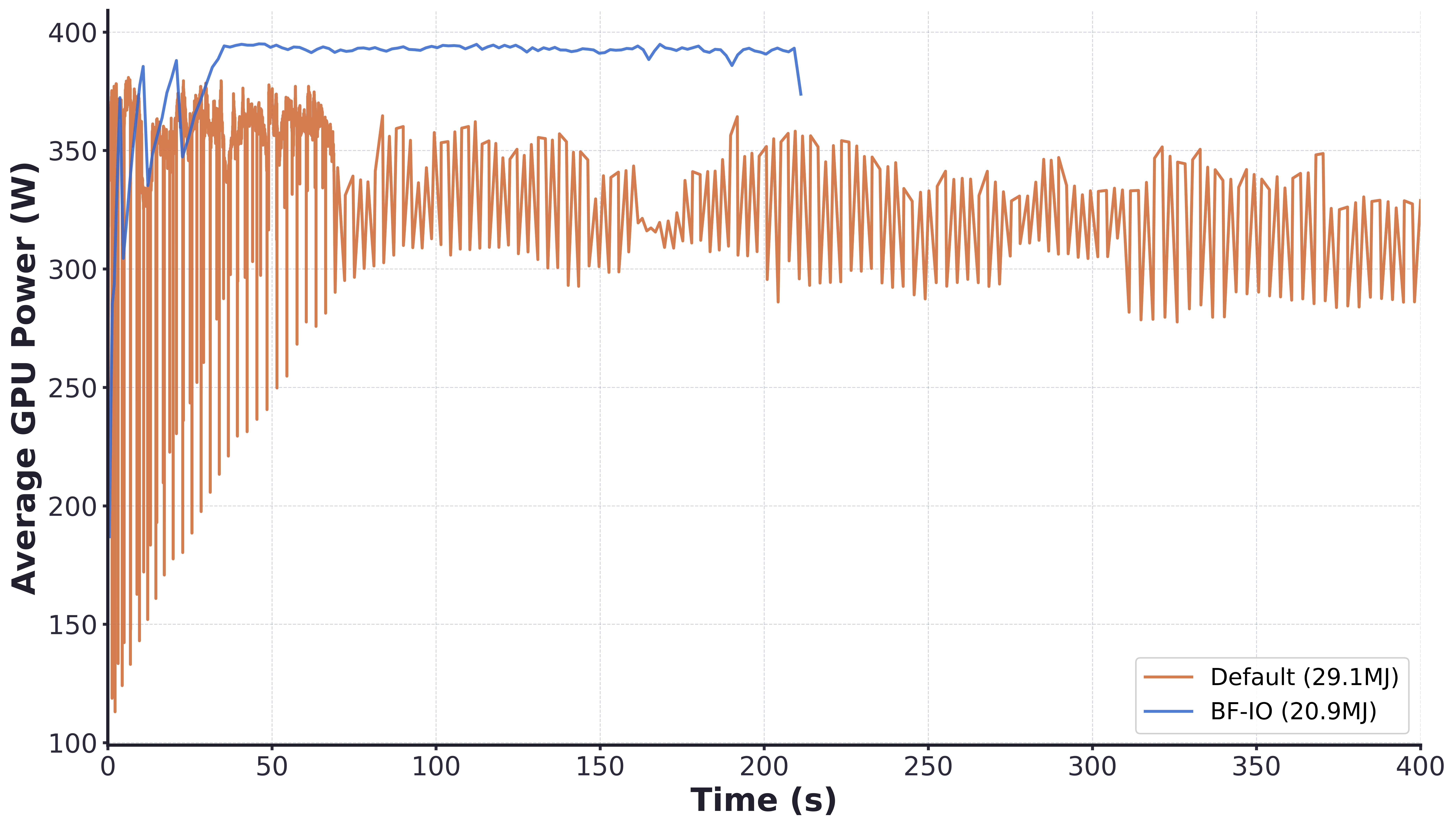}
\includegraphics[width=0.44\linewidth]{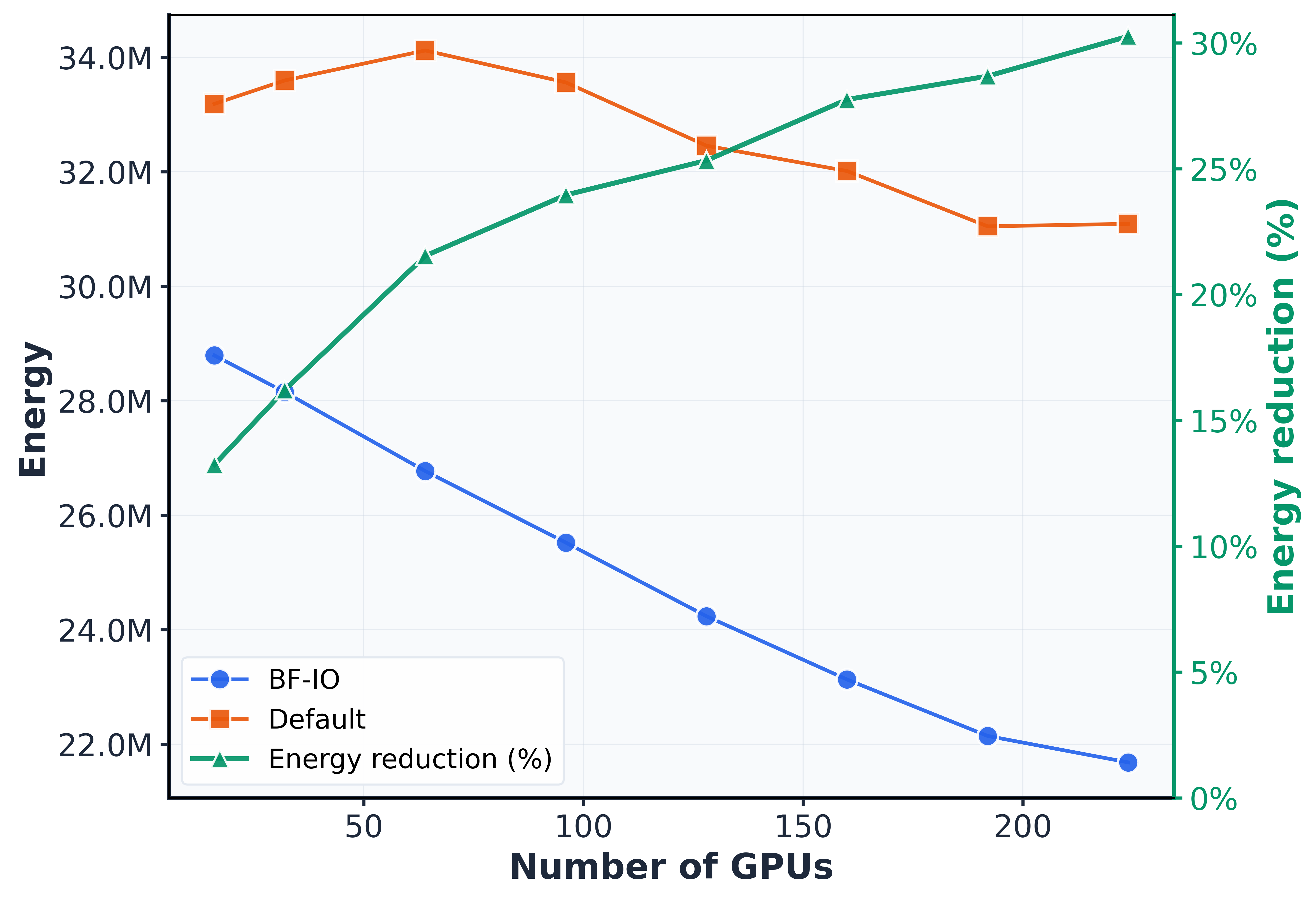}   
    \caption{Comparison of energy and power consumption over the default policy and the approach introduced in this paper (\BF). \textbf{(Left): }     
    Instantaneous GPU power consumption over time under the default policy (orange curve) and \BF (blue curve). Total energy, computed as the time integral of instantaneous power, is 29.1\,MJ for the default and 20.9\,MJ for \BF—a 28.2\% reduction. \BF draws slightly higher instantaneous power (reflecting improved GPU utilization from reduced barrier idling), but completes the same workload in less time; the resulting shorter integration window yields lower total energy consumption. \textbf{(Right): } Energy consumption over increasing system scale (number of GPUs). The energy reduction percentage keeps increasing when the system becomes larger.}
    \label{fig:powerintro}
\end{figure}

To make the bottleneck precise, we outline the decode-side mechanics relevant to load balance. \textit{Prefill–decode (PD) disaggregation} is the most common structure in LLM serving, where prompt encoding (prefill) and token-by-token generation (decode) run on distinct workers. When a request finishes prefill, the router assigns it to a decode worker; from that handoff onward the request remains on that decode worker. During decoding, each newly generated token appends one unit of key–value (KV) vectors to the per-request KV cache, so the cache grows linearly with response length. In practice, the decode stage is organized via \textit{data parallelism (DP)}: each worker maintains a batch of active requests and their KV caches; migrating a request would require transferring its entire cache and is therefore impractical—assignments are effectively \textbf{sticky}, making the assignment decision crucial. At each decode step $t$, worker $g$ performs the local attention compute for its request batch; the runtime \(T_{\mathrm{local}}^{(g)}(t)\) is approximately linear with the aggregate resident KV it must read and carry over. Once the local computation finishes, the system executes a \textit{model-parallel} stage—such as expert parallelism (EP) or tensor parallelism (TP)—in which the forward pass is split across workers because the model (or its expert set) is partitioned across devices. This stage requires collective communication and is therefore \emph{synchronous} across workers, taking time \(T_{\mathrm{sync}}(t)\). The per-step wall-clock time is thus
\[
T_{\mathrm{step}}(t)=\max_{g} T_{\mathrm{local}}^{(g)}(t)+T_{\mathrm{sync}}(t).
\]

As response lengths diverge across requests, these local times diverge, so the maximization term governs progress and induces persistent idle time for lightly loaded workers. Figure \ref{fig:dp} illustrates the LLM serving system architecture and the load balancing barrier.

\begin{figure}[t]
\centering
\includegraphics[width=\textwidth]{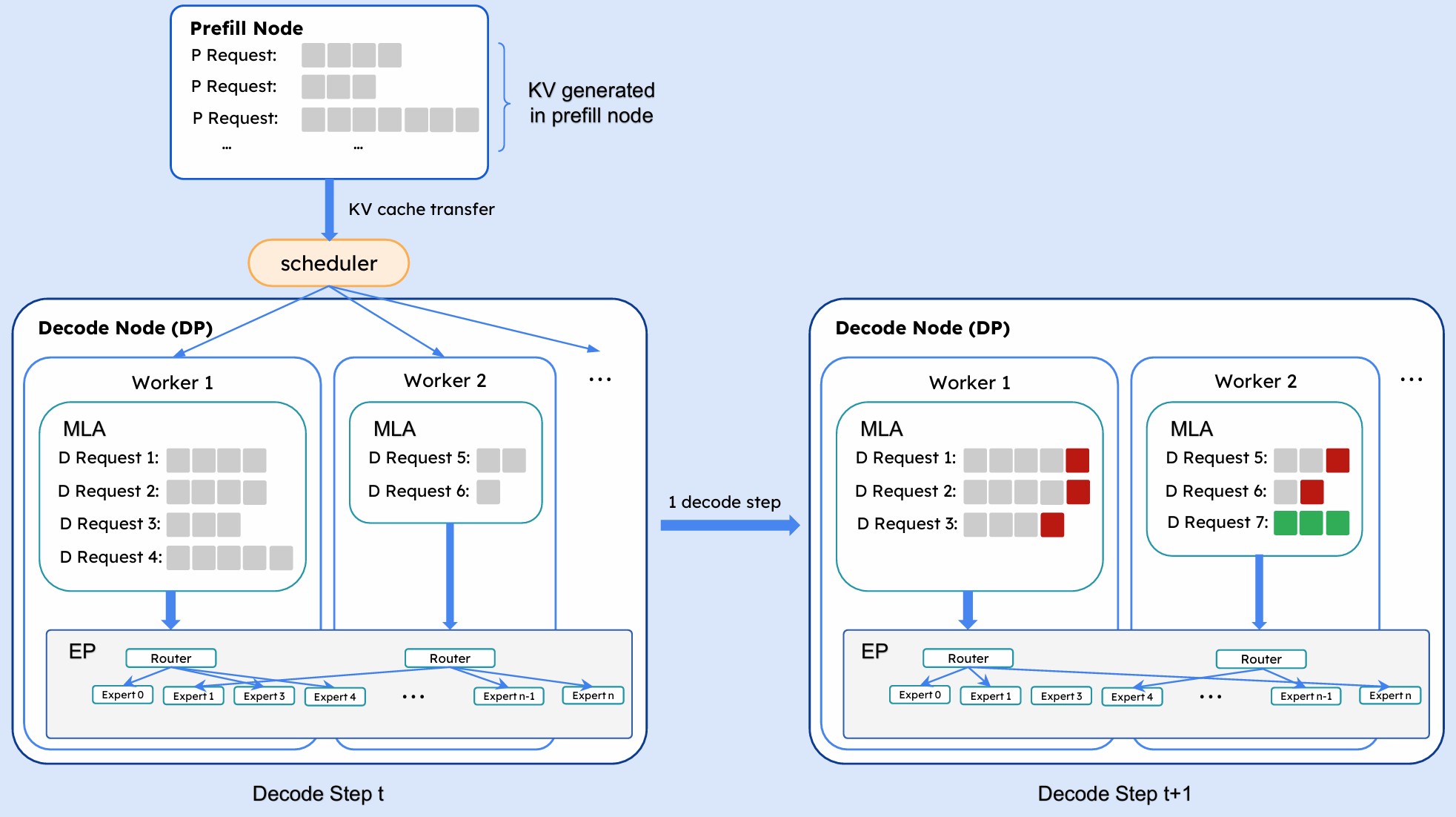}

\includegraphics[width=\textwidth]{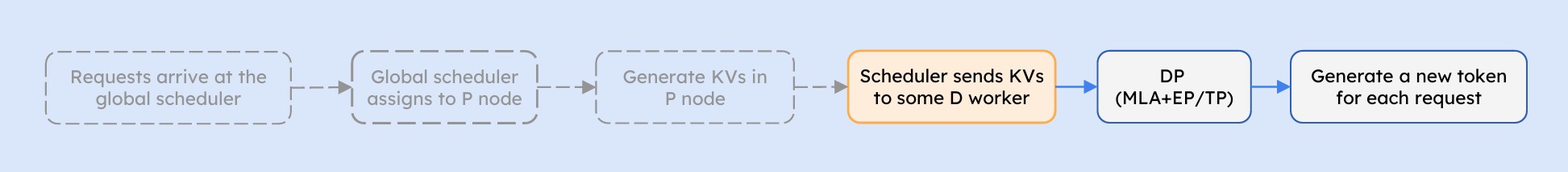}
\caption{Prefill–decode serving and the decode-stage load-balancing barrier. \textbf{Left (Decode Step t): } Prefill workers build each request’s KV cache and hand the request to the scheduler, which assigns it to a decode worker; the assignment is sticky because migrating the KV cache is impractical. Each decode worker processes a batch of requests during the local compute stage (Multi-head Latent Attention). Gray squares indicate the resident KV cache per request. After local compute, all workers must begin the EP stage synchronously. The step time is therefore governed by the slowest local stage: here, Worker 2 must wait for Worker 1, exposing the load-balancing bottleneck. \textbf{Right (Decode Step t+1):} Every active request produces one token, appending to its KV cache (red square). Completed requests leave the batch (e.g., Request 4), and newly prefetched requests may be admitted by the scheduler to a worker (e.g., Request 7 with green square). Because request lengths differ and assignments are sticky, per-worker loads drift from step to step, creating persistent imbalance that limits throughput and energy efficiency. \textbf{Legend: } gray = existing KV cache; red = KV added by the current step; green = newly admitted request after prefill. }\label{fig:dp}
\end{figure}

Balancing decode-time work is intrinsically difficult for three coupled reasons. \textbf{(i) Unknown and evolving workloads.} The response length of a request is unknown at assignment. Although several studies predict decode length from prefill features \citep{zheng2023response,qiu2024efficient,fu2024efficient,chen2024kvdirect,shahout2024don}, the inherent randomness of generation and the limited information in these features result in low prediction accuracy. \textbf{(ii) Online, non-stationary arrivals.} Requests finish prefill and arrive to the router sequentially, so decisions must be made without knowledge of the future mix of lengths. Moreover, the frequency of arrival varies over time and is hard to model, so even an arrival distribution for decode is unavailable. \textbf{(iii) Sticky assignments with drifting load.} Once placed, a request cannot be migrated without moving its entire KV cache. Each request’s decode is processed on a single decode worker for its entire lifetime. Meanwhile, per-worker load changes at every step (Figure \ref{fig:dp}): continuing requests add one token’s workload, completed requests remove their entire workload, and newly admitted requests add their KV cache from the prefill step. Early placement errors therefore persist and compound.

Together, these constraints define an online routing problem with unknown, time-varying job sizes under hard synchronization barriers, in which classical policies such as round-robin or join-shortest-queue can be systematically misaligned with the true objective and perform poorly. Because the barrier-induced bottleneck is architectural and the workload mix is inherently non-stationary, improvements that depend on a particular trace or dataset are insufficient; what is needed is a \emph{universal} load-balancing principle that is robust across deployments and adversarial arrival patterns. Appendix~\ref{append:literature} reviews related work, including the limitations of classical load-balancing policies in this setting.

\section{Results} \label{sec:result}

We summarize the main findings of this work in two parts. First, we introduce a \emph{universal} load-balancing principle for sticky assignment systems with barrier synchronization and show that it admits a clean optimization formulation and rigorous worst-case guarantees. Second, we instantiate the principle in the decode-stage DP bottleneck of LLM serving, design a sustainable routing policy, and demonstrate substantial improvements on both public and proprietary workloads.

\subsection{A universal load-balancing principle: Balance a Short Future}
\label{subsec:results-principle}

\paragraph{\textbf{Scientific structure: a step-wise integer-optimization formulation.}}
At each assignment step, the Balance Future with Integer Optimization principle (\BF) chooses assignments that minimize a short-horizon prediction of future imbalance (Section~\ref{sec:load}). We write this decision as a finite-horizon \emph{integer optimization} over binary assignment variables, subject to per-worker capacity constraints and full utilization of available slots, in the spirit of classical linear- \cite{luenberger1984linear,dantzig2002linear} and integer-optimization \cite{schrijver1998theory} formulations used in scientific and engineering applications. Concretely, the step-$k$ decision is characterized by the integer program~\eqref{eq:bf-ip}, whose objective is the accumulated predicted imbalance over a short window and whose feasible set encodes the sticky assignment rule and slot constraints. This formulation makes explicit that decode-time load balancing is not merely a heuristic dispatch problem: it is an online sequence of structured combinatorial optimization problems with a barrier-driven objective.

\paragraph{\textbf{A counterintuitive information suggested: short lookahead, not full-job prediction.}}
A central insight is that effective load balancing in barrier-synchronized systems does \emph{not} require accurate prediction of the \emph{total} remaining workload of newly arriving jobs---a task that is notoriously unreliable in LLM inference and many other domains. Instead, our \BF principle relies on a weaker and more robust signal: a short lookahead description of the \emph{near-future evolution} of the \emph{currently active} jobs, not the newly arriving jobs. Intuitively, the barrier cost at time $k$ is dominated by which workers are likely to become stragglers in the \emph{next} few steps; decisions can be revisited at subsequent steps as new arrivals are observed, so errors about the distant future can be corrected by future routing choices. This is precisely the regime where short-horizon prediction is feasible: while forecasting the full completion time of a long job may be difficult, predicting whether an ongoing job will finish within a small window is often considerably easier (and can be performed using lightweight signals, or even manual rules in some settings). 

\subsection{Provable theoretical guarantees under worst-case settings}
\label{subsec:results-theory}

\paragraph{\textbf{Guaranteed load balancing improvement in the LLM serving model.}}
We build a mathematical model in Section \ref{sec:model} for the LLM DP decode load  imbalance problem, where each active request contributes an increasing KV workload until termination. Then, we provide worst-case theoretical guarantees, where arrivals may be chosen adversarially in Section~\ref{sec:proof}. We rigorously prove that even under this worst-case arrival model, \BF provably reduces long-run imbalance relative to the real baseline algorithm by a factor $\Omega\!\bigl(\sqrt{B\log G}\bigr)$ that grows with system scale (Theorem \ref{thm:homog-o}, Theorem~\ref{thm:inhomog-o}), where $B$ is the per-worker batch size and $G$ is the number of workers. This scaling is significant: it shows that the benefit of principled balancing \emph{increases} with cluster size and batching, precisely the regime in which industrial huge serving systems operate.

\paragraph{\textbf{From imbalance reduction to energy savings.}}
Beyond efficiency metrics, we establish a rigorous theoretical connection between load balancing and energy consumption. Theorem~\ref{thm:energy-general} proves that any guaranteed improvement in imbalance translates into an explicit guaranteed improvement in total synchronized-phase energy, with constants depending on GPU power model parameters and request distribution. Instantiating this result with our imbalance guarantees for \BF yields Corollary~\ref{cor:energy-asymptotic}: as system scale $G\to\infty$, the energy saving fraction approaches a hardware-dependent constant that exceeds 52\% for modern GPUs. This provides, to our knowledge, the first worst-case theoretical guarantee linking scheduling policy to energy efficiency in LLM serving, establishing a principled foundation for sustainable inference at scale.

\paragraph{\textbf{Universality beyond LLMs: a broad class of drifting workloads.}}
The LLM analysis exploits the fact that per-request workload drifts by one unit during processing, but the same phenomenon occurs far more broadly. We therefore extend the theory to a general non-decreasing drift model in which all active jobs share a bounded, time-varying workload increment sequence $(\delta_k)_{k\ge1}$, encompassing the classical constant-workload case ($\delta_k\equiv 0$), standard LLM decoding with unit-step KV growth ($\delta_k\equiv 1$), speculative decoding where multiple tokens may be accepted per step ($\delta_k \geq 1$) and and memory-efficient architectures employing cache compression or sparse attention (Section~\ref{subsec:general-theory}). Theorem~\ref{thm:general} shows that the same scale-sensitive improvement persists in this general setting: \BF achieves a worst-case imbalance-reduction factor of order $\Omega(\sqrt{B\log G})$ relative to the baseline under the worst-case arrival sequences. 

Beyond its immediate implication for system design, the mathematical proof introduces a technically involved analytical route for studying online load balancing under stickiness and barrier coupling; we view this as a novel and reusable perspective for future theoretical work on load balancing in modern, stateful parallel systems.

\subsection{Resolving a decode-stage load balancing bottleneck in LLM serving}
\label{subsec:results-system}

\paragraph{\textbf{Decode-stage load imbalance as a dominant efficiency bottleneck}}
We establish that decode-stage load imbalance constitutes a dominant efficiency bottleneck in production-scale LLM serving with large-scale EP/TP synchronization. In such deployments, each token step is barrier-gated by the slowest data-parallel worker; heterogeneity in resident KV footprints therefore translates directly into idle time on faster workers while they await synchronization. Our analysis of real industrial traces quantifies this inefficiency: Figure~\ref{fig:intro1} shows that the average fraction of compute time lost per decode step exceeds \textbf{40\%}, representing more than two-fifths of aggregate GPU resources wasted at barriers, which results in a 28.2\% energy waste. Crucially, the magnitude of this waste varies sharply with arrival patterns and request characteristics, indicating that improvements tied to any single trace or dataset are insufficient—a finding that motivates our focus on a universal principle with worst-case guarantees.

\paragraph{\textbf{Empirical results on open and proprietary workloads.}}
We evaluate \BF through extensive numerical simulations to assess whether the theoretical improvements translate into practical efficiency gains.
We evaluate \BF on (i) public datasets and (ii) proprietary industrial workloads under matched serving configurations (Section~\ref{sec:num}). Across these settings, \BF consistently reduces imbalance and improves system efficiency relative to the default baselines used in real industries. Concretely, on the public workload, \BF improves throughput by \textbf{90\%}, reduces latency by \textbf{44\%}, and reduces energy consumption by \textbf{28\%}. The proprietary workload has been verified by a major LLM developer industry in China.

\section{Model and Baseline Policy} \label{sec:model}

We consider a general online assignment problem with $G$ computational workers. Each worker $g \in [G]$ has a maximum concurrency of $B \in \mathbb{N}$, meaning that at most $B$ requests can be processed simultaneously on $g$. The discrete time horizon is indexed by $k \in \{0,1,\ldots,K\}$, where $K$ can be taken large (and later set to infinity in asymptotic analyses). 

\paragraph{Arrival instance and workload profiles.}
Let $\mathcal I$ denote the set of all requests in a given arrival instance. Each request $i \in \mathcal I$ arrives at some step $k_i \in \{0,\ldots,K\}$ and may be assigned to a worker at any step $x_i \ge k_i$. Once assigned, the request is processed on a \emph{single} worker until completion; we do not allow migration between workers (assignments are \emph{sticky}) and we do not preempt processing.

The processing requirement of request $i$ is represented by a \emph{workload profile}
\[
W_i = \bigl(w_i^{(1)},w_i^{(2)},\ldots,w_i^{(o_i)}\bigr),
\]
where $o_i \in \mathbb{N}$ is the total number of processing steps for request $i$, and $w_i^{(j)} \ge 0$ denotes the workload contributed by $i$ in its $j$-th processing step. Thus, once $i$ starts processing at time $x_i$, it occupies exactly $o_i$ consecutive steps $x_i,x_i+1,\ldots,x_i+o_i-1$, and its workload at time $k$ (while active) is $w_i^{(k-x_i+1)}$. 

Two simple examples illustrate this notation. If a request $i$ has constant workload $5$ in each of $3$ steps, then $W_i = (5,5,5)$. In LLM inference, if a request $i$ has a KV-cache load that grows linearly with the number of generated tokens and starts with prefill size $3$, then a run of $4$ decoding steps corresponds to $W_i = (3,4,5,6)$. In general, the full profile $W_i$ is \emph{unknown} to the scheduler at arrival time; it is treated as fixed but unobserved.

\paragraph{Policies and per-step workloads.}
A non-anticipative policy $\pi$ specifies, for each request $i$ and each step $k \ge k_i$, whether to assign $i$ at time $k$, and if so to which worker $g \in [G]$, subject to the capacity constraint that at most $B$ requests are actively processed on each worker at every step. Let $g(i)$ denote the worker to which request $i$ is assigned under policy $\pi$, and $x_i$ its assignment time.

For each worker $g$ and step $k$, define the active set
\[
\mathcal A_g(k) = \bigl\{ i \in \mathcal I : g(i)=g, x_i \le k < x_i + o_i \,\bigr\},
\]
i.e., the set of requests that are currently being processed on worker $g$ at time $k$. The instantaneous workload of worker $g$ at time $k$ is then
\begin{equation} \label{eq:workload}
L_g(k) = \sum_{i \in \mathcal A_g(k)} w_i^{(k-x_i+1)}.
\end{equation}
We denote by
\[
L_{g^*}(k) \;=\; \max_{g \in [G]} L_g(k)
\]
the maximum workload across all workers at time $k$.

\paragraph{Imbalance and objective.}
In many parallel systems, including data-parallel LLM decoding, the processing time of a step on a worker is linear in its workload, and all workers must wait for the slowest worker at each step (a barrier). Thus, the \emph{idle} work at step $k$---the total amount of work that other workers could have done if they were as heavily loaded as the busiest worker---is proportional to
\begin{equation} \label{eq:imbalancek}
\text{Imbalance}(k;\pi)
\;=\; \sum_{g=1}^{G} \bigl( L_{g^*}(k) - L_g(k) \bigr)
\;=\; G \cdot L_{g^*}(k) - \sum_{g=1}^{G} L_g(k),
\end{equation}
where we make the dependence on the policy $\pi$ explicit. Over a finite horizon $\{1,\ldots,K\}$, we define the \emph{average imbalance} of a policy $\pi$ as
\[
\text{AvgImbalance}(\pi)
\;=\; \frac{1}{K} \sum_{k=1}^{K} \text{Imbalance}(k;\pi).
\]
Our goal is to design an assignment policy $\pi$ that makes $\text{AvgImbalance}(\pi)$ as small as possible. Intuitively, this corresponds to maximizing the amount of “useful” work performed each step and minimizing the total idle time induced by waiting for the slowest worker at the barrier.

\section{Method: A Universal Load Balance Principle} \label{sec:load}
In the fully agnostic model of Section~\ref{sec:model}, where the scheduler has no information about $W_i$ beyond arrivals and completions, no policy can differentiate between two requests with the same arrival time if their future workloads are indistinguishable. In practice, however, many systems expose \emph{partial} information about the near future. Our goal in this section is to formalize a minimal and broadly feasible form of such information, and then to develop a load-balancing principle that uses it in a disciplined way.

\paragraph{Short lookahead workload information.}
We introduce a short lookahead window of length $H \in \mathbb{N}$ and assume that, at each step $k$, the scheduler may observe for every active request $i$ an estimate of its contributions to future workloads over the next $H$ steps \emph{after} the current one. Formally, let
\[
\widehat{W}_i^H(k)
\;=\;
\bigl(
\widehat{w}_{i}^{(1)}(k),\widehat{w}_{i}^{(2)}(k),\ldots,\widehat{w}_{i}^{(H)}(k)
\bigr)
\]
denote any prediction of the workloads that request $i$ will contribute in its next $H$ processing steps, i.e., at times $k+1,\ldots,k+H$, conditional on the information available at time $k$. When $i$ finishes before $H$ further steps, we interpret the remaining entries as zeros. The special case $H=0$ corresponds to having \emph{no} lookahead beyond the current step: the scheduler only knows the current workloads $\{L_g(k)\}$ and has no predictive signal about $k+1,k+2,\ldots$

This interface captures situations where the total workload of a job is hard to predict at arrival, but its \emph{short-term} behavior is easier to infer once the job has progressed. In LLM serving, for example, the total number of decode steps for a request is difficult to predict accurately from the prompt alone, yet once a response is near completion (e.g., exhibits “in conclusion”, closing punctuation, or termination tokens), it becomes much easier to estimate whether the request will complete within the next $H$ steps. In the LLM serving setting, the per-step workload is driven by the KV cache, which grows by one token per decode step until termination, so $\widehat{W}_i^H(k)$ effectively reduces to a prediction of \emph{whether} and \emph{when} a request will finish within the lookahead window. As another example, consider a multi-stage document-processing pipeline where early stages have highly variable runtimes, but once a document reaches the final stage, sensors or partial results reveal that only a short, nearly deterministic tail of work remains; here, $\widehat{W}_i^H(k)$ summarizes this remaining tail, even if the total processing time from arrival was unpredictable.

\paragraph{The universal Balance Future with Integer Optimization principle (\BF).}
Given access to short lookahead workload information, we now state our universal load-balancing principle. At a high level, the scheduler should choose assignments that make the \emph{future} workloads across workers as balanced as possible, rather than greedily equalizing only the current step. To make this precise, fix a window length $H \ge 0$ and consider step $k$. Let $R_{\text{wait}}(k)$ be the set of waiting requests, and let $\texttt{cap}[g](k)$ be the number of free slots on worker $g$ at that time. An \emph{allocation} at step $k$ consists of disjoint subsets $\{S_g(k)\}_{g\in[G]}$ of $R_{\text{wait}}(k)$, with
\[
S_g(k) \subseteq R_{\text{wait}}(k),\quad
S_g(k) \cap S_{g'}(k) = \emptyset \ \text{for } g\neq g',\quad
|S_g(k)| \le \texttt{cap}[g](k) \ \text{for all } g.
\]
We write
\[
S(k) \;=\; \bigcup_{g\in[G]} S_g(k)
\]
for the set of all requests admitted at step $k$. For any such allocation $S(k)$, the induced per-worker sets $\{S_g(k)\}$ determine an updated active set $\{\mathcal A_g(k)\}$ and hence a predicted load trajectory over the next $H$ steps, using the profile estimates $\{\widehat{W}_i^H(k)\}$ for all active and newly admitted jobs.

Let $L_g(k)$ denote the current workload on worker $g$ at time $k$ and $L_g(k+h)$ the \emph{predicted} workload at time $k+h$ under allocation $S(k)$ for $h\ge 1$, and let $\text{Imbalance}(k+h)$ be the corresponding imbalance as in Equation~\eqref{eq:imbalancek}. The Balance--Future principle is then to choose, at each step $k$, an allocation $S(k)$ that minimizes the \emph{accumulated predicted imbalance} over the window:
\[
J\bigl(S(k)\bigr)
\;=\;
\sum_{h=0}^{H} \text{Imbalance}(k+h),
\]
where $h=0$ uses the current workloads and $h\ge 1$ uses the predicted workloads based on $\{\widehat{W}_i^H(k)\}$.

This minimization can be achieved via an integer optimization. Introduce binary decision variables $x_{ig}\in\{0,1\}$ indicating whether request $i\in R_{\text{wait}}(k)$ is assigned to worker $g$ at step $k$, and define
\[
S_g(k;x)
\;:=\;
\bigl\{\, i\in R_{\!\text{wait}}(k) : x_{ig}=1 \,\bigr\},\qquad
S(k;x)
\;:=\;
\bigcup_{g\in[G]} S_g(k;x),
\]
so that $S(k;x)$ is the admitted set induced by the assignment matrix $x=\{x_{ig}\}$. We write $J_k(x):=J\bigl(S(k;x)\bigr)$ for the accumulated predicted imbalance cost at step $k$. Let
\[
U(k) \;:=\; \min\Bigl\{\lvert R_{\!\text{wait}}(k)\rvert,\ \sum_{g=1}^G \texttt{cap}[g](k)\Bigr\}
\]
denote the number of slots that can be filled at step $k$ (either all waiting requests, or all free capacity, whichever is smaller). Then the optimal minimization choice at step $k$ is the solution of the following integer optimization: 
\begin{equation}
\tag{IO}\label{eq:bf-ip}
\begin{aligned}
\min_{\{x_{ig}\}} \quad &
    J_k(x)
    \;=\;
    J\bigl(S(k;x)\bigr) \\
\text{s.t.}\quad
& \sum_{g=1}^G x_{ig} \;\le\; 1,
&& \forall\, i \in R_{\!\text{wait}}(k) 
\quad \text{(each request assigned to at most one worker)}, \\
& \sum_{i\in R_{\!\text{wait}}(k)} x_{ig}
    \;\le\; \texttt{cap}[g](k),
&& \forall\, g \in [G] 
\quad \text{(per-worker capacity constraints)}, \\
& \sum_{g=1}^G \sum_{i\in R_{\!\text{wait}}(k)} x_{ig}
    \;=\; U(k),
&& \text{(full utilization of available slots or waiting requests)}, \\
& x_{ig} \in \{0,1\},
&& \forall\, i \in R_{\!\text{wait}}(k),\ \forall\, g \in [G].
\end{aligned}
\end{equation}
An optimal solution $x^\star$ to~\eqref{eq:bf-ip} yields the selected allocation
$\{S_g^\star(k)\}_{g\in[G]}$ with
\[
S_g^\star(k)
\;:=\;
S_g\bigl(k;x^\star\bigr)
\;=\;
\bigl\{\, i\in R_{\!\text{wait}}(k) : x_{ig}^\star = 1 \,\bigr\}.
\]
In words, at each step \BF solves an integer optimization that simultaneously assigns requests to workers and minimizes a finite-horizon prediction of future imbalance.

We capture this procedure in the following step-wise pseudocode. Conceptually, the “inner loop” that enumerates feasible allocations and chooses the one minimizing $J$ corresponds exactly to solving the integer optimization~\eqref{eq:bf-ip}.

\begin{algorithm2e}[t]
\caption{Balance--Future Algorithm (\BF)}\label{alg:theory-balance-future}
\KwIn{Waiting set $R_{\!\text{wait}}(k)$, active sets $\{\mathcal{A}_g(k)\}$, free slots $\{\texttt{cap}[g](k)\}$, lookahead window length $H$}
\KwOut{Assignments $\{S_g(k)\}_{g\in[G]}$ at step $k$}
\BlankLine
Compute the number of admissible assignments:
\[
U(k) \leftarrow \min\Bigl\{\lvert R_{\!\text{wait}}(k)\rvert,\ \sum_{g} \texttt{cap}[g](k)\Bigr\}\,;
\]
\BlankLine
Enumerate all feasible allocations $\{S_g(k)\}$ with $S_g(k)\subseteq R_{\!\text{wait}}(k)$, $|S_g(k)|\le \texttt{cap}[g](k)$ for all $g$, $S_g(k)\cap S_{g'}(k)=\emptyset$ for $g\neq g'$, and $\bigl|\bigcup_{g} S_g(k)\bigr| = U(k)$\;
\For{each feasible allocation $\{S_g(k)\}$}{
   Form $S(k) \leftarrow \bigcup_{g\in[G]} S_g(k)$\;
   Simulate the predicted load trajectory $\{L_g(k+h)\}_{h=0}^{H}$ for this allocation using $\{\widehat{W}_i^H(k)\}$\;
   Compute the imbalance cost 
   \[
      J\bigl(S(k)\bigr)
      \;=\; \sum_{h=0}^{H} \text{Imbalance}(k+h)\,;
   \]
}
Select an allocation $\{S_g^{\star}(k)\}$ that minimizes $J\bigl(S(k)\bigr)$ (equivalently, solves the integer optimization~\eqref{eq:bf-ip})\;
Assign requests according to $\{S_g^{\star}(k)\}$\;
\end{algorithm2e}

Although \BF (Algorithm~\ref{alg:theory-balance-future}) is not globally optimal over the entire time horizon $\{0,\ldots,K\}$, it explicitly balances across the \emph{near future} rather than acting myopically. When $H=0$, the window collapses to the current step and $J\bigl(S(k)\bigr)$ reduces to $\text{Imbalance}(k)$; the algorithm selects the assignment that best equalizes the \emph{next} workloads $\{L_g(k)\}$, using no lookahead beyond the current step. Larger $H$ allows the scheduler to anticipate near-term completions and avoid assignments that look good now but create pronounced imbalance a few steps later.

\paragraph{Intuition and choice of window length $H$.}
The Balance--Future principle can be understood as a compromise between two extremes. When $H$ is very small (e.g., $H=0$), the scheduler only equalizes the current-step workloads; this is cheap and prediction-free but can be short-sighted, because it ignores imminent completions that would quickly relieve heavy workers or jobs that are about to enter the system. At the other extreme, taking $H$ very large is also undesirable: the algorithm optimizes the effect of current assignments over a long window \emph{without} knowing future arrivals or the future assignment decisions that will be made within that window, so its predictions become increasingly speculative and dominated by unmodeled randomness. In practice, a moderate window $H$---large enough to see a few steps of meaningful tail behavior, but small enough that predictions remain informative and computational costs stay low—strikes the right balance (See Figure \ref{fig:H}). Our theory in Section~\ref{sec:proof} shows that even the purely myopic case $H=0$ already yields strong worst-case gains over FCFS-type policies, and our experiments in Section~\ref{sec:num} confirm that modest $H>0$ can further improve performance when realistic predictive signals are available.

\begin{figure}[h]
\centering
\includegraphics[width=0.7\textwidth]{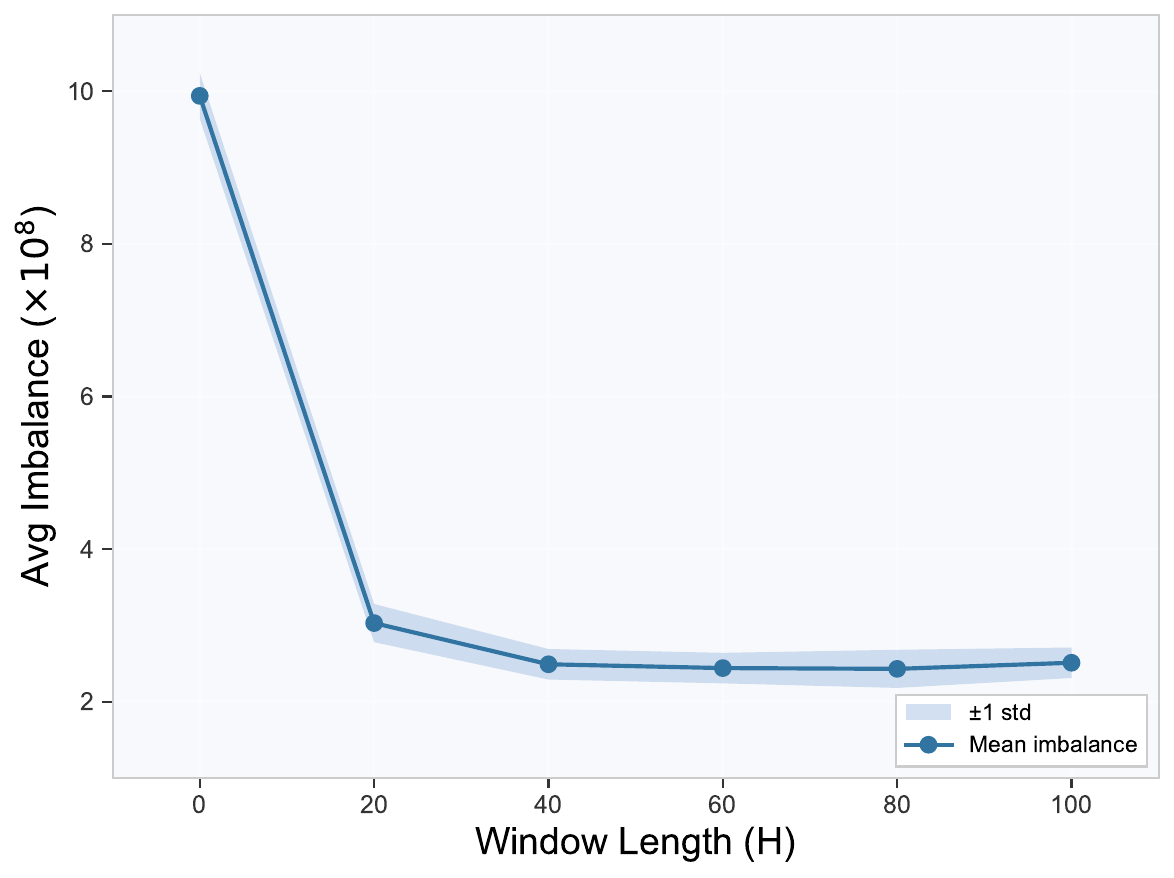}
\caption{Intuition for selecting the window length $H$, evaluated across values from $0$ to $100$. This analysis uses an implementation of the Algorithm \BF in the LLM serving context. Further implementation and experimental details are provided in Sections \ref{sec:num}.}\label{fig:H}
\end{figure}

\section{Theoretical Guarantee of the Load Balance Principle in LLM Inference} \label{sec:proof}

In Section~\ref{sec:load}, we introduced the Balance--Future (\BF) as a universal principle for load balancing under sticky assignments and barrier synchronization. In this section, we provide theoretical guarantee for its effectiveness by analyzing its performance in the LLM decode serving setting and then discussing extensions to more general workload patterns. Our primary focus is the data-parallel (DP) decoding bottleneck in LLM serving: we show that, under a realistic overloaded regime and for any adversarial arrival instance, \BF achieves significantly smaller imbalance than the baseline First-Come-First-Serve (\FCFS) policy used in many real systems. A detailed description of the baseline \FCFS policy can be found in Appendix \ref{append:fcfs}. We then outline how the analysis framework extends beyond LLM serving to a broader family of load balancing problems.

\paragraph{LLM serving-specific workload model.}
We specialize the general model of Section~\ref{sec:model} to the DP decode stage of LLM serving. Recall that $G$ denotes the number of computational workers (e.g., GPUs) and $B$ the maximum batch size (number of requests that can be processed simultaneously on a single worker). We assume that $\sqrt{G} = \tilde{o}(B)$, which is an extremely mild assumption in practice: typical batch sizes are $B \in [64,128]$, whereas the number of workers $G$ is at most in the hundreds in real deployments.

As in Section~\ref{sec:model}, each request $i$ is represented by a \emph{workload profile}
\[
W_i \;=\; \bigl(w_i^{(1)},w_i^{(2)},\ldots,w_i^{(o_i)}\bigr),
\]
where $o_i \in \mathbb{N}$ is the total number of processing steps for request $i$, and $w_i^{(j)} \ge 0$ denotes the workload contributed by $i$ in its $j$-th processing step. In the LLM setting, the initial workload $w_i^{(1)}$ corresponds to the prefill size, which we denote by $s_i$. We assume that the prefill lengths $\{s_i\}$ are i.i.d.\ draws from an unknown bounded distribution $\mathcal D_{\mathrm{prefill}}$ supported on $\{1,\ldots,s_{\max}\}$, and let $\sigma_s$ be the standard deviation of $\mathcal D_{\mathrm{prefill}}$. We impose a non-degeneracy condition: there exists $\kappa_0 \in (0,\tfrac{1}{2}]$ such that
\[
\kappa_0 \;\le\; \frac{\sigma_s}{s_{\max}} \;\le\; \frac{1}{2}.
\]
The upper bound follows from Popoviciu’s inequality, which implies $\sigma_s/s_{\max} \le 1/2$ for any distribution supported on an interval of length $s_{\max}$. The lower bound $\kappa_0>0$ simply rules out pathological cases where almost all prompts have essentially identical prefill length.

After prefill, LLM decoding proceeds token by token. Once request $i$ starts decoding, it occupies exactly $o_i-1$ consecutive decode steps and its $k$-th decode step ($k \in [o_i-1]$) requires workload $s_i + k$, reflecting the growth of its KV cache. We assume that the decode lengths $\{o_i\}$ are i.i.d.\ draws from an unknown distribution $\mathcal D_{\mathrm{decode}}$ on $\{1,2,\ldots\}$. Thus the full workload profile of request $i$ takes the form
\[
W_i \;=\; \bigl(s_i,s_i+1,\ldots,s_i+o_i-1\bigr).
\]

\paragraph{Overloaded arrival instances.}
We now formalize the overloaded regime, which is the relevant regime for LLM serving: if the system were lightly loaded, all requests could be processed immediately without sophisticated scheduling. Intuitively, an arrival instance is overloaded if, at every step, there are “many more” pending requests than free slots. Moreover, we assume that no single prefill-length class dominates the pool. The latter condition reflects the empirical fact that real workloads contain a variety of prompt lengths rather than being concentrated on a single fixed prefill size.

Formally, at each step $k$, let $\mathcal I_k$ denote the set of requests that are pending in the waiting pool, and let
\[
\mathcal R_k \;=\; \{(s_i,o_i) : i \in \mathcal I_k\}
\]
denote the multiset of their prefill and decode lengths. For $\ell \in \{1,\dots,s_{\max}\}$, define the length class
\[
\mathcal R_k(\ell)
\;:=\;
\{\, i \in \mathcal I_k : s_i = \ell\,\}.
\]
Let $C_k = \sum_{g=1}^G c_g(k)$ denote the total number of free slots at step $k$, where $c_g(k)$ is the number of available batch slots on worker $g$. 

\begin{definition}[Overloaded arrival instance family $\mathcal{I}$]
\label{def:overloaded}
An arrival instance $I$ is called \emph{overloaded} if, for every step $k$,
\[
\min_{\ell\in\{1,\dots,s_{\max}\}} \Bigl(\,\bigl|\mathcal R_k\setminus \mathcal R_k(\ell)\bigr|\,\Bigr)\ \ge\ C_k.
\]
Let $\mathcal{I}$ denote the family of all overloaded arrival instances.
\end{definition}

In words, even after removing all requests belonging to whichever single prefill-length class is most numerous at step $k$, the remaining pool still contains at least $C_k$ requests and can therefore fill all free slots freed in step $k$. This condition is typically satisfied in heavily loaded real systems: prompt lengths exhibit substantial diversity, and the number of pending requests is far larger than the instantaneous capacity.

\paragraph{Specialization of \BF to $H=0$.}
As discussed in Section~\ref{sec:load}, \BF can, in principle, use a short lookahead window of length $H\ge 0$. From an analytical perspective, however, rigorous control of the dynamics for $H\ge 1$ is extremely involved, because current assignments interact with future arrivals and future decisions within the window. In this section we therefore focus on the case $H=0$, where \BF reduces to selecting, at each step, the assignment that minimizes the \emph{current-step} imbalance. As we will show, even this purely myopic variant achieves substantial improvement over \FCFS in the LLM decode model. This is fantastic since we demonstrate that modest $H>0$ yields further gains when realistic predictive signals are available in Section~\ref{sec:load}.

\paragraph{Imbalance improvement ratio (\textbf{IIR}).}
We compare \BF and \FCFS via an \emph{imbalance improvement ratio}, which measures, in the worst  arrival instance, the long-run factor by which \BF reduces average imbalance relative to \FCFS. For a given arrival instance $I \in \mathcal{I}$ and policy $\pi\in\{\BF,\FCFS\}$, let $\text{AvgImbalance}(\pi;I)$ denote the time-average imbalance defined in Section~\ref{sec:model}. We define
\[
\mathbf{IIR} 
\;=\; 
\inf_{I \in \mathcal{I}} \left(\lim_{K \to \infty} \frac{\mathbb{E}\bigl[\text{AvgImbalance}(\FCFS;I)\bigr]}{\mathbb{E}\bigl[\text{AvgImbalance}(\BF;I)\bigr]}\right),
\]
where the expectation is taken over the randomness in the prefill and decode lengths $(s_i,o_i)$ drawn from $\mathcal D_{\mathrm{prefill}}$ and $\mathcal D_{\mathrm{decode}}$. The arrival sequence $I$ itself is chosen adversarially from $\mathcal{I}$; in particular, the adversary can choose any number of arrivals at each step, subject only to the overloaded condition. Larger values of $\mathbf{IIR}$ correspond to larger worst-case improvements of \BF over \FCFS.

Directly lower bounding $\mathbf{IIR}$ in full generality is challenging. To build intuition, we first study a warm-up setting in which all decode lengths are homogeneous: $o_i = o$ for every request $i$. In this case, the dynamics simplify substantially: once $GB$ jobs are admitted, they run in perfect lockstep for $o$ decode steps and then all complete simultaneously. Under the overloaded condition (so that every worker runs at full batch capacity), the per-step imbalance within a round of $o$ steps is constant; optimizing long-run average imbalance reduces to optimizing the imbalance right after each batch admission.

\begin{theorem}[Warm-up: Homogeneous decode lengths]
\label{thm:homog-o}
Consider the homogeneous-decode model $o_i=o$ for all requests $i$. There exists a universal constant $c>0$ such that 
\begin{equation}\label{eq:IR-homog-lb}
\mathbf{IIR}
\;\ge\;
c\,\kappa_0\,\sqrt{B\log G}\,\cdot\frac{G}{G-1}
\;=\;\Omega\!\big(\sqrt{B\log G}\big).
\end{equation}
\end{theorem}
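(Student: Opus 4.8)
The plan is to exploit the lockstep structure of the homogeneous model to collapse the entire dynamics to a single static quantity---the spread of \emph{prefill loads} across workers right after each batch admission---and then to compare \FCFS and \BF on that quantity alone. Concretely, once $GB$ jobs are admitted in lockstep at the start of a round, every worker carries a full batch of $B$ requests for $o$ steps; writing $P_g := \sum_{i \in \text{batch}_g} s_i$ for the total prefill load on worker $g$, the load at relative decode step $j$ is $L_g = P_g + Bj$. Since the $Bj$ term is common to all workers it cancels in the imbalance, so
\[
\text{Imbalance}(\text{round},j) \;=\; G\max_g P_g - \sum_{g=1}^G P_g \;=\; G\bigl(\max_g P_g - \bar P\bigr)
\]
is \emph{constant across the $o$ steps of the round}, where $\bar P = \tfrac1G\sum_g P_g$. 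Because rounds are i.i.d.\ (each admits a fresh set of requests with i.i.d.\ prefills) and the overloaded condition keeps every worker at full batch, $\lim_{K\to\infty}\mathbb E[\text{AvgImbalance}(\pi;I)]$ equals the expected per-round value of the display above for $\pi\in\{\FCFS,\BF\}$. The theorem therefore reduces to lower bounding $\mathbb E[\max_g P_g - \bar P]$ under \FCFS and upper bounding it under \BF, \emph{for every} overloaded instance $I$.

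Second, I would upper bound the \BF cost. Since $H=0$ and the round imbalance equals the current-step imbalance, \BF literally selects the request subset and assignment minimizing $G(\max_g P_g - \bar P)$; hence its value is at most that of any feasible balanced assignment. Using the overloaded condition to guarantee that $GB$ requests (each of prefill size $\le s_{\max}$) are always available, I would exhibit a balanced assignment---via layer-by-layer greedy/list scheduling that respects the exact cardinality $B$ per worker---achieving the classical makespan guarantee $\max_g P_g - \bar P \le \tfrac{G-1}{G}s_{\max}$. This yields the deterministic, instance-uniform bound $\mathbb E[\text{AvgImbalance}(\BF;I)] \le (G-1)\,s_{\max}$, which already supplies the $\tfrac{G}{G-1}$ factor in the theorem.

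Third, and this is the crux, I would lower bound the \FCFS cost. Because \FCFS assigns requests in arrival order while ignoring prefill sizes, the $B$ requests on any worker form a batch of i.i.d.\ draws from $\mathcal D_{\mathrm{prefill}}$, and distinct workers receive disjoint (hence independent) batches; thus $P_1,\dots,P_G$ are i.i.d.\ with common mean $\mu_P$ and variance $B\sigma_s^2$. The goal is $\mathbb E[\max_g P_g - \mu_P] \ge c_2\,\sigma_s\sqrt{B\log G}$, i.e.\ that the maximum of $G$ i.i.d.\ sums exceeds their mean by the Gaussian-extreme-value amount $\sigma_s\sqrt{B}\cdot\sqrt{2\log G}$. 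I would obtain this from the identity $\mathbb E[\max_g P_g - \mu_P] \ge \int_0^\infty \bigl(1-(1-p(u))^G\bigr)\,du$ with $p(u) = \mathbb P(P_g - \mu_P \ge u)$, evaluated at a threshold $u^\star \asymp \sigma_s\sqrt{B\log G}$: it suffices to show $p(u^\star) \gtrsim 1/G$ so that $1-(1-p(u^\star))^G$ is bounded below. This is precisely a moderate-deviation statement for $P_g$ at $\Theta(\sqrt{\log G})$ standard deviations, and is where the model assumption $\sqrt G = \tilde o(B)$ enters: it forces $\log G = O(\log B) = o(B^{1/3})$, placing the required quantile inside the Cram\'er moderate-deviation regime where $p(u^\star)/\bar\Phi(\sqrt{\log G})$ stays bounded (with $\bar\Phi$ the standard-normal tail) and the Berry--Esseen error $O(1/\sqrt B)$ is negligible against $\bar\Phi$. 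The non-degeneracy $\sigma_s \ge \kappa_0 s_{\max} > 0$ guarantees the fluctuations are genuinely of order $\sigma_s\sqrt B$ rather than vanishing.

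Finally, combining the two bounds over an arbitrary $I\in\mathcal I$ gives
\[
\frac{\mathbb E[\text{AvgImbalance}(\FCFS;I)]}{\mathbb E[\text{AvgImbalance}(\BF;I)]}
\;\ge\;
\frac{c_2\,G\,\sigma_s\sqrt{B\log G}}{(G-1)\,s_{\max}}
\;\ge\;
c_2\,\kappa_0\,\sqrt{B\log G}\cdot\frac{G}{G-1},
\]
using $\sigma_s/s_{\max}\ge\kappa_0$; taking the infimum over $I\in\mathcal I$ yields the claimed $\mathbf{IIR}$ bound with $c=c_2$. The main obstacle is the third step: turning the heuristic ``maximum of $G$ Gaussians'' intuition into a rigorous lower bound requires controlling the non-Gaussian sum $P_g$ at a growing quantile, which is a two-sided moderate-deviation/anti-concentration estimate rather than a routine CLT application; everything else---the round reduction and the greedy upper bound---is comparatively mechanical.
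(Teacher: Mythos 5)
Your proposal is correct and shares the architecture of the paper's proof: reduce to a single admission round (within a round the imbalance is constant and rounds are i.i.d.), bound \BF's round imbalance by $(G-1)s_{\max}$, lower bound \FCFS's by $c\,G\,\sigma_s\sqrt{B\log G}$, and combine via $\sigma_s\ge\kappa_0 s_{\max}$. Where you differ is in how the two key lemmas are proved, and both of your routes work. For the \BF bound, the paper analyzes the optimal assignment directly through an exchange argument (if $\max_g P_g-\min_g P_g>s_{\max}$, swapping a size-inverted pair between the heaviest and lightest workers improves a lexicographic objective, a contradiction), whereas you exhibit a feasible cardinality-respecting assignment---layered greedy, pairing the larger jobs of each layer with the currently lighter workers---and invoke optimality of \BF's integer program; this is more constructive and equally valid, since the layer-by-layer induction gives a gap at most $\max\{\text{previous gap},\,s_{\max}\}$ while preserving exactly $B$ jobs per worker. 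For the \FCFS bound, the paper deliberately evaluates at the low threshold $z_G=\sqrt{(\log G)/2}$, where the Gaussian tail is of order $G^{-1/4}/\sqrt{\log G}$, precisely so that the Berry--Esseen error $O(1/\sqrt{B})$ is dominated under $\sqrt{G}=\tilde{o}(B)$ and no moderate-deviation theory is needed; your route places the threshold at $\Theta(\sqrt{\log G})$ standard deviations and invokes Cram\'er moderate deviations, which is legitimate (bounded prefills have finite exponential moments, and $\log G=O(\log B)=o(B^{1/3})$ under the standing assumption) but heavier than necessary---and you must keep the threshold constant strictly below $\sqrt{2}$, since at the full extreme-value threshold $\sigma_s\sqrt{2B\log G}$ the exceedance probability of the maximum degrades to $\Theta(1/\sqrt{\log G})$ and the $\sqrt{\log G}$ gain is lost. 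One small patch your writeup needs (the paper's threshold step elides the same point): $\int_0^\infty\bigl(1-(1-p(u))^G\bigr)\,du$ equals $\mathbb{E}\bigl[(\max_g P_g-\mu_P)_+\bigr]$, not a lower bound on $\mathbb{E}[\max_g P_g-\mu_P]$; you should subtract the negative part, which is harmless because $\mathbb{E}\bigl[(\max_g P_g-\mu_P)_-\bigr]\le\mathbb{E}\bigl[(\mu_P-P_1)_+\bigr]\,\mathbb{P}(P_1<\mu_P)^{G-1}$ decays geometrically in $G$.
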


\noindent\textbf{Proof intuition:}
In the homogeneous-output setting, each admission round is structurally simple: once $GB$ jobs are admitted, they run in perfect lockstep for $o$ token steps and then all finish simultaneously. As a result, all steps in the round share the same inter-device imbalance, and minimizing the long-run time-average imbalance is equivalent to minimizing the imbalance \emph{immediately after admission} in each round.

On the \BF side, this reduction means that \BF is effectively solving a static bin-packing problem in every round: partition the $GB$ prompt lengths into $G$ groups of size $B$ to minimize the maximum per-device load. A simple exchange argument shows that any such optimal assignment must be $s_{\max}$-balanced: the difference between the heaviest and lightest devices can never exceed $s_{\max}$. This immediately implies that the expected \BF imbalance per round is at most $(G-1)s_{\max}$, independent of the detailed shape of the prefill-length distribution.

On the \FCFS side, the story is almost the opposite. Under size-agnostic \FCFS admission and an overloaded pool, each device simply sees $B$ i.i.d.\ prompt lengths. The per-device loads $S_g=\sum_{j=1}^B s_{g,j}$ are thus i.i.d.\ with variance $\sigma_s^2 B$. For large $B$, the central limit theorem (and a Berry--Esseen bound) shows that $S_g$ behaves approximately Gaussian. Taking the maximum over $G$ such devices amplifies these fluctuations: with constant probability, at least one device deviates above the mean by order $\sigma_s\sqrt{B\log G}$. This produces an expected \FCFS imbalance of order $G\sigma_s\sqrt{B\log G}$, while \BF keeps the imbalance of order $(G-1)s_{\max}$. The non-degeneracy condition $\sigma_s/s_{\max}\ge\kappa_0>0$ then yields an imbalance-reduction factor of order $\sqrt{B\log G}$, uniformly over all overloaded arrival instances. The complete proof is deferred to Appendix~\ref{append:proofhomo}.
\hfill
$\square$

\paragraph{Inhomogeneous setting.}
\begin{figure}
    \centering
    \includegraphics[width=\linewidth]{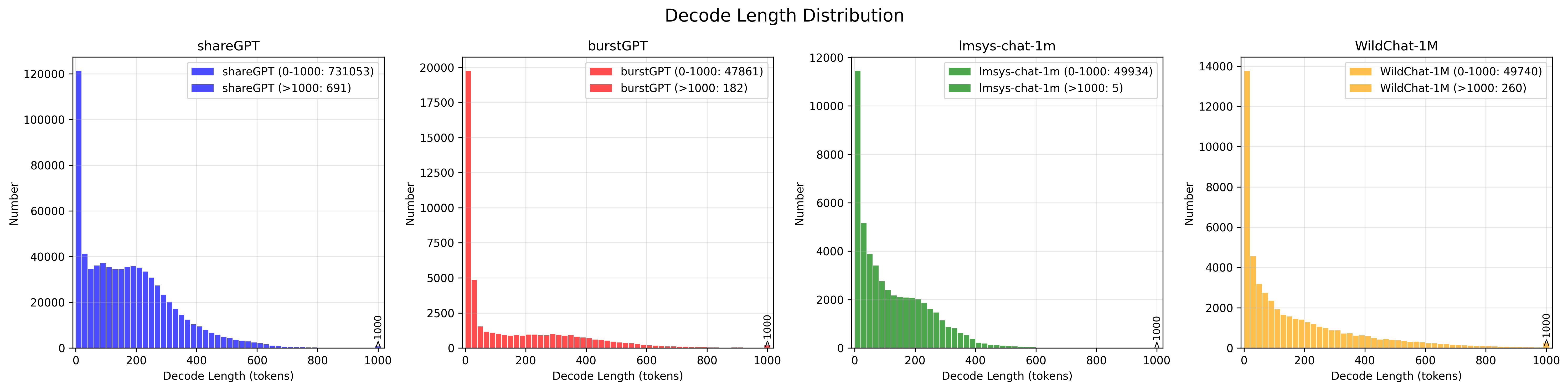}
    \caption{Empirical distributions of decode lengths from production LLM traces \cite{wang2023openchat,wang2024burstgpt,zheng2023lmsys,zhao2024wildchat}. Decode lengths exhibit the geometric (discrete-exponential) pattern.}
    \label{fig:Length distribution}
\end{figure}
The homogeneous-decode assumption $o_i = o$ in Theorem~\ref{thm:homog-o} serves as a warm-up: it is analytically convenient but clearly unrealistic for LLM inference, whose responses exhibit substantial variability in length. Empirical studies in Fig~\ref{fig:Length distribution} on both open-source (\cite{wang2023openchat,wang2024burstgpt,zheng2023lmsys,zhao2024wildchat}) and production traces show that decode lengths typically follow a heavy right-skewed pattern well approximated by a geometric (or discrete exponential) distribution: most responses terminate quickly, while a non-negligible tail runs for many tokens. Motivated by this evidence, we now consider an inhomogeneous setting in which decode lengths are random and independent across requests. Specifically, we assume
\[
o_i \;\sim\; \mathcal D_{\mathrm{decode}}\; =\; \mathrm{Geo}(p) \quad\text{i.i.d.\ for each request } i,
\]
for some $p \in (0,1)$, so that the mean and variance of $o_i$ are $1/p$ and $(1-p)/p^2$, respectively. Under this more realistic decode model, the next theorem shows that \BF still achieves a worst-case imbalance improvement factor of order $\Omega\!\big(\sqrt{B\log G}\big)$, with an explicit dependence on both the prefill and decode variability.

\begin{theorem}[Inhomogeneous decode lengths]
\label{thm:inhomog-o}
Consider the inhomogeneous-decode model in which $o_i \sim \mathrm{Geo}(p)$ i.i.d.\ for each request $i$ with some $p \in (0,1)$. Then, there exists a universal constant $c>0$ such that 
\begin{equation}\label{eq:IR-inhomog-lb}
\mathbf{IIR}
\;\ge\;
c\,\frac{p}{s_{\max}}\,
\sqrt{\sigma_s^2+\frac{1-p}{p^2}}\;
\frac{G}{G-1}\,\sqrt{B\log G}
\;=\;\Omega\!\big(\sqrt{B\log G}\big).
\end{equation}
\end{theorem}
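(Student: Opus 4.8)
The plan is to mirror the structure of the homogeneous warm-up (Theorem~\ref{thm:homog-o}) --- lower bound \FCFS, upper bound \BF, and take the ratio --- while replacing the clean lockstep ``round'' structure by a stationary analysis that exploits the memorylessness of the geometric decode model. The starting observation is that $\mathrm{Geo}(p)$ is memoryless on $\{1,2,\dots\}$: conditioned on being active, each request terminates at the current step independently with probability $p$. Hence, in the overloaded regime every worker always runs at full batch $B$, and at each decode step a $\mathrm{Bin}(B,p)$ number of requests complete on each worker and are immediately replaced. First I would establish that the per-worker load process enters a stationary regime and characterize the stationary \emph{current workload} of an active request, namely $s_i + a_i$ where $a_i$ is its elapsed decode age. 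Combining the independent prefill variance $\sigma_s^2$ with the stationary age variance of a geometric lifetime yields a per-request workload variance $V = \Theta\!\big(\sigma_s^2 + \tfrac{1-p}{p^2}\big)$; this is the quantity that surfaces inside the square root of the final bound, and it is also where the time-average imbalance is identified with its stationary expectation.

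With the stationary workload variance in hand, the \FCFS lower bound is analogous to the warm-up but with $V$ in place of $\sigma_s^2$. Because \FCFS admits size-agnostically, each worker's batch is (asymptotically) $B$ i.i.d.\ current-workloads, so $L_g=\sum_{i=1}^B (s_i+a_i)$ has variance $\Theta(VB)$. A central limit theorem with a Berry--Esseen correction shows $L_g$ is approximately Gaussian, and the maximum over $G$ workers exceeds the mean by $\Omega(\sqrt{VB\log G})$; summing the deviations over workers gives $\mathbb{E}[\mathrm{Imbalance}(\FCFS)] = \Omega(G\sqrt{VB\log G})$, uniformly over overloaded instances. The mild assumption $\sqrt{G} = \tilde{o}(B)$ is what makes the Gaussian/max-order estimates valid at the required scale.

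The crux --- and the step I expect to be the main obstacle --- is the \BF upper bound. Unlike the homogeneous case, \BF with $H=0$ no longer solves a single static bin-packing over all $GB$ jobs: the resident (continuing) loads $C_g$ are fixed by the random past, completions are staggered, and \BF may only redistribute the newly freed slots. The plan is to show that (i) the cross-worker fluctuation of the resident loads $C_g$ remains $O(\sqrt{B})$ in stationarity, because the workload removed per worker concentrates around its mean, while (ii) \BF's correction budget per worker is $\approx (pB)\,s_{\max}=\Theta(B)$ --- the number of freed slots times the maximum prefill it can insert --- which dominates the $O(\sqrt{B})$ fluctuation for large $B$. A feasibility/exchange argument in the spirit of the $s_{\max}$-balance lemma then confines the post-admission loads to a band of width $O(s_{\max}/p)$, whence $\mathbb{E}[\mathrm{Imbalance}(\BF)] = O(G\,s_{\max}/p)$. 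Making (i)--(ii) rigorous requires a drift (Lyapunov) argument for the per-worker load gaps together with the overloaded, no-dominant-class condition of Definition~\ref{def:overloaded}, which guarantees that the waiting pool always contains enough prefill-length diversity to realize the equalizing assignment.

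Finally, dividing the \FCFS lower bound by the \BF upper bound gives $\mathbf{IIR} \ge \Omega(G\sqrt{VB\log G}) \,/\, O(G\,s_{\max}/p) = \Omega\!\big(\tfrac{p}{s_{\max}}\sqrt{V}\,\sqrt{B\log G}\big)$, and substituting $V=\sigma_s^2+\tfrac{1-p}{p^2}$ while tracking the $\tfrac{G}{G-1}$ factor from the finite number of workers reproduces~\eqref{eq:IR-inhomog-lb}. Here the explicit decode-variance term $\tfrac{1-p}{p^2}$ takes over the role played by the non-degeneracy constant $\kappa_0$ in the warm-up, so positivity of the bound is automatic.
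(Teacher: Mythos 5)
Your proposal matches the paper's proof essentially step for step: the \FCFS lower bound via the stationary slot law (prompt plus independent geometric age, snapshot variance $\sigma_s^2+\tfrac{1-p}{p^2}$, Berry--Esseen, and an extreme-value bound over the $G$ workers, valid under $\sqrt{G}=\tilde o(B)$), the \BF upper bound via an exchange/separation argument combined with Bernstein concentration of the completion-induced load removal and a contraction (drift) recursion $D(k)\lesssim s_{\max}+(1-p)D(k-1)$ whose long-run average is $O(s_{\max}/p)$, and then the ratio. Your ``correction budget $\Theta(Bp\,s_{\max})$ dominates the $\tilde O(\sqrt{B})$ resident-load fluctuation'' condition is exactly the paper's requirement $\gamma Bp\ge 2U_B$, so the two arguments coincide in both structure and mechanism.
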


\paragraph{Proof intuition}
Under geometric output lengths with unit per-step growth, the stochastic dynamics of loads under \FCFS and \BF differ sharply.  On the \FCFS side, each slot behaves as a simple renewal process: its size increases deterministically by one each step until a geometric completion occurs, at which point it resets to a fresh prompt length $S$.  In stationarity, a snapshot of a slot therefore looks like
\[
\text{(prompt) }S \;+\; \text{(age) }A,
\]
where $S$ has variance $\sigma_s^2$ and $A$ is geometric with mean $(1-p)/p$ and variance $(1-p)/p^2$, independent of $S$.  This yields a per-slot ``snapshot'' variance
\[
\sigma_{\mathrm{snap}}^2 = \sigma_s^2 + \frac{1-p}{p^2}.
\]
Aggregating $B$ i.i.d.\ slots on each device gives per-device loads with variance $\sigma_{\mathrm{snap}}^2 B$, and comparing $G$ such independent devices, standard Gaussian extreme-value theory shows that \emph{at least one} device sits of order $\sigma_{\mathrm{snap}}\sqrt{B\log G}$ above the mean with constant probability.  This creates a typical \FCFS imbalance of order $G\,\sigma_{\mathrm{snap}}\sqrt{B\log G}$.

On the \BF side, the analysis from Part~1 shows that, despite the asynchronous completions, the myopic balancing rule enforces a contraction of the inter-device gap: in each step, pre-admission loads shrink by a factor roughly $(1-p)$, and the subsequent assignment step of \BF prevents the gap from growing by more than $s_{\max}$, up to small error events.  Over time, this yields a steady-state average gap of order $s_{\max}/p$, independent of $(B,G)$ up to lower-order terms.  Since instantaneous imbalance is at most $(G-1)$ times the gap, the long-run average imbalance under \BF is at most $(G-1)s_{\max}/p$, while \FCFS incurs imbalance of order $G\sigma_{\mathrm{snap}}\sqrt{B\log G}$.  The resulting imbalance--reduction ratio is therefore of order
\[
\frac{G\,\sigma_{\mathrm{snap}}\sqrt{B\log G}}{(G-1)\,s_{\max}/p}
\;\asymp\;
\frac{p\,\sigma_{\mathrm{snap}}}{s_{\max}}\,\sqrt{B\log G},
\]
matching the homogeneous-output case in $\sqrt{B\log G}$ scaling, but with the snapshot variance $\sigma_{\mathrm{snap}}^2$ capturing the additional variability introduced by geometric completion times. The complete proof can be found in Appendix \ref{append:proofinhomog-o}.

\subsection{Theoretical Guarantee of \BF in General Cases}
\label{subsec:general-theory}

The LLM-specific analysis above exploits a particular structure of the workload profiles,
\[
W_i \;=\; \bigl(s_i,s_i+1,\ldots,s_i+o_i-1\bigr),
\]
where each alive request gains one unit of KV load per decode step. This captures current DP decoding architectures, but many other load-balancing problems do \emph{not} obey this linear drift. A classical example is the constant-workload setting, where each job has a fixed processing requirement per step and
\[
W_i \;=\; \bigl(s_i,s_i,\ldots,s_i\bigr)
\quad\text{($o_i$ copies of $s_i$),}
\]
which arises naturally in traditional CPU or manufacturing workloads. Even within LLM serving, future systems may reduce or compress KV usage dynamically so that per-step cost does not grow exactly by $+1$; instead, every alive request may experience a common but time-varying increment that reflects changes in the underlying architecture or memory-management policy.

To encompass these scenarios, we consider a more general class of \emph{non-decreasing workload drift} models. We retain the prefill and decode assumptions from the inhomogeneous LLM model—namely, $s_i$ drawn from a bounded prefill distribution with variance $\sigma_s^2$ and decode lengths $o_i \sim \mathrm{Geo}(p)$ i.i.d.\ for some $p \in (0,1)$—but relax the per-step growth pattern as follows.

\begin{definition}[Non-decreasing Workload Drift]
\label{def:drift}
There exists a deterministic sequence $(\delta_k)_{k\ge1}$ with
\[
0 \;\le\; \delta_k \;\le\; \delta_{\max}
\qquad\text{for all }k\ge1,
\]
where $\delta_{\max}$ is a non-negative constant independent of $(B,G)$, such that for every request $i$,
\[
w_i^{(1)} = s_i,
\qquad
w_i^{(j)} = s_i + \sum_{t=1}^{j-1} \delta_t,
\quad j=2,\ldots,o_i.
\]
\end{definition}

In words, all alive requests share the same nonnegative increment $\delta_k$ at global step $k$; workloads are non-decreasing over time and their per-step growth is uniformly bounded. The LLM decode model with linear KV growth corresponds to the special case $\delta_k \equiv 1$ for all $k$, while the constant-workload model corresponds to $\delta_k \equiv 0$. More sophisticated architectures, in which per-step KV work is throttled or sparsified over time, are captured by intermediate patterns with $0<\delta_k<1$ or time-varying $\delta_k$. The following theorem shows that \BF continues to enjoy a \emph{worst-case} imbalance improvement factor of order $\sqrt{B\log G}$ relative to \FCFS in this more general setting.

\begin{theorem}[General non-decreasing workload drift]
\label{thm:general}
Consider the inhomogeneous-decode model in which $o_i \sim \mathrm{Geo}(p)$ i.i.d.\ for each request $i$ with some $p \in (0,1)$, and assume the non-decreasing drift structure in Definition~\ref{def:drift}. Then there exists a universal constant $c>0$, depending only on $(\sigma_s^2,p,\delta_{\max},s_{\max})$, such that
\begin{equation}\label{eq:IR-general-lb}
\mathbf{IIR}
\;=\;
\inf_{I \in \mathcal{I}} \left(\lim_{K \to \infty} \frac{\mathbb{E}\bigl[\text{AvgImbalance}(\FCFS;I)\bigr]}{\mathbb{E}\bigl[\text{AvgImbalance}(\BF;I)\bigr]}\right)
\;\ge\;
c\,\frac{p\,\sigma_s}{s_{\max}}\,
\frac{G}{G-1}\,\sqrt{B\log G}
\;=\;\Omega\!\big(\sqrt{B\log G}\big).
\end{equation}
\end{theorem}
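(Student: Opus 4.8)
The plan is to mirror the three-part argument underlying Theorem~\ref{thm:inhomog-o} (given in full in Appendix~\ref{append:proofinhomog-o})---an upper bound on the long-run average inter-device gap under \BF, a matching lower bound on the stationary imbalance under \FCFS, and a final ratio---and to track how the generalized drift of Definition~\ref{def:drift} propagates through each part. The central structural observation is that the common increment sequence $(\delta_k)$ enters the per-step load update of every alive request through the \emph{same} per-age increments, so that, to leading order, it contributes a shift to the device loads that is common across workers and hence cancels in the inter-device gap $D(k)=\max_g L_g(k)-\min_g L_g(k)$. Meanwhile, the only persistent source of cross-worker variability that survives into the \FCFS stationary snapshot is the prefill variance $\sigma_s^2$, which is exactly what appears in~\eqref{eq:IR-general-lb}. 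Throughout I would retain the prefill and decode assumptions of the inhomogeneous model and the overloaded regime of Definition~\ref{def:overloaded}.

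For the \BF upper bound, I would re-establish the two ingredients of Part~1 with the drift in place. The separation/exchange argument is unchanged: at each admission step \BF distributes only the \emph{prompt} sizes of newly admitted requests, which remain bounded by $s_{\max}$, so a size-inversion swap between a heavy and a light worker still strictly reduces the gap, forcing the post-admission gap below $s_{\max}$ whenever the completion counts are balanced (which holds with high probability by a Chernoff bound). The concentration step requires only cosmetic changes: since each surviving request grows by at most $\delta_{\max}$ per step, the one-step removal summands are bounded by the largest resident workload plus $\delta_{\max}$ rather than plus $1$, inflating the Bernstein constants by a factor depending only on $\delta_{\max}$. Because the drift increments are shared, the pre-admission loads still concentrate around $(1-p)D(k-1)$ up to a lower-order spread analyzed below, so the contraction factor $(1-p)$ is preserved. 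This yields the same recursion $D(k)\le s_{\max}+(1-p)D(k-1)+\text{(small error)}$, hence a long-run average gap at most $s_{\max}/p$ up to lower-order terms, and therefore $\mathrm{AvgImbalance}(\BF;I)\le (G-1)s_{\max}/p+o_{B,G}(1)$, identically to the $\delta_k\equiv1$ case.

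For the \FCFS lower bound, I would repeat the slot-level renewal analysis. Each slot is still a time-homogeneous Markov chain on the state (current prompt, age) with a Doeblin minorization at rate $p$, hence uniformly ergodic; the only change is the stationary snapshot law, which now reads $U\overset{d}{=}S+\Delta_{A+1}$ with $A\sim\mathrm{Geo}(p)-1$ independent of $S$ and $\Delta_{A+1}=\sum_{t=1}^{A}\delta_t$ the drift accumulated over the job's age. Independence of $S$ and $A$ gives $\sigma_{\mathrm{snap}}^2=\mathrm{Var}(U)=\sigma_s^2+\mathrm{Var}(\Delta_{A+1})\ge\sigma_s^2$, and finite third moments follow from the geometric tail together with boundedness of $S$ and of $(\delta_k)$. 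The Berry--Esseen plus Gaussian extreme-value argument then produces, exactly as before, a stationary \FCFS imbalance of order $G\,\sigma_{\mathrm{snap}}\sqrt{B\log G}\ge c'\,G\,\sigma_s\sqrt{B\log G}$, uniformly over overloaded instances. Dividing the two bounds and taking the infimum over $I\in\mathcal I$ yields $\mathbf{IIR}\ge c\,\tfrac{p\sigma_s}{s_{\max}}\,\tfrac{G}{G-1}\sqrt{B\log G}$, which is~\eqref{eq:IR-general-lb}.

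The main obstacle is the \BF contraction under a genuinely age-dependent drift. When $\delta_k\equiv1$ the per-step growth is an \emph{exactly} common shift that cancels in the gap; for a general drift, requests of different ages grow by different amounts, so the pre-admission spread $A^{\mathrm{pre}}(k)$ acquires an extra contribution from the heterogeneity of age profiles across workers. The point to verify is that this extra spread is only of order $\delta_{\max}\sqrt{B\log G}$---since each worker always carries $B$ requests whose ages concentrate under the geometric completion law and a Bernstein bound controls the worst worker---and is therefore of strictly lower order than the deterministic contraction of order $Bp$. Consequently the separation step still forces $D(k)\le s_{\max}$ whenever $A^{\mathrm{pre}}(k)$ stays below the contraction threshold of order $Bp$, which continues to hold with high probability once $B$ is large; the recursion, and hence the $s_{\max}/p$ average-gap bound, is preserved. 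Making this domination quantitative---rather than relying on the verbatim cancellation available at $\delta_k\equiv1$---is the crux of the general analysis.
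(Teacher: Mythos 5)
Your proposal is correct in outline and shares the paper's three-part skeleton (a \BF average-gap bound, an \FCFS extreme-value lower bound, and a ratio), but it is a genuinely different proof, and the difference traces to how you read Definition~\ref{def:drift}. You take the formula $w_i^{(j)}=s_i+\sum_{t=1}^{j-1}\delta_t$ literally, so the increment a job receives depends on its \emph{age}; the paper's proof (Appendix~\ref{append:proofgeneral}) instead works with the gloss that all alive requests gain the common increment $\delta_k$ at \emph{global} step $k$. This choice moves the difficulty from one side of the ratio to the other. Under the paper's global-time reading, each device's load grows by exactly $B\delta_k$, a common offset that cancels \emph{exactly} in the range, so the \BF side (Lemmas~\ref{lem:sep-gap-general} and~\ref{lem:HP-calibrated-general}) is a near-verbatim copy of Theorem~\ref{thm:inhomog-o}; the price is paid on the \FCFS side, where the slot chain becomes time-inhomogeneous, has no stationary law, and is handled through uniform per-step variance and third-moment bounds (Lemma~\ref{lem:slot-moments-general}) plus the Berry--Esseen theorem for non-identically distributed summands, giving a per-step (not stationary) lower bound of order $G\sigma_s\sqrt{B\log G}$. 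Under your age-indexed reading the trade reverses: your \FCFS part comes for free---the slot chain on (prompt, age) remains time-homogeneous and you reuse the stationary representation $U\stackrel{d}{=}S+\sum_{t=1}^{A}\delta_t$, $A\sim\mathrm{Geo}(p)-1$, exactly as in Theorem~\ref{thm:inhomog-o}---but your \BF part needs the new concentration lemma you correctly flag as the crux, since devices with different age profiles grow by different amounts. Your resolution is sound, but to make it rigorous you should state the fact it silently rests on: completions are exogenous i.i.d.\ $\mathrm{Bernoulli}(p)$ trials, so slot ages are i.i.d.\ across all $GB$ slots \emph{regardless of the assignment policy} (even though \BF correlates prompt placements with the past); then Hoeffding plus a union bound over workers gives a cross-device growth spread of order $\delta_{\max}\sqrt{B\log G}$ around a common mean, which is dominated by the contraction term $\gamma Bp$ precisely because $\sqrt{G}=\tilde{o}(B)$, and this high-probability event must be folded into the good-event bookkeeping of the recursion exactly as the paper does with $E_{\mathrm{cnt}},E_{\mathrm{snap}},E_{\mathrm{dev}}$. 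Both routes terminate at the same bound $c\,\frac{p\,\sigma_s}{s_{\max}}\,\frac{G}{G-1}\sqrt{B\log G}$, since in either reading the per-slot snapshot variance is lower-bounded by $\sigma_s^2$.
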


\noindent\textbf{Proof intuition:}
Compared with the ``$+1$ per step'' case, the core structure of the \BF analysis remains unchanged: the gap contraction still comes entirely from completions, not from the deterministic per-step growth. When every survivor gains a common increment $\delta_k$ at step $k$, each device’s load is shifted by exactly $B\delta_k$, so the inter-device gap $D(k{-}1)$ is unaffected by this drift; only the random removals from completions change the spread. As a result, the separation lemma (which enforces $D(k)\lesssim A^{\mathrm{pre}}(k)-Bp$ on good completion-count events) still holds since it depends only on prompt sizes and binomial completions, not on the specific value of $\delta_k$. The main technical adjustments are (i) in the concentration lemma, where the per-slot summands are now bounded by $\Lambda(k)+\delta_{\max}$ instead of $\Lambda(k)$, so the Bernstein bound is applied with a slightly larger envelope, and (ii) in the envelope/age accounting, where we use that an alive job of age $r$ has size at most $s_{\max}+\delta_{\max} r$, so the excess $\Lambda(k)-s_{\max}$ is controlled by $\delta_{\max}$ times the maximum age. The geometric tail of the output length then yields the same exponential decay in age (up to constants depending on $\delta_{\max}$), and hence the same $O(s_{\max}/p)$ bound on the long-run average gap as in the $+1$ case. In other words, as long as the per-step increments are common across jobs and uniformly bounded, the contraction mechanism and the averaging argument that underpin the \BF upper bound survive unchanged; only the constants in the concentration and tail estimates are rescaled by $\delta_{\max}$.

In the constant $+1$ model, the slot-level process under \FCFS is a homogeneous Markov chain that admits a clean stationary representation: at any snapshot, each slot contains ``prompt + geometric age'', and the age has variance $(1-p)/p^2$.  With a time-varying drift sequence $(\delta_k)$, the chain for $Y_{g,b}(k)$ becomes non-homogeneous, but the key stochastic structure survives: completions still occur as i.i.d.\ Bernoulli$(p)$ trials, independent of job sizes and of the drift.  This means that at any step $k$, the prompt length of the current job in a slot is still an independent copy of $S$, and the number of drift steps it has accumulated---its age---still has geometric tails with parameter $p$, regardless of the past.  The per-slot snapshot can therefore be written as
\[
\text{(prompt)} \;+\; \text{(sum of the last $A_k$ drifts)},
\]
where $A_k$ has uniformly bounded moments and $\delta_k$ is uniformly bounded.  This immediately implies that the per-slot variance is uniformly bounded below by $\sigma_s^2$ (from the prompt) and the third moment is uniformly bounded above, independent of time.  Summing $B$ such independent snapshots on each device yields loads with variance of order $B$, and a Berry--Esseen bound (in its non-identical version) shows that each device’s load fluctuates on the $\sqrt{B}$ scale and is close to Gaussian.  Taking the maximum over $G$ devices then produces an excess of order $\sqrt{B\log G}$ above the average device load, with constant probability, exactly as in the homogeneous case.  Thus, even though the per-slot chain is no longer time-homogeneous and lacks a simple stationary law, the bounded and common per-step drift leaves the \emph{order} of the \FCFS imbalance unchanged: its expected size remains $\Theta\!\big(G\sqrt{B\log G}\big)$. This concludes that the imbalance-reduction factor is at least on the order of $\sqrt{B \log G}$. The complete proof can be found in Appendix \ref{append:proofgeneral}.
\hfill
$\square$

\subsection{Theoretical Guarantee of \BF\ on Energy Consumption}
\label{subsec:energy}

We measure total energy consumption (in Joules) as the time integral of instantaneous GPU power:
\begin{equation}\label{eq:expE}
E \;=\; \int_{0}^{T} P\bigl(\mathrm{mfu}(t)\bigr)\,dt,
\end{equation}
where $\mathrm{mfu}(t)$ denotes the Model FLOPs Utilization (mfu), i.e., the ratio of observed computational throughput to the GPU's theoretical peak FLOPs capacity. Empirical measurements in \cite{ozcan2025quantifying} validate that instantaneous GPU power is a sublinear function of utilization:
\begin{equation}\label{eq:power-law}
P(\mathrm{mfu})
\;=\;
P_{\mathrm{idle}} + (P_{\mathrm{max}}-P_{\mathrm{idle}})
\left(\frac{\mathrm{mfu}}{\mathrm{mfu}_{\mathrm{sat}}}\right)^{\gamma},
\qquad \gamma\in(0,1),
\end{equation}
where $P_{\mathrm{idle}}$ is idle power, $P_{\mathrm{max}}$ is peak power, $\mathrm{mfu}_{\mathrm{sat}}$ is a saturation threshold, and $\gamma<1$ captures the inefficiency that GPUs draw substantial power even at low utilization.

\paragraph{Synchronized phase and utilization model.}
To isolate the impact of barrier-induced waiting, we focus on the synchronized phase of each decoding step (e.g., the attention computation) during which all workers wait for the slowest one. Other phases (e.g., all-to-all EP/TP communication) can be incorporated additively but are omitted here for clarity.

Recall from Section~\ref{sec:model} that at discrete step $k$ the instantaneous workload on worker $g$ under policy $\pi$ is $L_g(k)$, and $L_{g^*}(k)=\max_{g\in[G]}L_g(k)$. The wall-clock duration of the synchronized attention phase in step $k$ is linear in the maximal workload, namely, $\tau_k(\pi)\;=\;\kappa_{\mathrm{ATT}}\cdot L_{g^*}(k)$, for some constant $\kappa_{\mathrm{ATT}}>0$ (seconds per unit workload). During this synchronized phase, worker $g$ performs useful attention computation for $\kappa_{\mathrm{ATT}}L_g(k)$ seconds and waits at the barrier for $\kappa_{\mathrm{ATT}}(L_{g^*}(k)-L_g(k))$ seconds. Hence its utilization fraction within the synchronized phase is
\begin{equation}\label{eq:u-def}
u_g(k)\;:=\;\frac{L_g(k)}{L_{g^*}(k)}\in[0,1].
\end{equation}
As the utilization fraction equals to the instant throughput fraction, $\frac{\mathrm{mfu}_g(t)}{\mathrm{mfu}_{\mathrm{sat}}}$, we have 
\begin{equation}\label{eq:mfu-map}
\frac{\mathrm{mfu}_g(t)}{\mathrm{mfu}_{\mathrm{sat}}} \;=\; u_g(k),
\qquad t\ \text{during the synchronized phase of step }k.
\end{equation}
Under \eqref{eq:power-law}--\eqref{eq:mfu-map}, the per-worker power in step $k$ equals
$
P_{\mathrm{idle}}+(P_{\mathrm{max}}-P_{\mathrm{idle}})\,u_g(k)^\gamma.
$

\paragraph{Total synchronized-phase energy.}
Fix an arrival instance $\mathcal{I}$ and a policy $\pi$. Let $K_\pi(\mathcal I)$ denote the number of decoding steps until all requests in $\mathcal I$ complete under $\pi$. The total energy consumed in the synchronized attention phase across all $G$ workers is
\begin{align}
E(\pi;\mathcal I)
&:=\sum_{g=1}^G \int_{0}^{T_\pi} P\!\left(\mathrm{mfu}_g(t)\right)\,dt
\notag\\
&=\sum_{k=0}^{K_\pi(\mathcal I)-1}\sum_{g=1}^G
\tau_k(\pi)\Bigl(P_{\mathrm{idle}}+(P_{\mathrm{max}}-P_{\mathrm{idle}})\,u_g(k)^\gamma\Bigr)
\notag\\
&=\kappa_{\mathrm{ATT}}\sum_{k=0}^{K_\pi(\mathcal I)-1} L_{g^*}(k)
\sum_{g=1}^G
\Bigl(P_{\mathrm{idle}}+(P_{\mathrm{max}}-P_{\mathrm{idle}})\,u_g(k)^\gamma\Bigr),
\label{eq:E-discrete}
\end{align}
where $T_\pi=\sum_{k=0}^{K_\pi(\mathcal I)-1}\tau_k(\pi)$ is the corresponding makespan of the synchronized phase.

Define the policy-independent total attention workload of the instance $\mathcal I$ as:
\begin{equation}\label{eq:W-def}
W(\mathcal I)\;:=\;\sum_{i\in\mathcal I}\sum_{j=1}^{o_i} w_i^{(j)}
\;=\;
\sum_{k=0}^{K_\pi(\mathcal I)-1}\sum_{g=1}^G L_g(k),
\end{equation}
where the second equality holds for every policy $\pi$ because each request contributes its full workload profile exactly once regardless of assignment timing or placement.
Also define the cumulative imbalance up to completion:
\begin{equation}\label{eq:ImbTot-def}
\mathrm{ImbTot}(\pi;\mathcal I)
\;:=\;
\sum_{k=0}^{K_\pi(\mathcal I)-1} \mathrm{Imbalance}(k;\pi)
\;=\;
\sum_{k=0}^{K_\pi(\mathcal I)-1}\sum_{g=1}^G\bigl(L_{g^*}(k)-L_g(k)\bigr).
\end{equation}
Finally, for a \emph{baseline} policy $\pi_0$, define the normalized imbalance level
\begin{equation}\label{eq:eta-sum-def}
\eta_{\mathrm{sum}}(\pi_0;\mathcal I)
\;:=\;
\frac{\mathbb{E}\bigl[\mathrm{ImbTot}(\pi_0;\mathcal I)\bigr]}{\mathbb{E}\bigl[W(\mathcal I)\bigr]}.
\end{equation}

\paragraph{Energy savings from imbalance reduction.}
We now show that any guaranteed improvement in imbalance yields an explicit guaranteed improvement in total synchronized-phase energy (as a percentage), with constants depending only on the power model parameters.

\begin{theorem}[Energy saving guarantee from imbalance reduction]
\label{thm:energy-general}
Let $\pi_0$ be any baseline policy and $\pi_1$ any improved policy. Suppose the imbalance improvement factor satisfies
\begin{equation}\label{eq:alpha-assump}
\mathbf{IIR} 
\;=\; 
\inf_{I \in \mathcal{I}} \left(\lim_{K \to \infty} \frac{\mathbb{E}\bigl[\text{AvgImbalance}(\pi_0;I)\bigr]}{\mathbb{E}\bigl[\text{AvgImbalance}(\pi_1;I)\bigr]}\right)\ \ge\ \alpha
\qquad\text{for some }\alpha> 1.
\end{equation}
Define
\begin{equation}\label{eq:C-D-def}
C_\gamma \;:=\; (1-\gamma)P_{\mathrm{max}}+\gamma P_{\mathrm{idle}},
\qquad
D_\gamma \;:=\; (1-\gamma)(P_{\mathrm{max}}-P_{\mathrm{idle}}),
\end{equation}
then the synchronized-phase energy saving fraction satisfies
\begin{equation}\label{eq:energy-saving-frac}
\frac{\mathbb{E}\bigl[E(\pi_0;\mathcal I)\bigr]-\mathbb{E}\bigl[E(\pi_1;\mathcal I)\bigr]}{\mathbb{E}\bigl[E(\pi_0;\mathcal I)\bigr]}
\ \ge\
\frac{1}{P_{\mathrm{max}}/\eta_{\mathrm{sum}}(\pi_0;\mathcal I)+C_\gamma}
\left(
P_{\mathrm{idle}}\Bigl(1-\frac{1}{\alpha}\Bigr)\;-\;\frac{D_\gamma}{\alpha}
\right).
\end{equation}
\end{theorem}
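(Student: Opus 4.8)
The plan is to reduce the energy statement to an elementary algebraic comparison, by expressing the synchronized-phase energy \eqref{eq:E-discrete} purely in terms of the policy-independent total workload $W(\mathcal I)$ and the cumulative imbalance $\mathrm{ImbTot}(\pi;\mathcal I)$, and then sandwiching it between two linear functions of $\mathrm{ImbTot}$ whose slopes are exactly $P_{\mathrm{idle}}$ and $C_\gamma$. The only analytic input needed is the pair of elementary bounds for the power exponent: since $x\mapsto x^\gamma$ is concave on $[0,1]$ with $\gamma\in(0,1)$, for every $u\in[0,1]$ one has the chord bound $u\le u^\gamma$ and the tangent-at-$1$ bound $u^\gamma\le \gamma u+(1-\gamma)$.

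First I would isolate the term $\Sigma(\pi):=\sum_{k}L_{g^*}(k)\sum_{g}u_g(k)^\gamma$ appearing in \eqref{eq:E-discrete}. Using $L_{g^*}(k)\,u_g(k)=L_g(k)$ together with the identities $\sum_k\sum_g L_g(k)=W(\mathcal I)$ and $\sum_k G\,L_{g^*}(k)=W(\mathcal I)+\mathrm{ImbTot}(\pi;\mathcal I)$ (both immediate from \eqref{eq:W-def} and \eqref{eq:ImbTot-def}), the two pointwise bounds on $u_g^\gamma$ integrate to
\[
W(\mathcal I)\ \le\ \Sigma(\pi)\ \le\ W(\mathcal I)+(1-\gamma)\,\mathrm{ImbTot}(\pi;\mathcal I).
\]
Substituting back into \eqref{eq:E-discrete} and collecting the coefficient of $W(\mathcal I)$ (which is always $P_{\mathrm{idle}}+(P_{\mathrm{max}}-P_{\mathrm{idle}})=P_{\mathrm{max}}$) and of $\mathrm{ImbTot}$, I obtain the clean sandwich
\[
\kappa_{\mathrm{ATT}}\bigl(P_{\mathrm{max}}W+P_{\mathrm{idle}}\,\mathrm{ImbTot}(\pi)\bigr)\ \le\ E(\pi;\mathcal I)\ \le\ \kappa_{\mathrm{ATT}}\bigl(P_{\mathrm{max}}W+C_\gamma\,\mathrm{ImbTot}(\pi)\bigr),
\]
since the upper slope is $P_{\mathrm{idle}}+(1-\gamma)(P_{\mathrm{max}}-P_{\mathrm{idle}})=C_\gamma$.

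Next I would apply the lower sandwich to the baseline and the upper sandwich to the improved policy, so that $\mathbb{E}[E(\pi_0)]\ge \kappa_{\mathrm{ATT}}(P_{\mathrm{max}}\overline W+P_{\mathrm{idle}}\,\overline I_0)$ and $\mathbb{E}[E(\pi_1)]\le \kappa_{\mathrm{ATT}}(P_{\mathrm{max}}\overline W+C_\gamma\,\overline I_1)$, writing $\overline W:=\mathbb E[W]$ and $\overline I_j:=\mathbb E[\mathrm{ImbTot}(\pi_j)]$. The assumption \eqref{eq:alpha-assump} is translated into $\overline I_1\le \overline I_0/\alpha$. Then $\kappa_{\mathrm{ATT}}$ cancels in the saving fraction $1-\mathbb E[E(\pi_1)]/\mathbb E[E(\pi_0)]$, and dividing numerator and denominator by $\overline I_0=\eta_{\mathrm{sum}}\overline W$ gives
\[
\frac{\mathbb E[E(\pi_0)]-\mathbb E[E(\pi_1)]}{\mathbb E[E(\pi_0)]}\ \ge\ \frac{P_{\mathrm{idle}}-C_\gamma/\alpha}{P_{\mathrm{max}}/\eta_{\mathrm{sum}}+P_{\mathrm{idle}}}.
\]
Because $C_\gamma\ge P_{\mathrm{idle}}$ (it is a convex combination of $P_{\mathrm{max}}\ge P_{\mathrm{idle}}$), enlarging the denominator from $P_{\mathrm{idle}}$ to $C_\gamma$ only weakens the bound; finally the identity $P_{\mathrm{idle}}+D_\gamma=(1-\gamma)P_{\mathrm{max}}+\gamma P_{\mathrm{idle}}=C_\gamma$ lets me rewrite the numerator as $P_{\mathrm{idle}}(1-1/\alpha)-D_\gamma/\alpha$, recovering exactly \eqref{eq:energy-saving-frac}.

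The concavity inequalities and the coefficient bookkeeping are routine. The step I expect to be the genuine obstacle is the translation of the hypothesis: \eqref{eq:alpha-assump} controls the \emph{long-run time-average} imbalance ratio, whereas the energy expression \eqref{eq:E-discrete} involves the \emph{cumulative} imbalance $\mathrm{ImbTot}$ up to completion, and the two policies need not share the same makespan $K_\pi(\mathcal I)$. I would handle this by evaluating energy and imbalance over a common large horizon and passing to the $K\to\infty$ limit, where $\overline I_j/K\to\mathbb E[\text{AvgImbalance}(\pi_j)]$ and $\overline W/K$ converges to a policy-independent workload rate, so that $\eta_{\mathrm{sum}}$ is well defined and $\overline I_1/\overline I_0\to \mathbb E[\text{AvgImbalance}(\pi_1)]/\mathbb E[\text{AvgImbalance}(\pi_0)]\le 1/\alpha$. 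A secondary point to verify is that the relaxation $P_{\mathrm{idle}}\to C_\gamma$ in the denominator preserves the inequality precisely when the numerator $P_{\mathrm{idle}}-C_\gamma/\alpha$ is nonnegative, i.e.\ in the regime where the guarantee is nonvacuous.
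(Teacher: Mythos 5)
Your proposal is correct and follows essentially the same route as the paper's proof in Appendix~\ref{append:energy}: the identical concavity sandwich $\kappa_{\mathrm{ATT}}\bigl(P_{\mathrm{max}}W+P_{\mathrm{idle}}\,\mathrm{ImbTot}(\pi)\bigr)\le E(\pi;\mathcal I)\le \kappa_{\mathrm{ATT}}\bigl(P_{\mathrm{max}}W+C_\gamma\,\mathrm{ImbTot}(\pi)\bigr)$ obtained from the chord and tangent-at-one bounds on $u^\gamma$, the same translation of \eqref{eq:alpha-assump} into $\mathbb{E}[\mathrm{ImbTot}(\pi_1;\mathcal I)]\le\mathbb{E}[\mathrm{ImbTot}(\pi_0;\mathcal I)]/\alpha$, and the same closing algebra. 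The only deviation is in the last step—you apply the \emph{lower} sandwich to $\mathbb{E}[E(\pi_0;\mathcal I)]$ in the denominator, obtaining the slightly tighter constant $P_{\mathrm{idle}}$ before relaxing it to $C_\gamma$ (valid exactly in the nonvacuous regime $P_{\mathrm{idle}}(1-1/\alpha)\ge D_\gamma/\alpha$, which you correctly flag), whereas the paper applies the \emph{upper} sandwich to the denominator and lands on $C_\gamma$ directly; and your limiting argument for reconciling the time-average hypothesis with the cumulative quantity $\mathrm{ImbTot}$ is, if anything, more careful than the paper's, which asserts that implication without comment.
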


\paragraph{Proof sketch.}
Equation \eqref{eq:E-discrete} expresses $E(\pi;\mathcal I)$ as a sum over steps of the barrier duration $\kappa_{\mathrm{ATT}}L_{g^*}(k)$ times power.
Writing $u_g(k)^\gamma = u_g(k) + (u_g(k)^\gamma-u_g(k))$ and using $L_{g^*}(k)u_g(k)=L_g(k)$ yields an \emph{exact decomposition}:
\[
E(\pi;\mathcal I)
=
\kappa_{\mathrm{ATT}}P_{\mathrm{max}}W(\mathcal I)
+
\kappa_{\mathrm{ATT}}P_{\mathrm{idle}}\mathrm{ImbTot}(\pi;\mathcal I)
+
\text{(concavity correction)}.
\]
Since $\gamma\in(0,1)$, $u^\gamma$ is concave on $[0,1]$; the tangent bound at $u=1$ implies that the concavity correction is nonnegative and at most
$\kappa_{\mathrm{ATT}}D_\gamma\,\mathrm{ImbTot}(\pi;\mathcal I)$.
This shows $E(\pi;\mathcal I)$ is sandwiched between two affine functions of $\mathrm{ImbTot}(\pi;\mathcal I)$ with the same policy-independent baseline term $\kappa_{\mathrm{ATT}}P_{\mathrm{max}}W(\mathcal I)$.
Combining these bounds with the imbalance ratio assumption \eqref{eq:alpha-assump} and normalizing by $W(\mathcal I)$ yields \eqref{eq:energy-saving-frac}. The complete proof can be found in Appendix \ref{append:energy}.

\begin{remark}[Instantiation to $\pi_0=\FCFS$ and $\pi_1=\BF$ in our LLM model]
\label{rem:energy-llm}
In the inhomogeneous LLM decode model of Section~\ref{sec:proof}, if we take $\pi_0=\FCFS$ and $\pi_1=\BF$, then \eqref{eq:alpha-assump} holds with $\alpha=c\,\frac{p}{s_{\max}}\,
\sqrt{\sigma_s^2+\frac{1-p}{p^2}}\;
\frac{G}{G-1}\,\sqrt{B\log G}$, and thus Theorem~\ref{thm:energy-general} yields a nontrivial constant lower bound on the synchronized-phase energy saving fraction once $\mathbf{IIR}$ is sufficiently large. Moreover, under the overloaded regime and geometric output lengths, the normalized imbalance level $\eta_{\mathrm{sum}}(\FCFS;\mathcal I)$ admits an explicit lower bound:
\begin{equation}\label{eq:eta-fcfs-lb}
\eta_{\mathrm{sum}}(\FCFS;\mathcal I)
\ \gtrsim\
\frac{\sigma_{\mathrm{snap}}}{\mu_s+\frac{1-p}{p}}\sqrt{\frac{\log G}{B}},
\qquad
\sigma_{\mathrm{snap}}^2=\sigma_s^2+\frac{1-p}{p^2}.
\end{equation}
Plugging \eqref{eq:eta-fcfs-lb} and $\alpha=c\,\frac{p}{s_{\max}}\,
\sqrt{\sigma_s^2+\frac{1-p}{p^2}}\;
\frac{G}{G-1}\,\sqrt{B\log G}$ into \eqref{eq:energy-saving-frac} yields an explicit energy saving guarantee in terms of $(p,\sigma_s,s_{\max},B,G)$ and $(P_{\mathrm{idle}},P_{\mathrm{max}},\gamma)$. The complete proof of Inequality \eqref{eq:eta-fcfs-lb} can be found in Appendix \ref{append:ineq}
\end{remark}

\begin{corollary}[Asymptotic energy saving at scale]
\label{cor:energy-asymptotic}
Under the conditions of Remark~\ref{rem:energy-llm}, as the number of workers $G\to\infty$, the guaranteed energy saving fraction of \BF relative to \FCFS converges to
\begin{equation}\label{eq:energy-limit}
\lim_{G\to\infty}\frac{\mathbb{E}\bigl[E(\FCFS;\mathcal I)\bigr]-\mathbb{E}\bigl[E(\BF;\mathcal I)\bigr]}{\mathbb{E}\bigl[E(\FCFS;\mathcal I)\bigr]}
\ \geq\ 
\frac{P_{\mathrm{idle}}}{(1-\gamma)P_{\mathrm{max}}+\gamma P_{\mathrm{idle}}}.
\end{equation}
\end{corollary}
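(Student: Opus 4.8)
The plan is to obtain the corollary as a direct limiting instantiation of the energy-saving bound \eqref{eq:energy-saving-frac} in Theorem~\ref{thm:energy-general}, specialized to $\pi_0=\FCFS$ and $\pi_1=\BF$ through Remark~\ref{rem:energy-llm}. First I would substitute the two problem-specific quantities supplied by the remark: the imbalance improvement factor $\alpha = c\,\frac{p}{s_{\max}}\sqrt{\sigma_s^2+\frac{1-p}{p^2}}\,\frac{G}{G-1}\sqrt{B\log G}$, and the normalized imbalance lower bound $\eta_{\mathrm{sum}}(\FCFS;\mathcal I)\gtrsim \frac{\sigma_{\mathrm{snap}}}{\mu_s+(1-p)/p}\sqrt{(\log G)/B}$ from \eqref{eq:eta-fcfs-lb}, into the right-hand side of \eqref{eq:energy-saving-frac}. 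Because \eqref{eq:energy-saving-frac} is a valid lower bound on the energy-saving fraction for every $G$, it suffices to compute the limit of this right-hand side as $G\to\infty$ and then pass the inequality to the limit.

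The computation splits into the two factors of \eqref{eq:energy-saving-frac}. For the parenthetical term, I would use that $\alpha=\Omega(\sqrt{B\log G})\to\infty$ as $G\to\infty$ (with $G/(G-1)\to1$), so that $1/\alpha\to0$ and hence $P_{\mathrm{idle}}\bigl(1-\tfrac{1}{\alpha}\bigr)-\tfrac{D_\gamma}{\alpha}\to P_{\mathrm{idle}}$; in particular this factor is eventually positive and increases toward $P_{\mathrm{idle}}>0$. For the prefactor, I would use the diverging lower bound $\eta_{\mathrm{sum}}\gtrsim \text{const}\cdot\sqrt{(\log G)/B}\to\infty$, which forces $P_{\mathrm{max}}/\eta_{\mathrm{sum}}\to0$ and therefore $\frac{1}{P_{\mathrm{max}}/\eta_{\mathrm{sum}}+C_\gamma}\to\frac{1}{C_\gamma}$. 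Multiplying the two limits gives $P_{\mathrm{idle}}/C_\gamma=\frac{P_{\mathrm{idle}}}{(1-\gamma)P_{\mathrm{max}}+\gamma P_{\mathrm{idle}}}$, which is exactly the claimed bound \eqref{eq:energy-limit}.

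The step I expect to require the most care is verifying that $\alpha$ and $\eta_{\mathrm{sum}}$ \emph{both} diverge in the same scaling regime, since they depend oppositely on $B$: $\alpha$ grows like $\sqrt{B}$ while the $\eta_{\mathrm{sum}}$ lower bound shrinks like $1/\sqrt{B}$. The clean limit therefore relies on sending $G\to\infty$ with the batch size $B$ held at its fixed, moderate hardware value, so that $\log G\to\infty$ drives both $\sqrt{B\log G}\to\infty$ and $\sqrt{(\log G)/B}\to\infty$ simultaneously; I would state this regime explicitly. A secondary bookkeeping point is that \eqref{eq:energy-saving-frac} is only a lower bound, so the conclusion is a statement about $\liminf_{G\to\infty}$ of the energy-saving fraction, which dominates the (well-defined) limit of the right-hand side and matches the $\ge$ in \eqref{eq:energy-limit}. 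All remaining dependence on $(p,\sigma_s,s_{\max})$ enters only through finite positive constants that do not affect the divergence of $\alpha$ or $\eta_{\mathrm{sum}}$, so no further estimates are needed.
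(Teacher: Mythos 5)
Your proposal is correct and follows essentially the same route as the paper's proof: instantiate \eqref{eq:energy-saving-frac} with $\pi_0=\FCFS$, $\pi_1=\BF$, then let $G\to\infty$ with $B$ fixed so that $\alpha=\Omega(\sqrt{B\log G})\to\infty$ (killing the $1/\alpha$ and $D_\gamma/\alpha$ terms) and $\eta_{\mathrm{sum}}\gtrsim\sqrt{(\log G)/B}\to\infty$ (killing $P_{\mathrm{max}}/\eta_{\mathrm{sum}}$), which yields the limit $P_{\mathrm{idle}}/C_\gamma$. Your explicit flagging of the fixed-$B$ regime and the $\liminf$ bookkeeping matches, and slightly sharpens, the paper's own argument.
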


\begin{remark}[Concrete instantiation for modern GPUs]
\label{rem:energy-concrete}
For NVIDIA A100 GPUs, empirical measurements in \cite{ozcan2025quantifying} report $P_{\mathrm{idle}}=100$W, $P_{\mathrm{max}}=400$W, and $\gamma=0.7$. Substituting into \eqref{eq:energy-limit} in Corollary \ref{cor:energy-asymptotic}:
\[
\frac{P_{\mathrm{idle}}}{(1-\gamma)P_{\mathrm{max}}+\gamma P_{\mathrm{idle}}}
\;=\;
\frac{100}{0.3\times 400 + 0.7\times 100}
\;=\;
\frac{100}{190}
\;\approx\; 52.6\%.
\]
That is, in large-scale deployments, the principled load balancing introduced in this work can reduce synchronized-phase energy consumption by more than half compared to the widely-deployed \FCFS baseline. This asymptotic guarantee underscores the sustainability impact of addressing the decode-stage load imbalance bottleneck at fleet scale.
\end{remark}

\section{Experiments} \label{sec:num}

We evaluate the proposed routing strategies through discrete-event simulation of GPU-based LLM request processing. This section describes the experimental setup, evaluation metrics, and numerical results.

\subsection{Dataset and Request Generation}

To simulate realistic LLM inference workloads, we use request traces derived from LongBench~\cite{bai2024longbench}, a challenging benchmark designed for long-context language models. LongBench encompasses diverse long-text application scenarios—single-document and multi-document question answering, summarization, few-shot learning, synthetic tasks, and code completion—and reflects production-relevant settings in which input prompts are substantially longer than typical benchmarks. Each request is characterized by a pair $(s_i, o_i)$, where $s_i$ denotes the prefill (input prompt) length and $o_i$ the decode (output) length; the resulting distributions are shown in Figure~\ref{fig:distribution1}. Requests arrive according to a stationary Poisson process at a rate exceeding the system's processing capacity, ensuring that the system operates in the overloaded regime central to our theoretical analysis. For comparison, we also evaluate on BurstGPT~\cite{wang2025burstgpt}, a lighter-load trace; these results appear in Appendix~\ref{append:light}.

\begin{figure}[h]
    \centering
\includegraphics[width=0.9\linewidth]{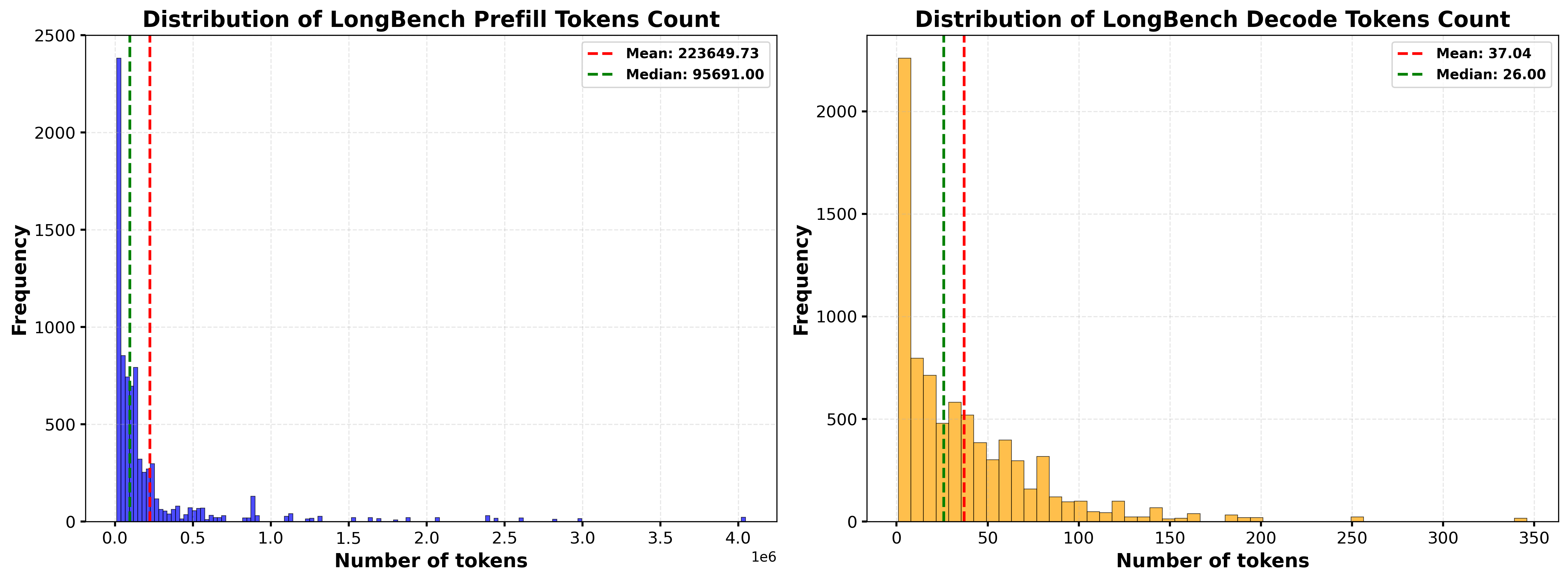}
    \caption{Prefill and decode length distribution of LongBench.}
    \label{fig:distribution1}
\end{figure}



\subsection{Simulation Environment}

The router simulator models an online setting with $G=256$ A100 GPU workers, each supporting up to $B=72$ concurrent requests (batch size). Key components include:

\begin{itemize}
    \item \textbf{Undiscovered queue}: Requests not yet revealed to the router.
    \item \textbf{Wait queue}: Candidate requests available for routing decisions.
    \item \textbf{Active sets} $\mathcal{A}_g$: Requests currently executing on GPU $g$.
    \item \textbf{Load tracking} $L_g$: Total token load on GPU $g$, defined in Equation \eqref{eq:workload}.
\end{itemize}

\paragraph{Time Progression Model.}
Each simulation step advances time by:
\begin{equation}
    \Delta t = C + t_{\ell} \cdot \max_{g \in [G]} L_g
\end{equation}
where $C=9.775\cdot 10^{-3}$ is a fixed overhead constant and $t_{\ell}=1.005\cdot10^{-7}$ is the per-token generation latency. The values of $C$ and $t_{\ell}$ are derived from a minimum mean square error regression on real-world traces. During each step, all active requests generate one token. A request $i$ completes when $\tau_i \geq o_i$.


\paragraph{Routing Policies.}
We evaluate four routing policies: (i) \FCFS, (ii) \textsf{Join-Shortest-Queue (JSQ)}, (iii) \BF ($H=0$), (iv) \BF ($H=20$), (vi) \BF ($H=40$), (vii) \BF ($H=60$), (viii) \BF ($H=80$), (ix) \BF ($H=100$).

\subsection{Evaluation Metrics}

We evaluate routing performance using four metrics. Let $G$ denote the number of GPUs, $\mathcal{A}_g^{(k)}$ the active requests on GPU $g$ at step $k$, and $L_g(k) = \sum_{i \in \mathcal{A}_g^{(k)}} w_i^{(k-x_i+1)}$ the instantaneous workload on worker $g$, where $w_i^{(j)}$ is the workload contributed by request $i$ in its $j$-th processing step.

\paragraph{Average Imbalance.}
Following the theoretical framework, we adopt the average imbalance as the primary metric. At each step $k$, the instantaneous imbalance is defined as:
\begin{equation}
    \text{AvgImbalance} = \frac{1}{K} \sum_{k=1}^{K} \text{Imbalance}(k)
\end{equation}

In addition to the theoretically motivated imbalance metric, we evaluate two metrics that directly reflect real-world LLM inference efficiency:

\paragraph{Throughput.}
Throughput measures the token processing rate (tokens/second):
\begin{equation}
    \text{Throughput} = \frac{\sum_{k=1}^{K} \left| \mathcal{A}^{(k)} \right|}{\sum_{k=1}^{K} \Delta t^{(k)}}
\end{equation}
where $\mathcal{A}^{(k)} = \bigcup_g \mathcal{A}_g^{(k)}$ denotes all active requests at step $k$, and $\Delta t^{(k)} = C + t_\ell \cdot \max_g L_g(k)$ is the wall-clock duration of step $k$. Since step duration is governed by the maximum load across workers, reducing imbalance directly increases throughput.

\paragraph{Time Per Output Token (TPOT).}
TPOT captures per-request latency as perceived by users:
\begin{equation}
    \text{TPOT} = \frac{1}{N} \sum_{i=1}^{N} \frac{t_i^{\text{finish}} - t_i^{\text{start}}}{o_i}
\end{equation}
where $t_i^{\text{start}}$ and $t_i^{\text{finish}}$ are the start and completion times of request $i$, and $o_i$ is its output length. Requests assigned to heavily loaded GPUs experience longer per-step durations due to the barrier synchronization, resulting in higher TPOT. This metric is widely used in production LLM serving systems to measure user-facing latency.

\paragraph{Energy Consumption.}
As is mentioned in Section \ref{subsec:energy}, the total energy consumption (in Joules) is the time integral of instantaneous GPU power (Equation \eqref{eq:expE}), where the instantaneous GPU power can be expressed by Equation \eqref{eq:power-law}.
Details of MFU computation and parameter calibration are provided in Appendix~\ref{append:num}.

\subsection{Results}

We present quantitative comparisons across routing policies, supported by visualization of load dynamics and power consumption patterns.

\paragraph{Load Trajectory Analysis.}
Figure~\ref{fig:load_heavy} presents per-worker load trajectories over the processing interval. To maintain visual clarity with $G=256$ workers, we display a random sample of 16 GPUs. Under the baseline policies \FCFS and \textsf{JSQ}, worker loads exhibit pronounced fluctuations ranging from 10M to 35M tokens during stable decoding, with high-frequency oscillations reflecting severe and persistent imbalance across the cluster. \BF ($H=0$) substantially compresses the load spread to approximately 12M--23M tokens. With a 40-step lookahead, \BF ($H=40$) achieves markedly tighter clustering (16M--17M tokens) despite the challenging workload characteristics, demonstrating the robustness of the lookahead mechanism under high-variance conditions.

\begin{figure}[h]
    \centering
    \includegraphics[width=\textwidth]{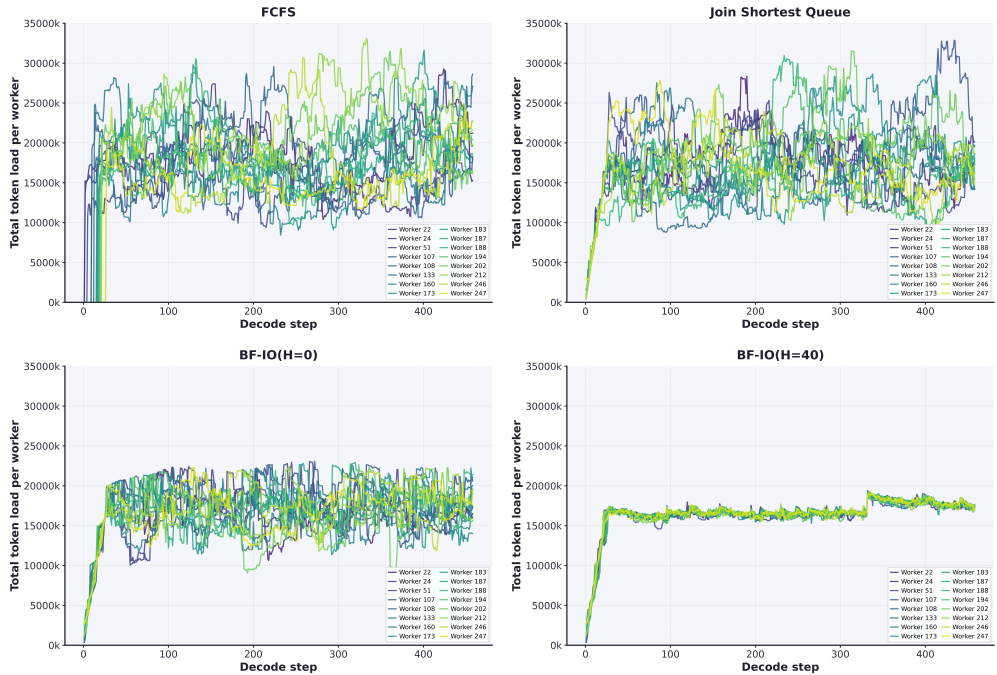}
    \caption{Per-worker token load trajectories under four routing policies (16 randomly sampled workers from $G=256$). \FCFS and \textsf{JSQ} exhibit extreme fluctuations (10M--35M tokens), whereas \BF policies maintain substantially tighter distributions, with \BF ($H=40$) achieving near-uniform loads around 16M--17M tokens.}
    \label{fig:load_heavy}
\end{figure}

\paragraph{Power Consumption Analysis.}
Figure~\ref{fig:power_timeseries} illustrates the temporal dynamics of average GPU power consumption, revealing how routing policies affect both aggregate energy and instantaneous power stability.

\begin{figure}[htbp]
    \centering
    \includegraphics[width=\textwidth]{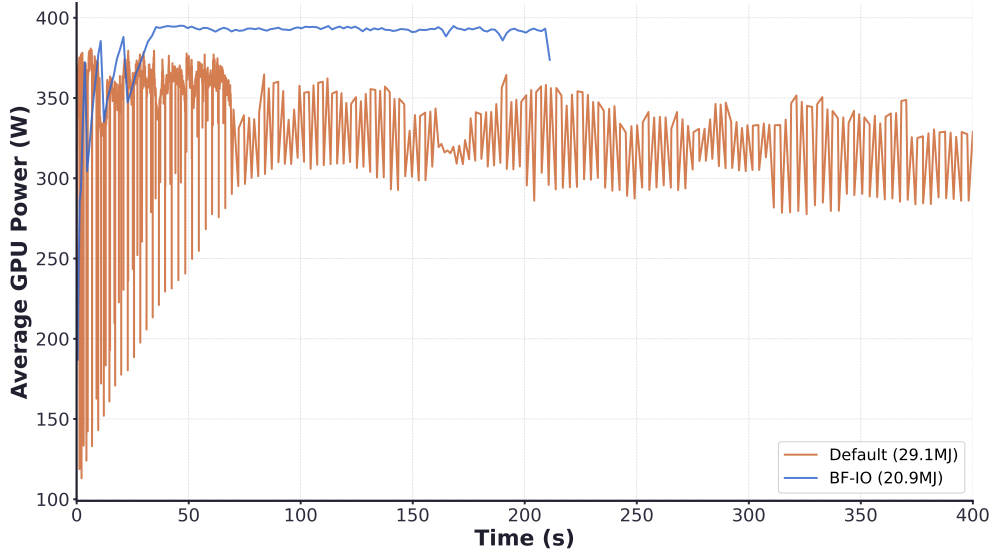}
    \caption{Average GPU power consumption over time. \BF sustains consistently high power draw (395--400\,W), indicating near-full utilization, and completes the workload faster. The baseline \FCFS policy exhibits substantial fluctuations (270--360\,W) due to load imbalance. Despite higher instantaneous power, \BF achieves 30\% lower total energy consumption (20.9\,MJ vs.\ 29.1\,MJ) by eliminating idle time.}
    \label{fig:power_timeseries}
\end{figure}

During stable decoding (excluding initial ramp-up), \BF maintains consistently high instantaneous power (395--400\,W), indicating balanced utilization across all devices. For reference, the NVIDIA A100 GPU has a maximum power draw of $P_{\mathrm{max}}=400$\,W (Appendix~\ref{append:energy}), implying that \BF operates near full computational capacity throughout execution. In contrast, the baseline \FCFS policy oscillates between 270--360\,W, reflecting periods in which subsets of GPUs remain underutilized due to uneven workload distribution.

Crucially, although \BF operates at higher instantaneous power, it achieves substantially lower total energy consumption (20.9\,MJ vs.\ 29.1\,MJ for \FCFS). This apparent paradox resolves by recognizing that total energy is the time integral of instantaneous power: by maintaining all GPUs at high utilization, \BF completes the workload faster, reducing total execution time and thereby the integral. In effect, balanced scheduling eliminates idle power—energy expended without productive computation—yielding significant net savings.

\paragraph{Quantitative Metrics Comparison.}
Table~\ref{tab:metrics_heavy} summarizes performance across routing policies. \BF consistently outperforms baselines across all metrics for every choice of lookahead window $H$ tested.

\begin{table}[h]
\centering
\caption{Performance comparison on LongBench. Arrows indicate preferred direction. Best values in each column are highlighted.}
\label{tab:metrics_heavy}
\setlength{\tabcolsep}{8pt}
\begin{tabular}{lcccc}
\toprule
\multirow{2}{*}{\textbf{Policy}} & \textbf{Avg Imbalance}$\downarrow$ & \textbf{Throughput}$\uparrow$ & \textbf{TPOT}$\downarrow$ & \textbf{Energy}$\downarrow$ \\
 & ($\times10^8$) & ($10^3$ tok/s)  & (s/tok)  & (MJ)  \\
\midrule
\FCFS            & 37.0 & 5.47 & 2.97 & 29.1 \\
\textsf{JSQ}     & 28.5 & 6.15 & 2.64 & 27.8 \\
\BF ($H=0$)      & 9.94 & 8.55 & 1.93 & 23.4 \\
\BF ($H=20$)     & 3.03 & 10.1 & 1.70 & 21.3 \\
\BF ($H=40$)     & 2.49 & \textbf{10.5} & 1.67 & \textbf{20.6} \\
\BF ($H=60$)     & 2.44 & 10.4 & \textbf{1.66} & 20.9 \\
\BF ($H=80$)     & \textbf{2.43} & 10.4 & 1.70 & 21.0 \\
\BF ($H=100$)    & 2.51 & 10.4 & 1.67 & 20.9 \\
\bottomrule
\end{tabular}
\end{table}

The optimal lookahead window is $H=40$. Relative to \FCFS, \BF ($H=40$) reduces average imbalance by a factor of $15\times$, improves throughput by 92\%, reduces time-per-output-token (TPOT) by 44\%, and lowers energy consumption by 29\%. These gains are substantial and consistent across metrics.

Comparison across \BF variants reveals a tradeoff in selecting $H$ (Figure~\ref{fig:Hnum}). When $H$ is too small, the policy optimizes myopically, balancing loads only over the immediate future and missing opportunities to anticipate near-term completions. When $H$ is too large, the optimization horizon extends beyond the range of reliable prediction—since arrivals are online and unpredictable, the objective becomes biased toward stale information. Additionally, larger $H$ incurs greater computational overhead per decision. Empirically, performance saturates around $H=40$; further increases yield no benefit and marginally degrade some metrics.

\begin{figure}[h]
    \centering
    \includegraphics[width=0.48\linewidth]{plots/H_plot_Avg_Imbalance.pdf}
    \includegraphics[width=0.48\linewidth]{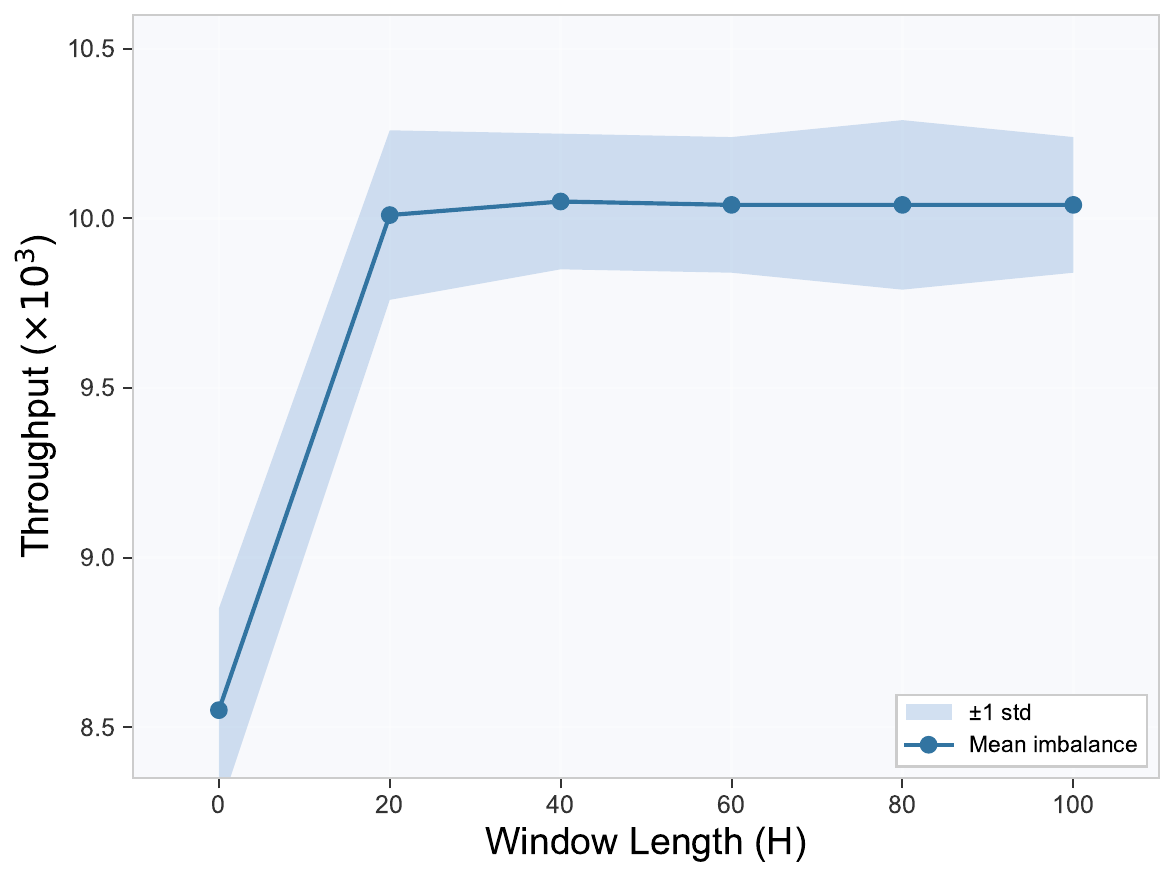}
    
    \includegraphics[width=0.48\linewidth]{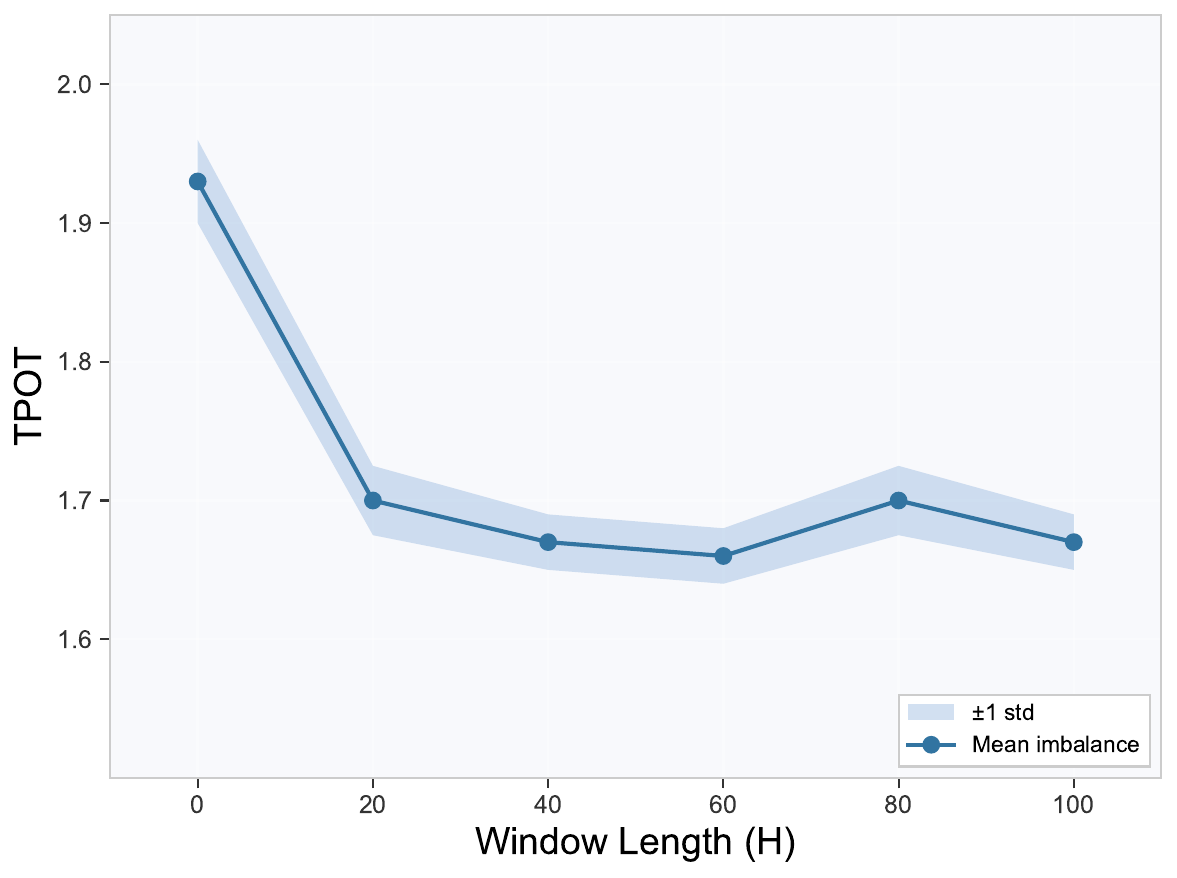}
    \includegraphics[width=0.48\linewidth]{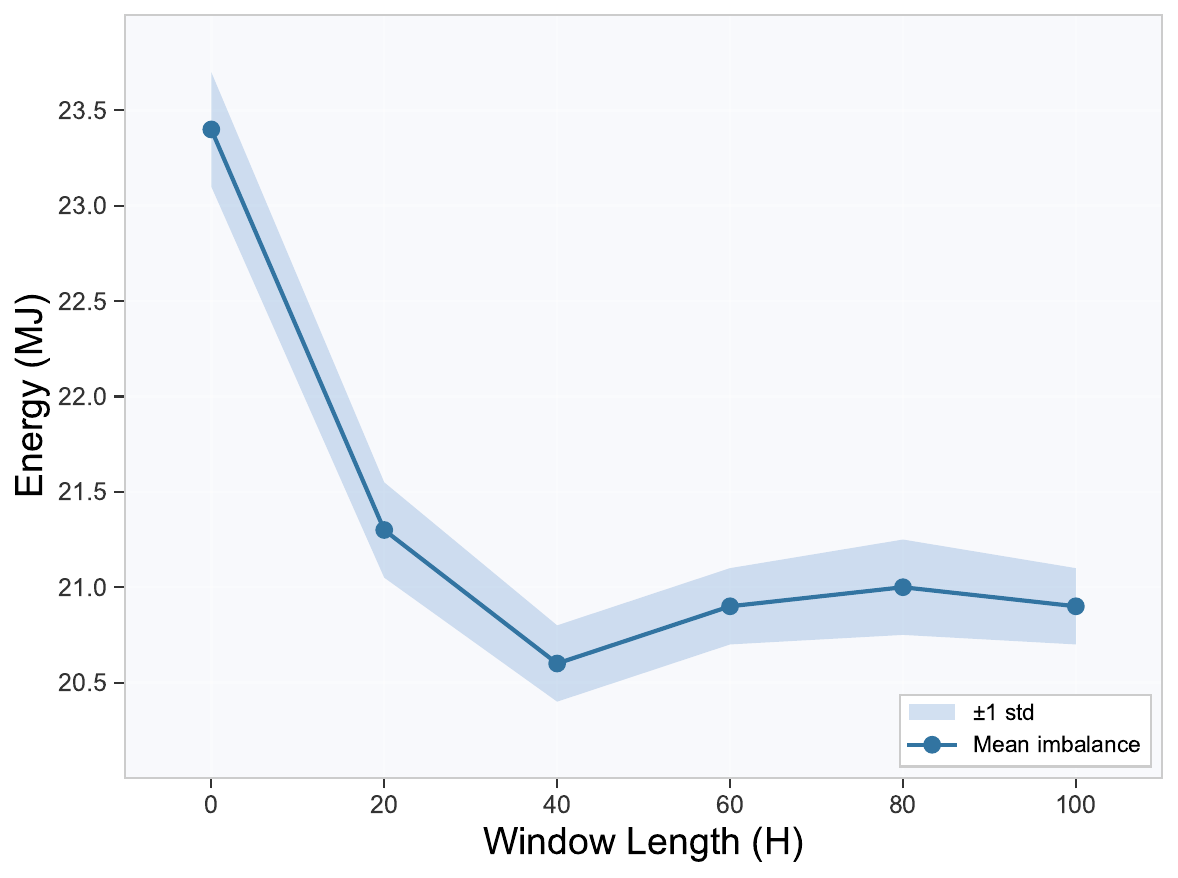}
    \caption{Effect of lookahead horizon $H$ on key performance metrics. All metrics improve rapidly as $H$ increases from 0 to 40, then plateau. The optimal operating point balances prediction reliability against computational overhead; beyond $H \approx 40$, additional lookahead provides no further benefit.}
    \label{fig:Hnum}
\end{figure}

\subsection{Ablation Study}
\paragraph{Scalability with Cluster Size.}
To understand how \BF scales with increasing computational resources, we conduct an ablation study varying the number of GPU workers $G$ from 16 to 224 while keeping the workload fixed (Longbench dataset). Figures~\ref{fig:scalability_metrics} and~\ref{fig:scalability_energy} present the scaling behavior of \BF compared to \FCFS.

\begin{figure}[htbp]
    \centering
    \includegraphics[width=\textwidth]{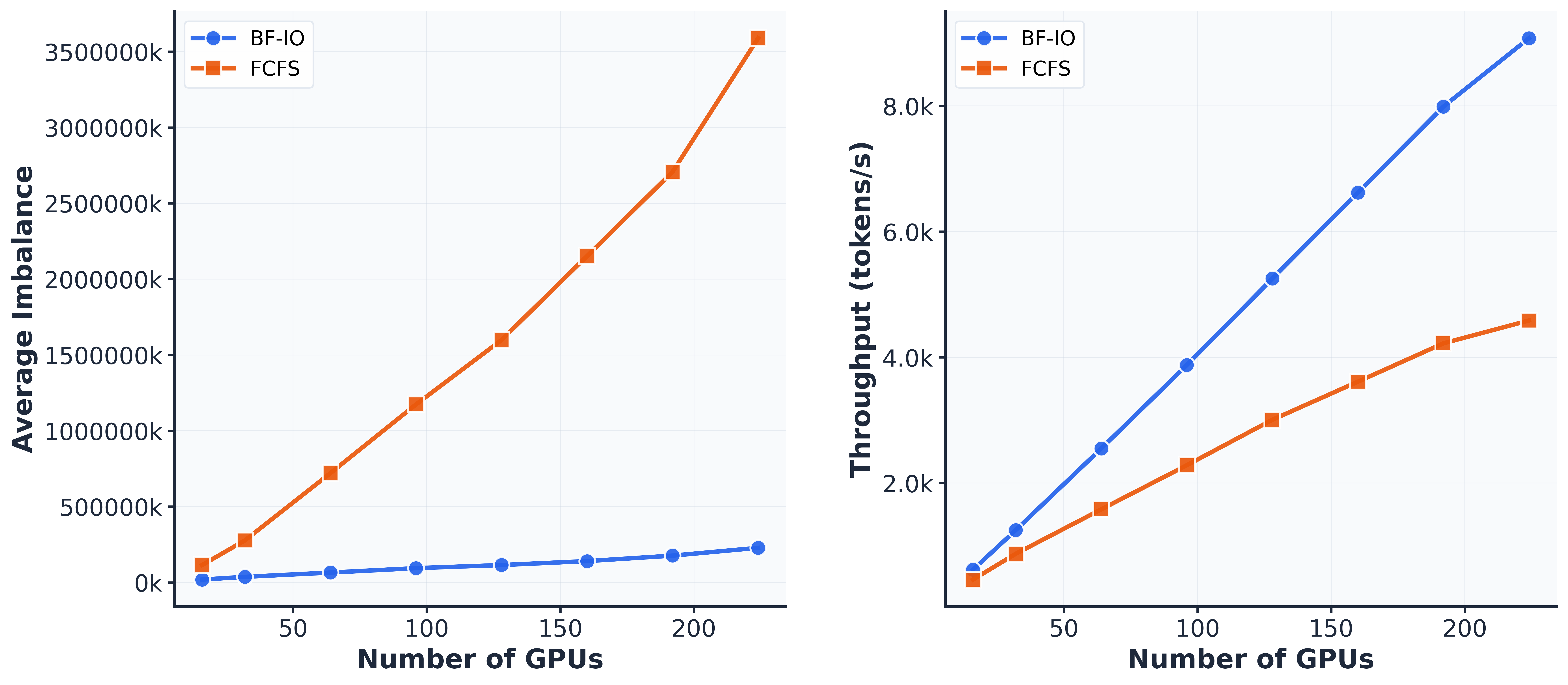}
    \caption{Scalability analysis with varying cluster size $G$. \textbf{Left}: Average imbalance grows super-linearly under \FCFS but remains bounded under \BF. \textbf{Right}: \BF achieves near-linear throughput scaling while \FCFS exhibits diminishing returns.}
    \label{fig:scalability_metrics}
\end{figure}

The left panel of Figure~\ref{fig:scalability_metrics} reveals that the average imbalance under \FCFS grows super-linearly with cluster size, increasing from approximately $3 \times 10^8$ at $G=16$ to over $3.5 \times 10^9$ at $G=224$—a $12\times$ increase for a $7\times$ scale-up in resources. In contrast, \BF maintains imbalance below $3 \times 10^8$ across all configurations, demonstrating that workload-aware scheduling effectively contains the coordination overhead even at scale. At $G=224$, \BF achieves $14\times$ lower imbalance than \FCFS.

The right panel shows that throughput scaling exhibits fundamentally different characteristics between the two strategies. \BF achieves near-linear scaling, increasing from approximately 1k tokens/s at $G=16$ to over 9k tokens/s at $G=224$. Meanwhile, \FCFS throughput grows sub-linearly, reaching only about 4.5k tokens/s at the largest scale—less than half of \BF's throughput. This divergence stems directly from the imbalance growth: as more workers are added under \FCFS, the probability of severe load skew increases, causing stragglers that bottleneck overall progress.

\begin{figure}[htbp]
    \centering
    \includegraphics[width=0.8\textwidth]{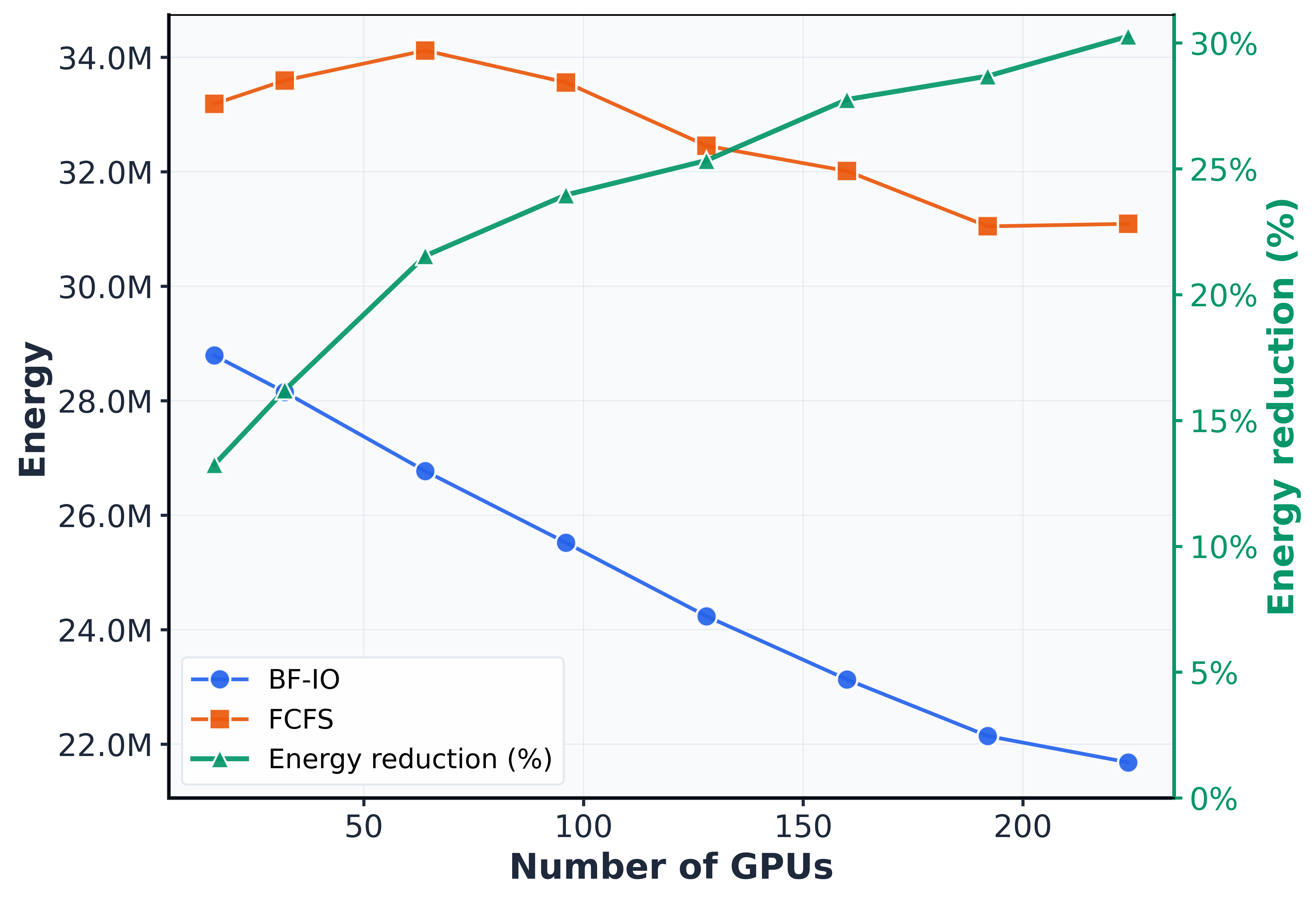}
    \caption{Energy consumption scaling with cluster size. \BF (blue) achieves consistently lower energy than \FCFS (orange), with energy reduction (green, right axis) increasing from 12\% at $G=16$ to 30\% at $G=224$.}
    \label{fig:scalability_energy}
\end{figure}

Figure~\ref{fig:scalability_energy} highlights the operational efficiency gains of \BF. While \FCFS energy consumption remains relatively flat around 31--34\,MJ across different cluster sizes, \BF energy decreases steadily from approximately 29\,MJ at $G=16$ to 22\,MJ at $G=224$. The energy reduction percentage (green line) increases monotonically with cluster size, rising from 12\% at small scales to 30\% at $G=224$. This widening gap demonstrates that the benefits of intelligent scheduling compound at larger scales: not only does \BF complete workloads faster, but it also avoids the energy waste from idle GPUs waiting at synchronization barriers.

\section{Discussion and Conclusion} \label{sec:conclusion}

\subsection{Summary of contributions}
This work develops a universal load-balancing principle for barrier-synchronized systems with sticky assignment decisions. We formalize the principle as \BF, which selects assignments by minimizing a short-horizon prediction of the imbalance metric induced by the slowest worker via an integer optimization. We establish worst-case, scale-sensitive guarantees for \BF under adversarial arrivals in the LLM decode model and extend the analysis to a broader class of non-decreasing workload drift processes. Crucially, we prove that this imbalance reduction yields rigorous energy savings guarantees: as system scale grows, the guaranteed energy reduction approaches 52\% for modern GPU hardware. This result establishes, to our knowledge, the first theoretical link between load balancing policy and provable energy efficiency in LLM serving. On the empirical side, extensive numerical experiments on both public benchmarks and proprietary industrial traces demonstrate that \BF substantially reduces imbalance, improves throughput, and lowers energy consumption relative to widely deployed baselines.

\subsection{Broader relevance beyond LLM serving}
Although the LLM serving bottleneck motivates our study, the abstraction captured by \BF is broader: many scientific and engineering workloads evolve in discrete steps, include at least one synchronization point (explicitly or implicitly), and bind work to a device or process for efficiency reasons (data locality, memory residency, or stateful execution). Two representative examples illustrate how the short-lookahead philosophy transfers.

\paragraph{Large-scale linear algebra and matrix computation.}
Many high-performance linear algebra kernels (e.g., blocked matrix factorizations, tiled matrix multiplication, and iterative solvers) execute as task graphs over submatrices/tiles. Even when per-kernel costs are relatively stable, effective completion times of ``jobs'' (usually several hours) are hard to predict far in advance due to cache effects, contention, and dependence constraints. Nevertheless, short lookahead information is often readily available from the runtime: tile-based systems maintain progress counters and DAG readiness queues, and a node can infer whether it will finish a set of outstanding tile tasks within the next few scheduling quanta by tracking (i) the number of remaining tiles on the critical path, (ii) the fraction of tiles already completed in the current panel/update phase, and (iii) convergence indicators for iterative methods (e.g., residual norms approaching tolerance). In such settings, \BF can be applied by treating each core/GPU as a worker, using short-horizon completion signals for currently running tasks, and routing or admitting new tasks to reduce load imbalance.

\paragraph{Adaptive multi-physics simulations with mesh refinement.}
A second example arises in physics and geoscience simulations that use adaptive mesh refinement (AMR) or adaptive particle resolution \cite{berger1984adaptive,berger1989local}. Here the per-step workload on each rank/accelerator can drift as refinement/coarsening changes the number of cells/particles and as localized physics kernels activate (e.g., shocks, phase boundaries, or localized forcing). Long-run workload is difficult to forecast because refinement decisions depend on evolving solution features, but near-term evolution is often predictable: refinement criteria are computed at regular intervals from local error indicators, gradients, or thresholded physical diagnostics, so it is feasible to estimate whether a region will remain refined (or be coarsened) over the next few time steps. Because time stepping typically includes synchronization (global reductions, halo exchanges, or coordinated stage boundaries), the slowest rank gates progress. Under these conditions, a short-lookahead policy like \BF is natural: it uses near-term refinement/coarsening signals of \emph{ongoing} regions to guide assignment of newly generated work units (e.g., blocks, patches, or particle batches) so as to reduce barrier-induced idle time.

These examples share the key property exploited by \BF: while predicting the \emph{total} remaining runtime of a newly arriving unit of work is unreliable, predicting whether \emph{ongoing} work will finish within a short window is often feasible from progress signals, diagnostics, or domain structure. This makes short lookahead a practical and portable information interface across scientific workloads.

\subsection{Limitations and future directions}
Several limitations suggest clear directions for future work.

\paragraph{System interfaces and buffering.}
Our strongest theoretical guarantees are developed under a centralized waiting-pool abstraction. Some serving engines bind requests immediately to per-worker queues, which reduces the scheduler’s ability to reshape batches at slot-release times and can weaken future-aware balancing when queueing delays exceed the lookahead window. A promising direction is to develop a full theory for the ``instant-dispatch'' interface (routing to per-worker FIFO queues) and to quantify how short-horizon signals degrade as activation delays grow.

\paragraph{Model scope.}
Our general theory covers a broad class of non-decreasing workload drift processes, motivated by stateful execution where work grows with accumulated state. Some systems exhibit decreasing or highly non-monotone per-step workloads (e.g., aggressive pruning, adaptive compression, or multi-phase jobs with alternating heavy/light stages). Extending worst-case theory to such regimes, and characterizing when monotonicity is essential, remains open.

\paragraph{Multi-objective serving.}
Real deployments often optimize multiple objectives simultaneously (tail latency, fairness across tenants, service-level objective compliance, and energy caps). Although the \BF objective aligns well with barrier-induced idle reduction, integrating fairness and service-level objective constraints into the short-horizon optimization—while preserving millisecond decision budgets—deserves dedicated study.

\begin{appendices}

\section{Supplementary Materials for Section \ref{sec:intro}} \label{append:intro}

\subsection{Literature Review} \label{append:literature}

\subsubsection{Load Balancing Problem}


In this section, we briefly review classical load balancing algorithm structures \citep{aslam2015load} (except for First-come-first-serve (FCFS) \citep{saeed2018load,rathi2024design,tarandeep2020load}, which serves as the benchmark in the paper and is discussed in Section \ref{sec:model}) and elucidate their fundamental inefficiency or incompatibility in the LLM inference problem:
\begin{itemize}
    \item \textit{Join Shortest Queue (JSQ). } JSQ \citep{whitt1986deciding,dai2007stability,foley2001join} routes each arriving request to the worker with the smallest queue length (ties arbitrary). In practice, LLM serving systems in widely used (e.g., vLLM \citep{kwon2023efficient}, SGLang \citep{zheng2024sglang}) implement a JSQ-style policy, so we take JSQ as the primary baseline. The core difficulty is that, in decode-time DP with sticky assignments, “queue length” is a poor proxy for per-step work: true workloads are unknown at assignment time and evolve as KV caches grow, so production systems typically measure the number of requests rather than their total workload. This makes JSQ brittle. Consider \(G\) workers and an adversarial arrival sequence with “heavy” requests of unknown long length \(L\) and “short” requests of length \(s\ll L\). Because a heavy counts as one request, JSQ will place a heavy on the worker that currently has the fewest requests; subsequent bursts of many short requests inflate the counts on other workers while leaving the heavy worker’s count small, so the next heavy again goes to the same worker. After \(k\) such events, that worker holds \(k\) heavies while others hold many shorts. Since per-step time under the barrier model is
    \[
    T_{\mathrm{step}}(t)=\max_{g} T^{(g)}_{\mathrm{local}},
    \]
    we have \(T_{\mathrm{step}}^{\mathrm{JSQ}}(t)\ge kc(L)\), where \(c(L)\) is the per-step local cost of a single heavy (proportional to its resident KV). A balanced assignment would yield \(T_{\mathrm{step}}^{\star}(t)\le \lceil k/G\rceil c(L)\), so the ratio is \(\Omega(G)\). The problem persists because assignments are non-preemptive: as heavies generate more tokens, their per-step cost grows, yet JSQ cannot correct earlier placements. Hence JSQ targets the wrong surrogate and can be highly inefficient in the decode-stage, barrier-synchronized setting.
    \item \textit{Round-Robin (RR). } RR \citep{sran2013comparative,subashini2011survey,swarnkar2013survey} dispatch assigns the \(i\)-th arriving (post-prefill) request to worker \(((i-1)\bmod G)+1\), cycling through the \(G\) workers irrespective of job size, resident KV, or drift. Under sticky, non-preemptive assignments, this determinism is exploitable. Consider an online sequence in which requests with indices \(i\in{1,1+G,1+2G,\dots}\) have long decode length \(L\) (heavy), and all others have short length \(s\ll L\). RR places \textbf{all} heavy jobs on worker 1 and none on workers \(2,\dots,G\). Let \(H\) be the number of heavy jobs that arrive before any heavy completes (this is typical when \(L\) is large). Then, for each decode step in that interval, the per-step makespan under RR satisfies
    \[
    T_{\text{step}}^{\text{RR}}(t)\ge H\cdot c(L),
    \]
    where \(c(L)\) is the per-step local cost contributed by one heavy (proportional to its resident KV), because the max local time occurs on worker 1 carrying \(H\) heavies. By contrast, an assignment that evenly spreads heavies yields
    \[
    T_{\text{step}}^{\star}(t)\le \Big\lceil\tfrac{H}{G}\Big\rceil\cdot c(L),
    \]
    so the ratio \(T_{\text{step}}^{\text{RR}}/T_{\text{step}}^{\star}\ge \tfrac{H}{\lceil H/G\rceil}\ge G\). This gap persists across many steps because jobs are sticky and heavy requests evolve slowly, so RR is not competitive for our LLM inference load balancing problem under adversarial arrivals. 
    \item \textit{Power-of-d. } 
    At each arrival, Power-of-d \citep{gardner2017redundancy,hellemans2018power,mukherjee2018universality} samples \(d \ll G\) workers uniformly at random and routes the request to the one with the smallest queue length among the \(d\) sampled. This reduces coordination overhead from \(O(G)\) to \(O(d)\) per arrival and, in classical migratable/size-known settings, retains JSQ-like balance with strong tail guarantees. However, in decode-time DP with sticky assignments and barrier synchronization, power-of-d inherits the core JSQ shortcomings: queue length is a poor surrogate for workload when job sizes are unknown and evolve with KV growth; placements are non-preemptive, so early mistakes persist; and minimizing counts does not control the per-step maximum local time that gates progress. 
    \item \textit{Min-min. }  The Min–Min scheduler \citep{patel2015enhanced,chen2013user,bhoi2013enhanced} builds an “earliest completion time” (ECT) matrix and repeatedly assigns the task that can finish soonest on some worker. Concretely, for each unscheduled task \(i\) and worker \(g\), compute \(\mathrm{ECT}_{ig}=r_g+p_{ig}\), where \(r_g\) is the worker’s ready time and \(p_{ig}\) is the estimated processing time of task \(i\) on \(g\). Choose \(i^\star=\arg\min_i \min_g \mathrm{ECT}_{ig}\), assign it to \(g^\star=\arg\min_g \mathrm{ECT}_{i^\star g}\), update \(r_{g^\star}\), and repeat. This strategy presumes reasonably accurate, worker-specific \(p_{ig}\) and is effective for static batches on heterogeneous clusters. However, it is incompatible in our decode-time setting it is ill-suited: (i) processing times are unknown and evolve token-by-token with KV growth, so \(p_{ig}\) is neither observable nor stationary; (ii) assignments are sticky (KV migration is impractical), so Min–Min’s advantage of revising choices before start or splitting jobs is unavailable; (iii) computing and maintaining a full ECT matrix is computationally prohibitive. Consequently, Min–Min cannot address the constraints of decode-stage load balancing in LLM serving.
    \item \textit{Opportunistic 
    Load Balancing and Load Balancing Min-Min (OLB+LBMM). } This hybrid couples opportunistic load balancing with a load-balancing Min–Min refinement \citep{rewehel2014new,kunwar2017load,kansal2012cloud}. OLB greedily dispatches arrivals to the first idle or least-busy worker (size-agnostic) to keep resources utilized; periodically, LBMM builds an earliest-completion-time table over the waiting pool (often after chunking tasks) and reassigns pending or not-yet-started chunks to heterogeneous workers that minimize predicted completion time. This can reduce transient idleness relative to pure Min–Min. However, it still relies on (i) accurate per-chunk processing-time estimates, (ii) the ability to split work before start, and (iii) periodic ECT recomputation overhead. Crucially, in our decode-time regime requests are indivisible: token generation is sequential and bound to a single KV cache, so pre-chunking across workers is not possible. Therefore, the OLB+LBMM architecture remains incompatible with the constraints in LLM serving. 
    \item \textit{Max–Min. }
    Max–Min \citep{nace2009max,mao2014max,ghosh2012load} is the dual of Min–Min: for each unscheduled task \(i\) and worker \(g\), compute the earliest–completion time \(\mathrm{ECT}_{ig}=r_g+p_{ig}\). For each task, take its best machine \(g(i)=\arg\min_g \mathrm{ECT}_{ig}\) with best time \(m_i=\min_g \mathrm{ECT}_{ig}\). Max–Min then selects \(i^\star=\arg\max_i m_i\) (the task whose best predicted finish time is largest), assigns it to \(g(i^\star)\), updates \(r_{g(i^\star)}\), and repeats. This choice favors long tasks early to mitigate starvation on heterogeneous clusters. In decode-time LLM serving, the same incompatibilities as Min–Min apply: \(p_{ig}\) is unknown and evolves token by token (KV growth), assignments are sticky and requests are indivisible (no pre-chunking or migration), and maintaining ECT tables within millisecond is impractical. Consequently, Max–Min is not suitable for the decode-stage, barrier-synchronized LLM serving setting.
    \item \textit{Ant Colony Optimization (ACO). } ACO \citep{sim2003ant,nishant2012load,li2011cloud} is a constructive metaheuristic that repeatedly builds task–worker assignments by sampling edges with probability \(P_{ig}\propto \tau_{ig}^{\alpha}\eta_{ig}^{\beta}\), where \(\tau_{ig}\) (pheromone) summarizes past good solutions and \(\eta_{ig}\) is a heuristic desirability. After evaluating an assignment, pheromones evaporate and are reinforced, and the process iterates over many ants/iterations. In our decode-time setting ACO is incompatible for two core reasons. \emph{Information:} ACO needs informative processing-time estimates \(p_{ig}\), but decode workloads are unknown at assignment and drift token by token with KV growth, so the signals required to guide sampling are unreliable. \emph{Computation:} Even if perfect predictions existed, ACO requires multiple ants and iterations and is typically batch oriented; with online, non-stationary arrivals and a millisecond decision budget, recomputing pheromones and re-sampling assignments per arrival is computationally infeasible. Consequently, ACO and its variants are ill-suited to decode-stage load balancing.
    \item \textit{Honey Bee Foraging Algorithm (HBFA). } HBFA-style schedulers \citep{ld2013honey,thapliyal2022load} emulate bee foraging: “employed” workers advertise their current load and the number of pending priority tasks (a waggle-dance–like signal), while “onlooker” tasks probabilistically select a destination with high advertised fitness (e.g., few priority tasks, low queue length, sufficient CPU/memory). This algorithm aims to raise utilization and cut mean response time on heterogeneous clusters. In decode-time LLM serving, HBFA is incompatible for the same structural reasons as ACO: (i) it relies on queue-length/priority signals, but per-request workloads are unknown at assignment and evolve token by token; (ii) its population-style selection/updates are infeasible to a millisecond scheduling budget under online, non-stationary arrivals. Consequently, HBFA does not satisfy the information, control, and timing constraints in decode-stage load balancing.
    \item \textit{Throttled Load Balancing (TLB). } TLB \citep{domanal2013load,panigrahi2020m} maintains a catalog of workers with capability vectors and enforces a per-worker concurrency threshold \(\Theta\); an arriving request is routed to the first worker that is under its threshold and appears “suitable” given current residual capacity. While this capacity-aware gating can outperform RR, it is misaligned with decode-time DP. First, TLB presumes a known, stable requirement vector \(d(r)\) per request, but in decode the effective demand is unknown at admission and drifts token by token as the KV cache grows, so eligibility tests based on initial size or queue length are unreliable. Second, assignments are sticky: early misplacements cannot be corrected as \(d(r)\) inflates. Third, throttling caps concurrency rather than minimizing the per-step maximum local work; depending on \(\Theta\), it can even increase idle time by admitting fewer short requests to otherwise underloaded workers while a heavy request gates progress. So, TLB is ineffective for decode-stage load balancing under unknown, evolving workloads and synchronous barriers.
    \item \textit{Carton. } Carton \citep{hu2010scheduling,singh2015carton,ramesh2018sclba} combines load balancing with distributed rate limiting: a controller spreads incoming work across workers, while each worker enforces a local rate cap so that measured performance levels are equalized with minimal coordination. This can work well when workloads and rate signals are stable. In decode-time DP, however, Carton is structurally incompatible for two reasons. First, the signals it relies on—per-worker rates, queue lengths, or smoothed performance counters—are poor surrogates for decode-time work: per-request workloads are unknown at admission and drift token by token as KV caches grow, so rate-based estimates become stale on the step-by-step timescale. Second, assignments are sticky: once a request is placed on a decode worker, its KV cache is not migrated, so DRL cannot reallocate in-flight work to correct early imbalances. Heavy requests then continue to dominate a worker’s local time while others idle at barriers, leaving Carton’s LB+DRL control loop without the information or actuation needed to repair decode-stage imbalance.
\end{itemize}

\subsubsection{LLM Serving and Energy Consumption} 

Recent work on the energy footprint of LLM serving falls into two complementary directions: (i) characterization and accounting of energy, and (ii) control/optimization of GPU power under service-level objectives (SLOs). On characterization, \citep{stojkovic2024towards} profile serving across batching and parallelism knobs and show stage-specific energy–latency trade-offs; \citep{ding2024sustainable} frame sustainability as a system-level challenge, highlighting that serving can rival or exceed training in operational energy; \citep{nguyen2024towards} quantify both operational and embodied emissions, showing that hardware generation and batching interact non-trivially with energy efficiency; and \citep{ozcan2025quantifying} model the fundamental power-utilization relationship, establishing that GPU power scales sublinearly with utilization—idle states draw substantial power without contributing throughput. On control, \citep{wilkins2024offline} build workload-based energy/runtime models to derive offline energy-optimal serving policies on heterogeneous systems, while iteration-level dynamic voltage frequency scaling (DVFS) emerges as a runtime lever: \citep{kakolyris2024slo} propose SLO-aware GPU frequency control tailored to autoregressive decoding, and \citep{liu2025greenllm} separate prefill and decode into distinct control loops to reduce energy under trace replays without throughput loss. Together, these studies establish that serving's energy footprint is large, that idle power is non-negligible, that prefill versus decode differ in performance–energy behavior, and that SLO-aware frequency control and hardware/batching choices are potent—but workload- and system-specific—tools

The studies above optimize per-device power or fleet-level energy/carbon via frequency control, hardware selection, or offline scheduling models. None addresses the decode-stage barrier objective where per-step time is governed by the maximum local workload across data-parallel workers—precisely the imbalance our paper targets. Our universal load-balancing principle is orthogonal and complementary: by provably shrinking the per-step maximum (thus reducing systematic idle at barriers), it converts wasted idle power into useful computation, lowering energy per token regardless of which DVFS or hardware policy is deployed. Moreover, it applies under unknown, drifting decode lengths where prediction-based controllers degrade. In short, prior work characterizes and tunes how fast and how green a given worker runs; we address how evenly the decode-time work is distributed across workers—an upstream lever that tightens the energy/latency envelope that DVFS and carbon-aware schedulers can then exploit.

\subsubsection{LLM Serving Optimization}

Recent systems work has rearchitected the serving stack to expose structural levers that raise throughput and reduce latency/queuing contention. Engine designs such as vLLM introduce PagedAttention to virtualize and compact the KV cache, eliminating fragmentation and enabling large, flexible batches with near-zero KV waste \citep{kwon2023efficient}. SGLang provides a program/runtime interface for multi-call LLM applications (tools, control flow, structured I/O), co-designing execution with batching and caching to keep the accelerator pipeline busy \citep{zheng2024sglang}. At the cluster level, prefill–decode (PD) disaggregation separates prompt encoding from autoregressive decoding across GPUs to remove cross-phase interference and to co-optimize latency targets for TTFT (prefill) and TPOT (decode) \citep{zhong2024distserve,patel2023splitwise}. These structural choices interact with parallelism: decode commonly uses data parallelism (DP) over requests, while model parallelism mixes tensor and pipeline partitioning for large models and expert parallelism (MoE) for sparse activation \citep{liu2024deepseek,shoeybi2019megatron,narayanan2021efficient,fedus2022switch}. Together, this body of work establishes the modern serving substrate—memory-efficient KV management, program-aware engines, PD disaggregation, and composable DP/TP/PP/EP—as the structural foundation on which scheduling and load-balancing policies operate.

Besides optimization in architecture of LLM serving, many recent works design scheduling algorithms to make LLM serving more efficient. Production LLM engines adopt iteration-level scheduling with continuous batching as the default control primitive: Orca introduced iteration-level scheduling and selective batching to keep accelerators busy despite heterogeneous request lengths \citep{yu2022orca}, and vLLM co-designs Paged-Attention with preemptive request scheduling to admit and retire sequences at iteration granularity \citep{kwon2023efficient}. Building on these primitives, recent systems explore complementary schedulers: Sarathi-Serve proposes stall-free scheduling via chunked prefill to relieve the prefill--decode interference and improve the throughput--latency frontier \citep{agrawal2023sarathi,agrawal2024taming}; Fast-Serve adds token-granular preemption with a skip-join multi-level feed back queue (MLFQ) to reduce job completion time under bursty arrivals \citep{wu2023fast}. A parallel line uses learned or proxy signals to approximate shortest-first: proxy-model-based predictors to rank jobs and avoid head-of-line blocking \citep{qiu2024efficient}, learned-to-rank policies that plug into vLLM \citep{fu2024efficient}, and embedding-based estimators with preemptive schedulers \citep{shahout2024don}. Fairness-aware policies integrate service fairness into continuous batching without preemption \citep{sheng2024fairness}. Beyond single-instance engines, decentralized or cross-replica schedulers have been proposed for heterogeneous or edge settings \citep{srivatsa2024preble}. 

A growing subset of works tackles \emph{load balancing} explicitly. ShuffleInfer disaggregates prefill and decode instances and introduces a two-level scheduler to spread mixed workloads across instances, reporting latency gains but relying on instance-level migration \citep{hu2025shuffleinfer}. For MoE inference, HarMoEny balances expert load via dynamic token redistribution and asynchronous expert prefetching across GPUs \citep{doucet2025harmoeny}, while DuoServe-MoE separates expert scheduling for prefill versus decode to match distinct activation patterns \citep{zhang2025duoserve}. At the router layer, performance-aware load balancers combine response-length prediction with mixing strategies to steer flows across replicas \citep{jain2025performance}. Systems surveys and empirical analyses further catalog scheduling variants and highlight the gap between academic schedulers and deployable, low-overhead policies \citep{kim2024effect}. Many of these works target expert-parallel (EP) balancing rather than data-parallel (DP) balancing of per-request decode work; the latter is more challenging because job sizes are unknown and evolving, assignments are sticky (KV non-migratable), and step progress is gated by synchronous barriers. Among approaches that could be adapted to DP, most are engine- or trace-specific heuristics without theoretical guarantees; their effectiveness varies across datasets and configurations and thus does not constitute a universal load-balancing principle.

Moreover, a recent theory stream formalizes LLM serving as online scheduling with KV-cache constraints and provides provable policies at the single-worker level. \citep{jaillet2025online} develop a benchmark via a hindsight-optimal integer program, prove impossibility in the fully online adversarial setting (no constant competitive ratio for any algorithm), and design algorithms that are near-optimal in a semi-online model. Following this model, \citep{wang2025llm,chen2025adaptively} relax several assumptions in the model. They study heterogeneous prefill and unknown decode lengths with only interval prediction information. In parallel, \citep{ao2025optimizing} introduce a fluid-guided framework with memory constraints, deriving Nested-WAIT policies that are near-optimal against a fluid benchmark in heavy traffic. These works provide principled scheduling for batching and scheduling under a single worker and do not address problems under multiple workers.

\subsubsection{Assignment and Scheduling Problem}

Our decode worker selection problem is an instance of the \emph{online assignment} paradigm: jobs (requests) arrive sequentially, some side information may be observed, and the decision maker must irrevocably assign each job to a resource so as to optimize a system objective. Foundationally, online assignment connects to online bipartite matching and budgeted allocation (AdWords): the classical RANKING algorithm achieves the optimal $(1-1/e)$ competitive ratio in adversarial arrival models for bipartite matching, while primal--dual methods underlie many extensions 
\citep{karp1990optimal,mehta2007adwords,devanur2009adwords,mehta2013online}. Subsequent work generalizes from linear objectives to concave returns and to learning or estimation within the primal--dual framework, as well as fairness-aware variants \citep{jaillet2014online,gallego2015online,ma2017online,wang2017online,ma2023fairness}. 
Beyond the classic “offline resources fixed, online jobs arrive” setting, fully online models allow both sides to arrive and sharpen competitive limits \citep{huang2020fully}. Parallel lines model task heterogeneity and worker qualities in crowdsourcing markets, formalizing how partial information at arrival influences assignment quality \citep{ho2012online}. Taken together, these literatures provide baseline guarantees for irrevocable decisions under uncertainty, typically in profit/throughput objectives and without per-step synchronization.

A newer strand studies \emph{reusable resource allocation}, where an assigned resource is occupied for some duration and then returns to the pool; this captures ridesharing, healthcare beds, and inventory with returns \citep{sumita2022online}. Recent results establish near-optimal guarantees in adversarial or stochastic arrivals by exploiting fluid or learning approximations and carefully designed dual controls \citep{goyal2020online,zhang2022online,huo2022online,ao2024two}. These models are closer to our setting (decode workers are reusable) but still differ in two crucial ways: (i) they optimize cumulative reward (linear) while our objective is the load balancing (non-linear); and (ii) they typically assume either known or stationary service-time models, whereas decode workloads are unknown at admission and evolve token-by-token. Thus, classical online assignment and reusable-resource frameworks inform admission/routing under uncertainty, but they do not directly address data-parallel decode with sticky assignments and synchronous barriers that gate progress each step.

The classical \emph{online scheduling} literature studies how a decision maker allocates processing time to jobs that arrive sequentially, aiming to optimize latency/throughput objectives under limited lookahead \citep{chen1998review,brucker1999resource,xing2000parallel,handbookSched,allahverdi2008survey,kong2013scheduling,mak2015appointment,beyondWorstCase}. Two strands are especially relevant to LLM inference on a \emph{single} engine. First, \emph{batch scheduling} groups jobs to amortize setup or memory costs and to raise effective utilization; models and algorithms range from static batch formation to fully online batching with competitive guarantees \citep{brucker1998scheduling,chen2008logistics,lucier2013efficient,im2013online,li2020online}. Second, \emph{precedence-constrained} scheduling captures intra-job dependencies; here each task must respect a partial order, and objectives include makespan and total weighted completion time \citep{schedPrecedence,precSchedSchabanel,precSchedBen,precSchedAnupam}. LLM decoding on a single worker fits this mold: tokens within a request obey a strict chain precedence (each token depends on its prefix), batching trades latency for throughput, and memory (KV-cache) constraints couple sequencing and admission. Results from these lines provide principled policies for admission, batching, and service order when processing times are known or learnable, but they operate at the level of one engine and do not account for cross-worker synchronization.

Multi-processor scheduling generalizes to identical/related/unrelated machines and mixes preemptive and non-preemptive models \citep{handbookSched,brucker1999resource,xing2000parallel,kong2013scheduling}. Canonical approaches (e.g., list scheduling, longest-processing-time rules, or migration-based balancing) assume either known processing times or allow job preemption/migration to correct early placements; performance is typically analyzed for makespan or flow-time objectives. These models illuminate when per-machine queues equalize and when migration is essential, yet they remain misaligned with LLM decode in two ways: (i) unknown and evolving processing times undermine size-aware or learned dispatching; (ii) sticky assignments remove the main lever used to fix early mistakes. Classic multi-worker scheduling therefore informs local ordering and (where allowed) migration, but it does not address the barrier-synchronized, unknown-size, non-preemptive regime that governs data-parallel decoding; this gap motivates our universal load-balancing principle.


\subsubsection{Online Algorithms under Adversarial Arrivals}

An \emph{online algorithm} chooses actions sequentially as inputs arrive, without seeing the future; an \emph{adversarial arrival model} allows the input sequence to be chosen by an adversary, often either \emph{oblivious} (fixes the entire sequence before seeing the algorithm’s random bits) or \emph{adaptive} (may react to the algorithm’s past random choices). Performance is assessed by competitive analysis: finding out the largest gap between the performance of the online algorithm and the hindsight optimal algorithm (which has full access to all arrivals in advance).

Beyond our online assignment domain, a broad literature studies online algorithms under adversarial arrivals. 
In online resource allocation, booking-limit and protection-level policies \citep{ball2009toward,golrezaei2023online}, together with primal--dual methods \citep{agrawal2014dynamic,li2020simple,li2022online,jiang2025online}, yield constant or logarithmic competitive ratios for covering/packing and related network-allocation problems \citep{buchbinder2009online}. 
In online caching and paging, classic results show \emph{least-recently-used} (LRU) is \(k\)-competitive, and randomized marking schemes achieve \(O(\log k)\) competitiveness against an oblivious adversary \citep{sleator1985amortized,fiat1991competitive,lykouris2021competitive}. 
The \(k\)-server problem and metrical task systems capture stateful movement on metric spaces; the work-function algorithm attains \((2k{-}1)\)-competitiveness for \(k\)-server on general metrics \citep{koutsoupias1995k,borodin1992optimal}. 
In online convex optimization, regret-minimizing methods (e.g., online gradient and mirror descent) guarantee \(O(\sqrt{T})\) adversarial regret \citep{zinkevich2003online}. 
In online facility location, Meyerson’s randomized algorithm is \(O(\log n)\)-competitive in random-order models, whereas under oblivious adversaries the tight rate is \(\Theta(\log n/\log\log n)\) \citep{meyerson2001online,fotakis2008competitive}. 
Our analysis likewise adopts an adversarial-arrival viewpoint—inputs (arrivals and lengths) are determined by an oblivious adversary—and establishes competitive optimality for decode-stage load balancing under synchronous barriers.

\section{Baseline Policy: First-Come-First-Serve (\FCFS)} \label{append:fcfs}

In this subsection, we formalize the baseline policy used throughout the paper: \emph{First-Come-First-Serve} (\FCFS). \FCFS is widely deployed in real LLM serving systems' scheduler \citep{kwon2023efficient,zheng2024sglang}. 

At each step $k$, the scheduler maintains a waiting queue $R_{\text{wait}}(k)$ that stores all requests that have arrived by time $k$ but have not yet been assigned to any worker. Under \FCFS, the scheduler scans workers in increasing index $g=1,2,\ldots,G$ and fills any available slots on each worker up to the batch capacity $B$. Requests are always taken from $R_{\text{wait}}(k)$ in strict arrival order (the head of the queue), and no request that arrived later may be scheduled before an earlier one. The capacity constraint $\lvert\mathcal{A}_{g}(k)\rvert \le B$ is always respected, and the loop terminates as soon as either all workers are full (no remaining capacity) or the waiting queue is empty. 

This policy is extremely fast and deterministic (for a fixed arrival sequence), and it does not rely on any information about the workload profile $W_i$ of individual requests. Precisely because it ignores workload heterogeneity, \FCFS can easily accumulate many heavy requests on a single worker while others receive only light ones, leading to large per-step imbalance and, consequently, substantial idle time at the barrier. The pseudocode used in our analysis and implementation is given in Algorithm~\ref{alg:fcfs}.

\begin{algorithm2e}[t]
\caption{\FCFS Baseline}\label{alg:fcfs}
\KwIn{Waiting queue \(R_{\text{wait}}(k)\) (arrival order), worker concurrency \(B\), active sets \(\{\mathcal{A}_{g}(k)\}_{g\in[G]}\)}
\KwOut{Assignments at step \(k\)}
\BlankLine
Compute remaining slots on each worker: \(\texttt{cap}[g] \leftarrow B - \lvert \mathcal{A}_{g}(k) \rvert\) for all \(g \in [G]\)\;
\If{\(\sum_{g \in [G]} \texttt{cap}[g] = 0 \ \mathbf{or}\ R_{\text{wait}}(k) = \emptyset\)}{\textbf{return}}
\While{\(R_{\!\text{wait}}(k)\) is not \(\texttt{None}\)}{
    Pop the oldest request \(i\) from \(R_{\!\text{wait}}(k)\)\;
    Select \(g^* \in \mathop{\arg\max}_g \texttt{cap}[g]\) with maximal free slots\;
    Assign \(i\) to worker \(g^*\) : \(\mathcal{A}_{g^*}(k) \leftarrow \mathcal{A}_{g^*}(k) \cup \{i\}\)\;
    \(\texttt{cap}[g^*] \leftarrow \texttt{cap}[g^*] - 1\)\;
}
\end{algorithm2e}

\section{Supplementary Materials for Section \ref{sec:proof}} \label{append:proof}

\subsection{Proof of Theorem \ref{thm:homog-o}} \label{append:proofhomo}

\begin{proof}[Proof of Theorem \ref{thm:homog-o}:]
We first reduce the statement to a single admission round, and then analyze \BF and \FCFS separately.

\medskip\noindent\textbf{Reduction to a single admission round.}
In the homogeneous-output setting $o_i = o$ for all $i$, the evolution naturally decomposes into \emph{rounds}. At the beginning of a round, the cluster is empty; $GB$ jobs are admitted (up to capacity), with $B$ jobs sent to each worker. All admitted jobs have the same decode length $o$, so they generate tokens in lockstep and terminate simultaneously after $o$ steps. No further jobs are admitted until all $GB$ jobs complete, at which point the next round begins.

Within a round, the instantaneous imbalance is the same at each of the $o$ steps, because all workers advance in parallel, and the per-step workloads merely translate in time. Thus, the time-average imbalance over a long horizon is proportional to the average imbalance \emph{per admission round}. Formally, let $\mathrm{Imb}^{(r)}(\pi)$ be the (step-wise) imbalance in round $r$ under a policy $\pi$ immediately after the $GB$ jobs of that round have been admitted. As argued above, this quantity also captures the imbalance at each step within the round, up to the fixed multiplicative factor $o$.

Under the prefill-length model in Section~\ref{sec:proof}, and given any arrival instance $I$ that satisfies the overloaded condition at every step, the multiset of prefill lengths entering each round consists of $GB$ i.i.d.\ draws from the distribution of $s$; different rounds use disjoint sets of jobs, and so the collections of $GB$ prompts across rounds are independent. In particular, the random variables $\{\mathrm{Imb}^{(r)}(\pi)\}_{r\ge1}$ are i.i.d.\ for each fixed policy $\pi$, and their law does not depend on the specific overloaded arrival instance $I$.

Let $\text{AvgImbalance}(\pi;I)$ denote the time-average imbalance under policy $\pi$ for arrival instance $I$ as in Section~\ref{sec:model}. Over a long horizon, the law of large numbers implies that, for any $\pi$ and any $I \in \mathcal{I}$,
\[
\lim_{K\to\infty}\mathbb{E}\bigl[\text{AvgImbalance}(\pi;I)\bigr]
\;=\;
\mathbb{E}\big[\mathrm{Imb}^{(1)}(\pi)\big],
\]
because each round contributes $o$ identical steps, and the rounds form an i.i.d.\ sequence. Consequently, the imbalance improvement ratio $\mathbf{IIR}$ simplifies to the ratio of one-round expectations:
\[
\mathbf{IIR}
\;=\;
\inf_{I\in\mathcal{I}}
\frac{\lim_{K\to\infty}\mathbb{E}[\text{AvgImbalance}(\FCFS;I)]}
     {\lim_{K\to\infty}\mathbb{E}[\text{AvgImbalance}(\BF;I)]}
\;=\;
\frac{\mathbb{E}\big[\mathrm{Imb}^{(1)}(\FCFS)\big]}{\mathbb{E}\big[\mathrm{Imb}^{(1)}(\BF)\big]}.
\]
Thus, it suffices to lower bound the ratio
\[
\frac{\mathbb{E}[\mathrm{Imb}(\FCFS)]}{\mathbb{E}[\mathrm{Imb}(\BF)]}
\]
in a \emph{single} admission round, where we write $\mathrm{Imb}(\pi)$ for $\mathrm{Imb}^{(1)}(\pi)$ for brevity.

\medskip\noindent\textbf{Step A: \BF is $s_{\max}$-balanced.}
Fix a realization of the $GB$ prefill lengths in the current round. Since prefill lengths are independent across jobs and arrival times do not carry additional information about their values, we may regard these as an unordered multiset of $GB$ numbers in $[0,s_{\max}]$. Any assignment that fills all devices in the round corresponds to a partition of this multiset into $G$ disjoint groups $S_1,\dots,S_G$, each of cardinality $|S_g|=B$. Define the per-device loads
\[
L_g := \sum_{s\in S_g} s,\qquad
M := \max_{1\le g\le G} L_g,\qquad
m := \min_{1\le g\le G} L_g.
\]
Because all jobs have identical decode length $o$, the per-step imbalance within the round is constant and proportional to the deviation of the $L_g$ from $M$. In particular, minimizing the sum of future imbalances over the round is equivalent to minimizing the maximum load $M$ across devices in that round. Let $\{S_g^\star\}_{g=1}^G$ be any assignment that minimizes $M$.

\begin{lemma}\label{clm:gap-le-smax}
For any optimal assignment $\{S_g^\star\}$ minimizing $M$, we have
\[
M - m \;\le\; s_{\max}.
\]
\end{lemma}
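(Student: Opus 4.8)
The plan is to prove the bound by contradiction using a local exchange (swap) argument on the partition $\{S_g^\star\}$. Suppose toward a contradiction that $M - m > s_{\max}$, and fix a device $p$ attaining the maximum load $L_p = M$ and a device $q$ attaining the minimum load $L_q = m$, so that $L_p - L_q > s_{\max} \ge 0$ and in particular $L_p > L_q$. The first step is to locate an ``inversion'' that can be swapped: I claim there exist jobs $x \in S_p^\star$ and $y \in S_q^\star$ with $x > y$. Indeed, if every element of $S_p^\star$ were at most every element of $S_q^\star$, then since $|S_p^\star| = |S_q^\star| = B$, pairing the two multisets in increasing order would give $L_p = \sum_{s\in S_p^\star} s \le \sum_{s\in S_q^\star} s = L_q$, contradicting $L_p > L_q$.

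The second step is to swap $x$ and $y$ between the two groups and bound the new loads. Writing $d := x - y$, the updated loads are $L_p' = L_p - d$ and $L_q' = L_q + d$, all other loads are unchanged, and both group sizes remain $B$ (the overloaded regime guarantees the pool can support such exchanges, but here we only reshuffle already-admitted jobs). Since $y \ge 0$ and $x \le s_{\max}$ we have $0 < d \le s_{\max}$, and because $d \le s_{\max} < L_p - L_q$ we obtain $L_p' = L_p - d < M$ as well as $L_q' = L_q + d < L_q + (L_p - L_q) = M$. Hence the swap leaves no device with load exceeding $M$, and strictly lowers the load of $p$.

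The main obstacle is the tie case, which is where the naive argument breaks down. If $p$ is the \emph{unique} maximizer, then after the swap the new maximum is strictly below $M$, contradicting the minimality of $M$, and we are done. When several devices attain $M$, the swap keeps the maximum at $M$ and does not immediately contradict optimality. I would resolve this by introducing the secondary potential $\Phi := \sum_{g=1}^G L_g^2$: a direct computation gives $\Delta\Phi = 2d\,\bigl(d - (L_p - L_q)\bigr) < 0$, since $d \le s_{\max} < L_p - L_q$, so the swap strictly decreases $\Phi$ while never increasing the maximum load. Selecting $\{S_g^\star\}$ at the outset to be, among all partitions that minimize $M$, one that \emph{also} minimizes $\Phi$, the swap then contradicts this choice, forcing $M - m \le s_{\max}$.

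I expect the tie-breaking to be the genuinely delicate point, because the per-round imbalance $G\cdot M - \sum_g L_g$ depends only on $M$ (the total load $\sum_g L_g$ being fixed by the realized prefill multiset), so distinct $M$-minimizers can have different gaps $M-m$ even though they are equally optimal for the imbalance objective. Consequently the bound $M - m \le s_{\max}$ is cleanest to establish for the canonical $\Phi$-minimizing optimizer above; from there the desired conclusion $\sum_g (M - L_g) \le (G-1)(M-m) \le (G-1)s_{\max}$ transfers to \emph{every} $M$-minimizing assignment, which is exactly what the subsequent bound on the expected \BF imbalance per round requires.
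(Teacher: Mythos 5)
Your proof is correct, and its core---locating an inversion $x>y$ between a maximally and a minimally loaded group and swapping---is exactly the paper's exchange argument. The two proofs differ only in the tie-breaking device: the paper restricts attention to an optimal assignment that is lexicographically minimal in the pair $\bigl(\max_g L_g,\ \#\{g: L_g = M\}\bigr)$ and argues the swap either lowers the maximum or reduces the number of maximizers, whereas you minimize the potential $\Phi=\sum_g L_g^2$ among $M$-minimizers and use $\Delta\Phi = 2d\bigl(d-(L_p-L_q)\bigr)<0$. These are interchangeable; both certify that a canonical optimizer has gap at most $s_{\max}$.

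The more substantive point is that your caution about the tie case is not pedantry: the lemma as literally stated (``for \emph{any} optimal assignment'') is false, and the paper's closing sentence---``therefore $M-m\le s_{\max}$ must hold for every optimal assignment''---does not follow from its lexicographic argument. Concretely, take $G=3$, $B=2$, $s_{\max}=5$ and prompts $\{5,5,5,5,1,1\}$: the partition $\{5,5\},\{5,5\},\{1,1\}$ attains the optimal maximum $M=10$ (pigeonhole forces some group to hold two $5$'s), yet its gap is $M-m=8>s_{\max}$. What both your $\Phi$-argument and the paper's lexicographic argument actually prove is that \emph{some} $M$-minimizer has gap at most $s_{\max}$. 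Your final transfer step is then the correct way to secure the downstream use: since the round imbalance $G\,M-\sum_g L_g$ depends only on $M$ and on the fixed total workload, every $M$-minimizer---in particular whichever assignment \BF outputs---has imbalance at most $(G-1)s_{\max}$, even if its own gap exceeds $s_{\max}$. The paper skips this step and instead overstates the lemma, so your version is the more careful one. One small wording nit: what transfers to every $M$-minimizer is the outer bound $\sum_g(M-L_g)\le (G-1)s_{\max}$, not the intermediate inequality $M-m\le s_{\max}$ (which, per the counterexample, can fail for non-canonical optimizers); your last sentence slightly conflates the two, but the intended and correct reading is the former.
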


\begin{proof}[Proof of Lemma \ref{clm:gap-le-smax}]
Suppose for contradiction that $M-m>s_{\max}$. Let $p\in\arg\max_g L_g$ and $q\in\arg\min_g L_g$, so $L_p=M$ and $L_q=m$. Because $L_p-L_q>M-m>s_{\max}$ and every element of $S_p^\star\cup S_q^\star$ lies in $[0,s_{\max}]$, there must exist $x\in S_p^\star$ and $y\in S_q^\star$ with $x>y$; otherwise we would have $\max S_p^\star \le \min S_q^\star$, which together with $|S_p^\star|=|S_q^\star|=B$ would imply
\[
L_p \;\le\; B\max_{s\in S_p^\star}s \;\le\; B\min_{s\in S_q^\star}s \;\le\; L_q,
\]
contradicting $L_p>L_q$.

Construct a new assignment by swapping $x$ and $y$ between devices $p$ and $q$:
\[
S_p' := (S_p^\star\setminus\{x\})\cup\{y\},\qquad
S_q' := (S_q^\star\setminus\{y\})\cup\{x\},
\]
and $S_r':=S_r^\star$ for all $r\notin\{p,q\}$. The new loads are
\[
L_p' = L_p - (x-y) < L_p,\qquad
L_q' = L_q + (x-y) \le L_q + s_{\max} < L_p,
\]
and $L_r'=L_r$ for $r\notin\{p,q\}$. Thus every new load $L_r'$ is strictly less than the old maximum $L_p=M$ except possibly those devices $r\neq p$ that already had $L_r=M$ before the swap. In particular, if $p$ was the unique maximizer, then
\[
\max_{g} L_g' \;<\; \max_{g} L_g \;=\; M,
\]
which contradicts the optimality of $\{S_g^\star\}$ as a minimizer of $M$.

If instead there were multiple devices achieving the maximum $M$ before the swap, then after the swap, device $p$ falls below $M$, while all other devices $r\neq p$ maintain their previous loads $L_r'\le M$. Hence
\[
\max_{g} L_g' \;=\; M,
\]
but the number of devices attaining this maximum strictly decreases. Consequently, starting from any optimal assignment that is also \emph{lexicographically minimal} in the pair 
\[
\big(\,\max_{g} L_g,\;\#\{g:L_g=M\}\,\big),
\]
we cannot have $M-m>s_{\max}$, since the above swap produces another assignment with either a smaller maximum load or the same maximum load but fewer maximizers. This contradicts lexicographic minimality. Therefore $M-m\le s_{\max}$ must hold for every optimal assignment.
\end{proof}

For any assignment, the round-level imbalance is
\[
\mathrm{Imb} \;=\; \sum_{g=1}^G (M - L_g).
\]
For the optimal assignment $\{S_g^\star\}$, choose $p\in\arg\max_g L_g$ and note that $L_p=M$. Then
\[
\mathrm{Imb}
\;=\;
\sum_{g\ne p} (M - L_g)
\;\le\;
\sum_{g\ne p} (M - m)
\;=\; (G-1)(M-m)
\;\le\; (G-1)\,s_{\max}
\]
by Lemma~\ref{clm:gap-le-smax}. Since this bound holds for every realization of the $GB$ prompt lengths in the round, taking expectations yields
\begin{equation}\label{eq:BF-upper}
\mathbb{E}\big[\mathrm{Imb}(\BF)\big]
\;\le\; (G-1)\,s_{\max}.
\end{equation}

\medskip\noindent\textbf{Step B: \FCFS lower bound.}
We now derive a lower bound on the expected imbalance under \FCFS in one round. Because of the overloaded assumption and size-agnostic \FCFS admission, the $GB$ jobs admitted in a round can be viewed as an unordered multiset of i.i.d.\ prefill lengths with law $s$, after which \FCFS fills the $G$ devices in arrival order. Conditioned on this multiset, the distribution of the $B$ prompts assigned to each device is that of $B$ samples drawn uniformly without replacement from the multiset. Since the underlying pool itself comprises $GB$ i.i.d.\ draws from $s$, a standard exchangeability argument implies that the prompts $(s_{g,j})_{j=1}^B$ on each device $g$ can be represented as $B$ i.i.d.\ draws from $s$, and that the collections $\{(s_{g,j})_{j=1}^B : g\in[G]\}$ are independent across $g$.

Let
\[
S_g := \sum_{j=1}^B s_{g,j},
\qquad
\mu := \mathbb{E}[s],
\qquad
\sigma_s^2 := \mathrm{Var}(s)>0.
\]
Then $S_1,\dots,S_G$ are i.i.d.\ with
\[
\mathbb{E}[S_g] = \mu B,\qquad \mathrm{Var}(S_g) = \sigma_s^2 B.
\]
Define the standardized sums
\[
Z_g := \frac{S_g - \mu B}{\sigma_s\sqrt{B}},
\qquad g=1,\dots,G.
\]
Because $s\in[0,s_{\max}]$, the random variable $s$ has a bounded third absolute moment, and the Berry--Esseen theorem (for sums of i.i.d.\ real variables) yields the existence of a universal constant $C_{\mathrm{BE}}$ such that
\begin{equation}\label{eq:BE}
\sup_{x\in\mathbb{R}}
\Big|
\Pr(Z_g\le x) - \Phi(x)
\Big|
\;\le\;
\frac{C_{\mathrm{BE}}}{\sqrt{B}},
\end{equation}
where $\Phi$ is the standard normal distribution function.

Set $z_G := \sqrt{\frac{\log G}{2}}$ and consider the threshold $z_G$.  For a standard normal $Z\sim N(0,1)$ and large $G$
, the Gaussian tail bound implies 
\[
\Pr(Z\ge z_G)
\;\ge\;
\frac{1}{\sqrt{2\pi}}\,\Big(\frac{1}{z_G}-\frac{1}{z_G^3}\Big)\,\exp\!\Big(-\frac{z_G^2}{2}\Big)
\;\ge\;
\frac{1}{2\sqrt{2\pi}\,z_G}\,G^{-1/4}.
\]
Combining this with~\eqref{eq:BE}, we obtain
\[
\Pr(Z_g\ge z_G)
\;\ge\;
\frac{1}{2\sqrt{2\pi}\,z_G}\,G^{-1/4}\;-\;\frac{C_{\mathrm{BE}}}{\sqrt{B}}.
\]
Since $\sqrt{G}\log G=o(B)$, there exists $B_0(G)$ such that for all $B\ge B_0(G)$,
\[
\frac{C_{\mathrm{BE}}}{\sqrt{B}}
\;\le\;
\frac{1}{2}\cdot \frac{1}{2\sqrt{2\pi}\,z_G}\,G^{-1/4},
\]
and hence
\begin{equation}\label{eq:pG-lb}
\Pr(Z_g\ge z_G)
\;\ge\;
\frac{1}{4\sqrt{2\pi}\,z_G}\,G^{-1/4}
\;=:\; p_G.
\end{equation}
By independence across $g$, we have
\[
\Pr\Big(\max_{1\le g\le G} Z_g \ge z_G\Big)
\;\ge\;
1 - \big(1-p_G\big)^G.
\]
Using the inequality $1-x\le e^{-x}$ for $x\ge0$, we find
\[
1-\big(1-p_G\big)^G
\;\ge\;
1-\exp(-G p_G).
\]
From~\eqref{eq:pG-lb}, $G p_G \ge \frac{1}{4\sqrt{2\pi}\,z_G}G^{3/4}$, which is bounded away from zero for every fixed $G\ge2$. Thus, there exists a universal constant $c_1\in(0,1)$ (for instance $c_1:=1-e^{-1/(4\sqrt{2\pi})}$) such that
\begin{equation}\label{eq:max-Z-lb}
\Pr\Big(\max_{1\le g\le G} Z_g \ge z_G\Big)
\;\ge\; c_1
\qquad\text{for all }\log G\ge4\text{ and all }B\ge B_0(G).
\end{equation}

We now translate this into a lower bound on $\mathbb{E}[\max_g S_g]$.  For any $t>0$, an elementary inequality yields
\[
\mathbb{E}\Big[\max_{1\le g\le G} S_g\Big]
=
\mu B + \mathbb{E}\Big[\max_{1\le g\le G}(S_g-\mu B)\Big]
\;\ge\;
\mu B + t\;\Pr\Big(\max_g(S_g-\mu B)\ge t\Big).
\]
Choose $t := \sigma_s\sqrt{B} z_G$, so that the event $\{\max_g(S_g-\mu B) \ge t\}$ coincides with $\{\max_g Z_g \ge z_G\}$. Combining with~\eqref{eq:max-Z-lb}, we obtain
\[
\mathbb{E}\Big[\max_g S_g\Big]
\;\ge\;
\mu B + \sigma_s\sqrt{B} z_G\,c_1
\;=\;
\mu B + \frac{c_1}{\sqrt{2}}\;\sigma_s\,\sqrt{B\log G}.
\]
Since $\mathbb{E}[\sum_{g=1}^G S_g] = G\mu B$, it follows that the expected imbalance in the round under \FCFS satisfies
\begin{equation}\label{eq:FCFS-lb}
\mathbb{E}\big[\mathrm{Imb}(\FCFS)\big]
=
G\,\mathbb{E}\Big[\max_g S_g\Big] - \mathbb{E}\Big[\sum_{g=1}^G S_g\Big]
\;\ge\;
c_2\,G\,\sigma_s\,\sqrt{B\log G},
\end{equation}
for all $B\ge B_0(G)$, where $c_2:=c_1/\sqrt{2}>0$ is a universal constant.

\medskip\noindent\textbf{Conclusion.}
Combining the upper bound~\eqref{eq:BF-upper} for \BF with the lower bound~\eqref{eq:FCFS-lb} for \FCFS, we find that for all $B\ge B_0(G)$,
\[
\frac{\mathbb{E}[\mathrm{Imb}(\FCFS)]}{\mathbb{E}[\mathrm{Imb}(\BF)]}
\;\ge\;
\frac{c_2\,G\,\sigma_s\,\sqrt{B\log G}}{(G-1)\,s_{\max}}
\;=\;
\Big(c_2\,\frac{\sigma_s}{s_{\max}}\Big)\,\sqrt{B\log G}\cdot \frac{G}{G-1}.
\]
By the non-degeneracy assumption in Section~\ref{sec:proof}, the ratio $\sigma_s/s_{\max}$ is bounded below by $\kappa_0>0$.  Setting $c:=c_2$ and recalling that $\mathbf{IIR}$ equals the ratio of one-round expectations, we conclude that
\[
\mathbf{IIR}
\;\ge\; c\,\kappa_0\,\sqrt{B\log G}\cdot\frac{G}{G-1},
\]
which is the claimed lower bound~\eqref{eq:IR-homog-lb}.  This completes the proof.
\end{proof}

\subsection{Proof of Theorem \ref{thm:inhomog-o}} \label{append:proofinhomog-o}

\begin{proof}[Proof of Theorem \ref{thm:inhomog-o}:]
We work on a probability space $(\Omega,\mathcal{F},\mathbb{P})$ that supports the arrival instance $\mathcal{I}$, the prefill lengths $\mathcal D_{\mathrm{prefill}}$, the geometric decode lengths $\mathcal D_{\mathrm{decode}} = \text{Geo}(p)$, and the completion indicators.  For each step $k\in [K]$, we define the \emph{natural filtration}
\[
\mathcal{F}_k
\ :=\ \sigma\Big(\text{all arrivals, prefill, decode lengths, completions, and \BF assignments up to step }k\Big),
\]
and write $\mathbb{E}[\cdot\mid\mathcal{F}_k]$ for conditional expectations.

At the end of step $k-1$ (after \BF admission at that step), each device $g\in\{1,\dots,G\}$ holds exactly $B$ active requests, indexed by $i=1,\dots,B$, with sizes $a_{g,i}(k-1)\in\mathbb{N}$ (including prompt tokens and already generated output tokens).  Let
\[
L_g(k-1)\ :=\ \sum_{i=1}^{B} a_{g,i}(k-1),
\qquad
D(k-1)\ :=\ \max_{g} L_g(k-1)\;-\;\min_{g} L_g(k-1),
\]
and
\[
\Lambda(k)\ :=\ \max_{g,i} a_{g,i}(k-1),
\]
be, respectively, the device loads, the inter-device gap, and the one-step envelope at the end of step $k-1$.  These quantities are $\mathcal{F}_{k-1}$-measurable.

At the beginning of step $k$, every active request grows by exactly one token, then completes independently with probability $p$ at the end of the step. Let $\xi_{g,i}(k)\sim\mathrm{Bernoulli}(p)$ be the completion indicator of slot $(g,i)$ in step $k$, independent of $\mathcal{F}_{k-1}$ and across $(g,i)$.  The number of completions (free slots) on device $g$ in step $k$ is
\[
c_g(k)\ :=\ \sum_{i=1}^{B} \xi_{g,i}(k)
\ \sim\ \mathrm{Binomial}(B,p),
\]
independent across $g$ and of $\mathcal{F}_{k-1}$, and we write $C_k:=\sum_{g=1}^{G}c_g(k)$ for the total number of completions.

Before admitting new requests at step $k$, the total \emph{pre-admission} load on device $g$ is
\[
a_g^{\mathrm{pre}}(k)\ :=\ L_g(k-1)+B-R_g(k),
\qquad
R_g(k)\ :=\ \sum_{i=1}^{B} \xi_{g,i}(k)\big(a_{g,i}(k-1)+1\big),
\]
where $R_g(k)$ is the load removed by completions (each surviving job contributes $+1$ token).  The corresponding pre-admission spread is
\[
A^{\mathrm{pre}}(k)\ :=\ \max_{g} a_g^{\mathrm{pre}}(k)\;-\;\min_{g} a_g^{\mathrm{pre}}(k).
\]
After \BF admission at step $k$, we write $L_g(k)$ for the post-admission loads and
\[
D(k)\ :=\ \max_{g} L_g(k)\;-\;\min_{g} L_g(k)
\]
for the post-admission gap.  Finally, we set
\[
\Lambda(k+1)\ :=\ \max_{g,i} a_{g,i}(k),
\]
the one-step envelope at the end of step $k$.  By construction, $L_g(k)$, $D(k)$, and $\Lambda(k+1)$ are $\mathcal{F}_k$-measurable.

\subsubsection*{Part 1 (Upper Bound of Imbalance of \BF)}

\paragraph{Step 1.A: Separation and one-step drift}

The first lemma is a geometric statement about the \BF assignment at step $k$.

\begin{lemma}[Separation and gap reduction]\label{lem:sep-gap-rigorous}
Fix $\delta\in\big(0,(2s_{\max}-1)^{-1}\big)$ and set
\[
\gamma\ :=\ 1-\delta(2s_{\max}-1) \ \in\ (0,1).
\]
Then, for every $k\ge1$,
\begin{equation}\label{eq:sep-conditional}
\mathbb{P}\Big(D(k)\ \le\ \max\big\{s_{\max},\,A^{\mathrm{pre}}(k)-\gamma Bp\big\}\,\Big|\,\mathcal{F}_{k-1}\Big)
\ \ge\ 1-2G\exp\Big(-\tfrac{\delta^2}{3}Bp\Big).
\end{equation}
\end{lemma}

\begin{proof}[Proof of Lemma \ref{lem:sep-gap-rigorous}:]
Fix a step $k$ and condition on $\mathcal{F}_{k-1}$.  Given $\mathcal{F}_{k-1}$, the collection of sizes $\{a_{g,i}(k-1)\}$ and loads $\{L_g(k-1)\}$ is deterministic, and the randomness at step $k$ comes only from the completion indicators $\{\xi_{g,i}(k)\}$ and the prompt lengths of newly admitted requests. Let
\[
E_{\mathrm{cnt}}(k)\ :=\ \big\{|c_g(k)-Bp|\le \delta Bp\ \ \text{for all }g\in\{1,\dots,G\}\big\}.
\] 
The event $E_{\mathrm{cnt}}(k)$ depends only on $\{\xi_{g,i}(k)\}$, and thus is independent of $\mathcal{F}_{k-1}$.  By the Chernoff bound for binomial variables,
\[
\mathbb{P}\big(E_{\mathrm{cnt}}(k)^{c}\mid\mathcal{F}_{k-1}\big)
=
\mathbb{P}\big(E_{\mathrm{cnt}}(k)^{c}\big)
\ \le\ 2G\exp\Big(-\tfrac{\delta^2}{3}Bp\Big).
\]

On the event $E_{\mathrm{cnt}}(k)$, we have
\[
c_{\min}(k)\ :=\ \min_{g} c_g(k)\ \ge\ (1-\delta)Bp,
\qquad
c_{\max}(k)\ :=\ \max_{g} c_g(k)\ \le\ (1+\delta)Bp.
\]
Fix an outcome $\omega\in E_{\mathrm{cnt}}(k)$ and consider deterministically the \BF assignment at step $k$ on that outcome.  Let $S_g(k)$ denote the multiset of prompt sizes of the new requests admitted to device $g$ at step $k$; by construction $|S_g(k)|=c_g(k)$, and
\[
L_g(k)
\;=\;
a_g^{\mathrm{pre}}(k) + \sum_{s\in S_g(k)} s.
\]

We first establish a separation property for the minimizer of $D(k)$.  For each feasible family of multisets $\{S_g(k)\}_{g=1}^G$ with $|S_g(k)|=c_g(k)$ and using only jobs from the waiting pool, define
\[
L_g(k)\ :=\ a_g^{\mathrm{pre}}(k) + \sum_{s\in S_g(k)} s,\qquad D(k)\ :=\ \max_g L_g(k)-\min_g L_g(k).
\]
Among all such assignments, \BF chooses one that minimizes $D(k)$.  Let $\{S_g^\star(k)\}$ be such a minimizer, and let
\[
L_g^\star(k)\ := a_g^{\mathrm{pre}}(k) + \sum_{s\in S_g^\star(k)} s,
\qquad
D^\star(k)\ := \max_g L_g^\star(k)-\min_g L_g^\star(k),
\]
denote the corresponding loads and gap.  If $D^\star(k)\le s_{\max}$, then $D(k)=D^\star(k)\le s_{\max}$ and the claimed inequality~\eqref{eq:sep-conditional} trivially holds on this outcome.  Hence it suffices to analyze the case $D^\star(k)>s_{\max}$.

Let $p\in\arg\max_{g}L_g^\star(k)$ and $q\in\arg\min_{g}L_g^\star(k)$, so that
\[
L_p^\star(k)-L_q^\star(k)\ =\ D^\star(k)\ >\ s_{\max}.
\]
We first show that, under this strict gap and the optimality of $\{S_g^\star(k)\}$, the assignment must be ``separated'' between any such heavy--light pair $(p,q)$. That is, for the sets $S_p^\star(k)$ and $S_q^\star(k)$ there exists an integer $\theta\in\{0,\dots,s_{\max}-1\}$ such that
\[
s\le\theta\quad\text{for all }s\in S_p^\star(k),
\qquad
s\ge\theta+1\quad\text{for all }s\in S_q^\star(k).
\]

We prove by contradiction. Suppose not, then there exist $x\in S_p^\star(k)$ and $y\in S_q^\star(k)$ with $x>y$.  Consider the assignment $\{\widetilde S_g(k)\}$ obtained by swapping $x$ and $y$ between devices $p$ and $q$,
\[
\widetilde S_p(k) := (S_p^\star(k)\setminus\{x\})\cup\{y\},
\qquad
\widetilde S_q(k) := (S_q^\star(k)\setminus\{y\})\cup\{x\},
\]
and $\widetilde S_r(k):=S_r^\star(k)$ for all $r\notin\{p,q\}$.  This swap preserves $|S_g(k)|=c_g(k)$ for every $g$ and uses only jobs from the waiting pool (by the overloaded definition \ref{def:overloaded}, the pool contains sufficiently many jobs of each size to support such exchanges). The new loads are
\[
\widetilde L_p(k)
\;=\; a_p^{\mathrm{pre}}(k) + \sum_{s\in\widetilde S_p(k)} s
\;=\; L_p^\star(k) - (x-y)
\;<\; L_p^\star(k),
\]
and
\[
\widetilde L_q(k)
\;=\; a_q^{\mathrm{pre}}(k) + \sum_{s\in\widetilde S_q(k)} s
\;=\; L_q^\star(k) + (x-y)
\;\le\; L_q^\star(k) + s_{\max}
\;<\; L_p^\star(k),
\]
using $x-y\le s_{\max}$ and $L_p^\star(k)-L_q^\star(k)>s_{\max}$.  All other loads $\widetilde L_r(k)$ equal $L_r^\star(k)\le L_p^\star(k)$.

If $p$ is the unique maximizer of $\{L_g^\star(k)\}$, then
\[
\max_g \widetilde L_g(k) < \max_g L_g^\star(k),
\]
contradicting the minimality of $D^\star(k)$.  If there are multiple maximizers, then the swap strictly reduces the number of devices attaining the maximum load while keeping the maximum value at most $L_p^\star(k)$.  Repeating this argument finitely many times, we either strictly reduce the maximum load or obtain a contradiction to the choice of $\{S_g^\star(k)\}$ as a minimizer with minimal number of maximizers.  Hence no such pair $(x,y)$ can exist, and the desired threshold $\theta$ exists. 

Condition on $E_{\mathrm{cnt}}(k)$ and fix an outcome with $D^\star(k)>s_{\max}$. Using the separation property, for device $p$ every $s\in S_p^\star(k)$ satisfies $s\le\theta$, while for $q$ every $s\in S_q^\star(k)$ satisfies $s\ge\theta+1$.  Therefore,
\[
\sum_{s\in S_p^\star(k)} s \;\le\; \theta\,c_p(k),
\qquad
\sum_{s\in S_q^\star(k)} s \;\ge\; (\theta+1)\,c_q(k),
\]
and hence
\begin{align*}
D^\star(k)
&=\ L_p^\star(k)-L_q^\star(k) \\
&=\ \bigg(a_p^{\mathrm{pre}}(k) + \sum_{s\in S_p^\star(k)} s\bigg) -
     \bigg(a_q^{\mathrm{pre}}(k) + \sum_{s\in S_q^\star(k)} s\bigg)\\
&\le\ a_p^{\mathrm{pre}}(k)-a_q^{\mathrm{pre}}(k) + \theta c_p(k) - (\theta+1)c_q(k).
\end{align*}
Using $c_p(k)\le c_{\max}(k)$ and $c_q(k)\ge c_{\min}(k)$, we obtain
\begin{align*}
\theta c_p(k) - (\theta+1)c_q(k)
\; &\le\;
\theta c_{\max}(k) - (\theta+1)c_{\min}(k)
\;\le\;
\theta(1+\delta)Bp - (\theta+1)(1-\delta)Bp \\&= Bp\big(\theta(1+\delta) - (\theta+1)(1-\delta)\big)
= Bp\big(-1 + \delta(2\theta+1)\big)
\\&\le Bp\big(-1 + \delta(2s_{\max}-1)\big)
= -\gamma Bp,
\end{align*}
where we used $\theta\le s_{\max}-1$ and the definition of $\gamma$.  Consequently,
\[
D^\star(k)
\le
a_p^{\mathrm{pre}}(k)-a_q^{\mathrm{pre}}(k) - \gamma Bp
\le
A^{\mathrm{pre}}(k) - \gamma Bp.
\]
Combining this with the trivial bound $D^\star(k)\le s_{\max}$ in the case $D^\star(k)\le s_{\max}$, and recalling that $D(k)=D^\star(k)$ under \BF, we obtain, for every outcome in $E_{\mathrm{cnt}}(k)$,
\[
D(k)\ \le\ \max\big\{s_{\max},\,A^{\mathrm{pre}}(k)-\gamma Bp\big\}.
\]

Since $E_{\mathrm{cnt}}(k)$ is independent of $\mathcal{F}_{k-1}$, taking conditional probability given $\mathcal{F}_{k-1}$ yields
\[
\mathbb{P}\Big(D(k)\ \le\ \max\big\{s_{\max},\,A^{\mathrm{pre}}(k)-\gamma Bp\big\}\,\Big|\,\mathcal{F}_{k-1}\Big)
\;\ge\;
\mathbb{P}\big(E_{\mathrm{cnt}}(k)\,\big|\,\mathcal{F}_{k-1}\big)
\;\ge\;
1-2G\exp\Big(-\tfrac{\delta^2}{3}Bp\Big),
\]
as claimed.
\end{proof}

\paragraph{Step 1.B: Concentration of pre-admission spread}

We now control $A^{\mathrm{pre}}(k)$ around its conditional baseline $(1-p)D(k-1)$.

Define the range functional $r(x) := \max_g x_g - \min_g x_g$ for $x\in\mathbb{R}^G$.  For step $k$, recall that
\[
R_g(k) \;=\; \sum_{i=1}^{B}\xi_{g,i}(k)\big(a_{g,i}(k-1)+1\big),
\qquad
\mu_g(k)\ :=\ \mathbb{E}\big[R_g(k)\,\big|\,\mathcal{F}_{k-1}\big]
= p\sum_{i=1}^{B}(a_{g,i}(k-1)+1),
\]
and set
\[
Z_g(k) \;:=\ R_g(k) - \mu_g(k),\qquad
Z(k) := (Z_g(k))_{g=1}^G.
\]
The one-step envelope
\[
\Lambda(k) := \max_{g,i} a_{g,i}(k-1)
\]
is $\mathcal{F}_{k-1}$-measurable and bounds every summand in $R_g(k)$.

\begin{lemma}[One-step concentration of $A^{\mathrm{pre}}(k)$]\label{lem:HP-calibrated-rigorous}
For each step $k\ge1$ and every $\mathcal{F}_{k-1}$-measurable $\Lambda(k)$, define
\[
\alpha_B := 3\log(GB),
\qquad
u_A(k) := 8\,\Lambda(k)\,\max\Big\{\sqrt{p(1-p)\,B\,\alpha_B},\ \alpha_B\Big\}.
\]
Then, for all $k\ge1$,
\begin{equation}\label{eq:HP-calibrated-ineq}
\mathbb{P}\Big(A^{\mathrm{pre}}(k)-(1-p)D(k-1) > u_A(k)\,\Big|\,\mathcal{F}_{k-1}\Big)
\ \le\ 2G\,e^{-\alpha_B}\ \le\ 2\,G^{-2}B^{-3}.
\end{equation}
\end{lemma}

\begin{proof}[Proof of Lemma \ref{lem:HP-calibrated-rigorous}:]
Given $\mathcal{F}_{k-1}$, the vector of loads $L(k-1) = (L_g(k-1))_{g=1}^G$ is deterministic, and
\[
a_g^{\mathrm{pre}}(k)
= L_g(k-1)+B-R_g(k)
= (1-p)\big(L_g(k-1)+B\big)\;-\;Z_g(k).
\]
Let $X_g := (1-p)(L_g(k-1)+B)$ and $X := (X_g)_{g=1}^G$.  Then
\[
A^{\mathrm{pre}}(k) = r\big(a^{\mathrm{pre}}(k)\big) = r(X - Z(k)).
\]
For any vectors $x,z\in\mathbb{R}^G$, comparing coordinates achieving the maximum and minimum of $x-z$ gives the elementary inequality
\[
r(x-z) \;\le\; r(x) + 2\|z\|_\infty,
\]
where $\|z\|_\infty := \max_g |z_g|$.  Applying this with $x=X$ and $z=Z(k)$, and using $r(X)=(1-p)D(k-1)$, we obtain
\[
A^{\mathrm{pre}}(k) - (1-p)D(k-1)
= r(X - Z(k)) - r(X)
\ \le\ 2\|Z(k)\|_\infty
= 2\max_g |Z_g(k)|.
\]
Thus, for any $u>0$,
\[
\mathbb{P}\Big(A^{\mathrm{pre}}(k)-(1-p)D(k-1) > u\,\Big|\,\mathcal{F}_{k-1}\Big)
\ \le\ 
\sum_{g=1}^G \mathbb{P}\Big(|Z_g(k)|>u/2\,\Big|\,\mathcal{F}_{k-1}\Big).
\]

It remains to control each tail $\mathbb{P}(|Z_g(k)|>u/2\mid\mathcal{F}_{k-1})$.  Conditional on $\mathcal{F}_{k-1}$, the variables $\{(\xi_{g,i}(k)-p)(a_{g,i}(k-1)+1)\}_{i=1}^B$ are independent, mean-zero, and bounded in absolute value by $\Lambda(k)+1\le\,2\Lambda(k)$.  Their conditional variance satisfies
\[
\mathrm{Var}\Big(\sum_{i=1}^B (\xi_{g,i}(k)-p)(a_{g,i}(k-1)+1)\,\Big|\,\mathcal{F}_{k-1}\Big)
=
\sum_{i=1}^B p(1-p)\,(a_{g,i}(k-1)+1)^2
\le
p(1-p)\,B\,4\Lambda(k)^2.
\]
By Bernstein's inequality for bounded summands, for every $g$ and $u>0$,
\[
\mathbb{P}\Big(|Z_g(k)|>u/2\,\Big|\,\mathcal{F}_{k-1}\Big)
\ \le\ 
2\exp\left(
-\min\left\{
\frac{(u/2)^2}{16\,p(1-p)\,B\,\Lambda(k)^2},\ 
\frac{3\,(u/2)}{8\,\Lambda(k)}
\right\}
\right).
\]
Summing over $g=1,\dots,G$ and recalling the earlier inequality, we obtain
\[
\mathbb{P}\Big(A^{\mathrm{pre}}(k)-(1-p)D(k-1) > u\,\Big|\,\mathcal{F}_{k-1}\Big)
\ \le\ 
2G\exp\left(
-\min\left\{
\frac{(u/2)^2}{16\,p(1-p)\,B\,\Lambda(k)^2},\ 
\frac{3\,(u/2)}{8\,\Lambda(k)}
\right\}
\right).
\]
Choosing
\[
u = u_A(k) = 8\,\Lambda(k)\,\max\Big\{\sqrt{p(1-p)\,B\,\alpha_B},\ \alpha_B\Big\},
\]
forces
\[
\frac{(u/2)^2}{16\,p(1-p)\,B\,\Lambda(k)^2}\ \ge\ \alpha_B,
\qquad
\frac{3\,(u/2)}{8\Lambda(k)}\ \ge\ \alpha_B,
\]
so that the minimum in the exponent is at least $2\alpha_B$.  This yields
\[
\mathbb{P}\Big(A^{\mathrm{pre}}(k)-(1-p)D(k-1) > u_A(k)\,\Big|\,\mathcal{F}_{k-1}\Big)
\ \le\ 2G e^{-\alpha_B}.
\]
With $\alpha_B=3\log(GB)$ we have $2G e^{-\alpha_B} \le 2G^{-2}B^{-3}$, completing the proof.
\end{proof}

\paragraph{Step 1.C: Average gap bound for \BF}

We now combine Lemmas~\ref{lem:sep-gap-rigorous} and~\ref{lem:HP-calibrated-rigorous} to obtain a uniform bound on the long-run average of $D(k)$.  Recall that at each step, the instantaneous imbalance under any policy satisfies
\[
\mathrm{Imbalance}(k;\BF)
= \sum_{g=1}^{G}\big(\max_h L_h(k) - L_g(k)\big)
\ \le\ (G-1)\Big(\max_h L_h(k) - \min_h L_h(k)\Big)
= (G-1)\,D(k).
\]

\begin{lemma}[Average gap under geometric outputs for \BF]\label{lem:avg-imbalance-geom-rigorous}
Let $\delta:=1/(4s_{\max})$, set $\gamma := 1-\delta(2s_{\max}-1)\in(0,1)$, and let $\alpha_B:=3\log(GB)$.  Define
\[
x_\star(B,G) := \Big\lceil \frac{3}{p}\,\log(GB) \Big\rceil,
\qquad
U_B := 8\,(s_{\max}+x_\star(B,G))\,\max\Big\{\sqrt{p(1-p)\,B\,\alpha_B},\ \alpha_B\Big\},
\]
and
\[
\varepsilon_B := 2G\,e^{-(\delta^2/3)Bp} + 2\,G^{-2}B^{-3}.
\]
Take $B$ is large enough (depending on $(p,s_{\max},G)$) so that
\begin{equation}\label{eq:gammaBp-vs-UB}
\gamma Bp\ \ge\ 2U_B.
\end{equation}
Then, under the overloaded-arrival regime, the \BF policy with $H=0$ satisfies
\begin{equation}\label{eq:avg-D-bound-final}
\limsup_{K\to\infty}\frac{1}{K}\sum_{k=0}^{K-1}\mathbb{E}[D(k)]
\ \le\ \frac{s_{\max}}{p}\ +\ o_{B,G}(1),
\end{equation}
where $o_{B,G}(1)\ge0$ depends only on $(B,G,p,s_{\max})$ and $o_{B,G}(1)\to0$ whenever $B,G\to\infty$ with $\sqrt G = o(B)$. 
\end{lemma}

\begin{proof}[Proof of Lemma \ref{lem:avg-imbalance-geom-rigorous}:]
Fix $B,G$, and simplify the notation $x_\star:=x_\star(B,G)$. For each step $k$ define the events
\[
E_{\mathrm{cnt}}(k) := \big\{|c_g(k)-Bp|\le \delta Bp\ \ \forall g\big\},
\quad
E_{\mathrm{snap}}(k) := \{\Lambda(k)\le s_{\max}+x_\star\},
\]
and
\[
E_{\mathrm{dev}}(k) := \big\{A^{\mathrm{pre}}(k)-(1-p)D(k-1)\le U_B\big\}.
\]
By Lemma~\ref{lem:sep-gap-rigorous},
\[
\mathbb{P}\big(E_{\mathrm{cnt}}(k)^c\mid\mathcal{F}_{k-1}\big)
\le 2G\exp\Big(-\tfrac{\delta^2}{3}Bp\Big),
\]
and by Lemma~\ref{lem:HP-calibrated-rigorous} (applied with $x_\star$ and using $\Lambda(k)\le s_{\max}+x_\star$ on $E_{\mathrm{snap}}(k)$),
\[
\mathbb{P}\big(E_{\mathrm{dev}}(k)^c \cap E_{\mathrm{snap}}(k)\mid\mathcal{F}_{k-1}\big)
\le 2\,G^{-2}B^{-3}.
\]
Abbreviate
\[
\varepsilon_B := 2G\exp\Big(-\tfrac{\delta^2}{3}Bp\Big) + 2\,G^{-2}B^{-3}.
\]

On the ``good'' event
\[
E_{\mathrm{good}}(k) := E_{\mathrm{cnt}}(k)\cap E_{\mathrm{snap}}(k)\cap E_{\mathrm{dev}}(k),
\]
we combine Lemma~\ref{lem:sep-gap-rigorous} and the bound on $A^{\mathrm{pre}}(k)$ to obtain
\[
D(k) \le \max\Big\{s_{\max},\, (1-p)D(k-1)+U_B-\gamma Bp\Big\}.
\]
As $\gamma Bp\ge 2U_B$, we have
\[
(1-p)D(k-1)+U_B-\gamma Bp
\le (1-p)D(k-1)-U_B
\le (1-p)D(k-1),
\]
and hence
\begin{equation}\label{eq:Dk-good}
D(k)\ \le\ s_{\max} + (1-p)D(k-1)
\qquad\text{on }E_{\mathrm{good}}(k).
\end{equation}

On the complement $E_{\mathrm{good}}(k)^c$, we use a uniform bound. At the beginning of step $k$, before completions, every alive request grows by one token and new arrivals have size at most $s_{\max}$, so
\[
\Lambda(k+1) \le \max\{\Lambda(k)+1,\ s_{\max}\}
\le s_{\max} + \big(\Lambda(k)+1-s_{\max}\big)_+.
\]
Therefore,
\[
B\,\Lambda(k+1)
\le B\,(s_{\max}+x_\star) + B\big(\Lambda(k)+1-s_{\max}-x_\star\big)_+,
\]
and since each device holds $B$ requests after admission, we have
\[
D(k) \le B\,\Lambda(k+1)
\qquad\text{on }E_{\mathrm{good}}(k)^c.
\]

Combining the two cases and using the union bound $\mathbf{1}_{E_{\mathrm{good}}(k)^c}\le \mathbf{1}_{E_{\mathrm{cnt}}(k)^c} + \mathbf{1}_{E_{\mathrm{snap}}(k)^c} + \mathbf{1}_{E_{\mathrm{dev}}(k)^c \cap E_{\mathrm{snap}}(k)} $, we obtain the pathwise decomposition
\begin{align*}
D(k)
&\le \big(s_{\max} + (1-p)D(k-1)\big)\,\mathbf{1}_{E_{\mathrm{good}}(k)}\notag\\
&\quad + B\,(s_{\max}+x_\star)\big(\mathbf{1}_{E_{\mathrm{cnt}}(k)^c} + \mathbf{1}_{E_{\mathrm{dev}}(k)^c \cap E_{\mathrm{snap}}(k)} + \mathbf{1}_{E_{\mathrm{snap}}(k)^c}\big)
+ B\big(\Lambda(k)+1-s_{\max}-x_\star\big)_+.
\end{align*}

Taking conditional expectations given $\mathcal{F}_{k-1}$ and using the bounds on event probabilities, we have
\begin{align}
\mathbb{E}\big[D(k)\mid\mathcal{F}_{k-1}\big]
&\le \big(s_{\max}+(1-p)D(k-1)\big)\notag\\
&\quad + \varepsilon_B B\,(s_{\max}+x_\star)
+ B\,(s_{\max}+x_\star)\,\mathbb{P}\big(E_{\mathrm{snap}}(k)^c\mid\mathcal{F}_{k-1}\big)\notag\\
&\quad + B\,\mathbb{E}\big[(\Lambda(k)+1-s_{\max}-x_\star)_+\mid\mathcal{F}_{k-1}\big].\label{eq:Dk-cond-exp}
\end{align}
We use the simple bound, $\mathbf{1}_{\{\Lambda(k)>s_{\max}+x_\star\}}\le (\Lambda(k)+1-s_{\max}-x_\star)_+$, and hence
\[
\mathbb{P}\big(E_{\mathrm{snap}}(k)^c\mid\mathcal{F}_{k-1}\big)
\le
\mathbb{E}\big[(\Lambda(k)+1-s_{\max}-x_\star)_+\mid\mathcal{F}_{k-1}\big].
\]
Substituting into~\eqref{eq:Dk-cond-exp} and taking unconditional expectations yields the one-step inequality
\begin{align}
\mathbb{E}[D(k)]
&\nonumber \le \Big(s_{\max} + (1-p)\mathbb{E}[D(k-1)]\Big)
+ \varepsilon_B B\,(s_{\max}+x_\star)
\\&+ B\,(s_{\max}+x_\star+1)\,\mathbb{E}\big[(\Lambda(k)+1-s_{\max}-x_\star)_+\big]. \label{eq:Dk-uncond}
\end{align}

To complete the proof, we bound the horizon average of the envelope term.  In any step $k$, the size of an alive request equals the sum of its prompt length (at most $s_{\max}$) and the number of tokens it has generated up to step $k$ (its age).  Thus $\Lambda(k)-s_{\max}$ is bounded by the maximal age among alive requests at step $k$.  Summing over the horizon and exchanging the order of summation (first over requests, then over the steps they are alive), we obtain
\[
\sum_{k=0}^{K-1} \big(\Lambda(k)+1-s_{\max}-x_\star\big)_+
\ \le\ \sum_{j} \sum_{r=0}^{o_j-1} (r+1-x_\star)_+,
\]
where the outer sum is over all requests $j$ whose lifetime intersects $[0{:}K-1]$, and $o_j$ is the (geometric) output length of request $j$.  For a single geometric random variable $o_j\sim\mathrm{Geo}(p)$,
\[
\mathbb{E}\Big[\sum_{r=0}^{o_j-1} (r+1-x_\star)_+\Big]
= \sum_{r=x_\star+1}^\infty (r-x_\star)\mathbb{P}(o_j\ge r)
= \sum_{r=x_\star+1}^\infty (r-x_\star)(1-p)^{r-1}
= \frac{(1-p)^{x_\star}}{p^2}.
\]
Let $N_K$ be the number of distinct requests whose lifetime intersects the horizon $[0{:}K-1]$.  At each step there are exactly $GB$ alive items, and each item counted in $N_K$ is alive for at least one step, so deterministically $N_K\le GBK$ and hence $N_K/K\le GB$.  By linearity of expectation,
\[
\frac{1}{K}\sum_{k=0}^{K-1}\mathbb{E}\big[(\Lambda(k)+1-s_{\max}-x_\star)_+\big]
\le \frac{\mathbb{E}\big[N_K\big]}{K}\cdot\frac{(1-p)^{x_\star}}{p^2}
\le GB\,\frac{(1-p)^{x_\star}}{p^2}.
\]

Averaging~\eqref{eq:Dk-uncond} over $k=0,\dots,K-1$ and rearranging, we obtain
\begin{align}
\frac{1}{K}\sum_{k=0}^{K-1}\mathbb{E}[D(k)]
&\le \Big(s_{\max} + (1-p)\frac{1}{K}\sum_{k=0}^{K-1}\mathbb{E}[D(k-1)]\Big)\notag\\
&\quad + \varepsilon_B B\,(s_{\max}+x_\star)
+ \frac{GB^2}{p^2}\,(s_{\max}+x_\star+1)\,(1-p)^{x_\star}.\label{eq:avg-D-K}
\end{align}
The sequence $\{\mathbb{E}[D(k)]\}_{k\ge0}$ is nonnegative, so its Cesàro average is bounded, and the right-hand side of~\eqref{eq:avg-D-K} is an affine contraction in the horizon average.  Letting $K\to\infty$ and using $(1-p)<1$, we obtain
\[
\limsup_{K\to\infty}\frac{1}{K}\sum_{k=0}^{K-1}\mathbb{E}[D(k)]
\ \le\ 
\frac{s_{\max} + \varepsilon_B B\,(s_{\max}+x_\star) + \frac{GB^2}{p^2}\,(s_{\max}+x_\star+1)\,(1-p)^{x_\star}}{p}.
\]

Finally, for $x_\star(B,G) = \lceil 3\log(GB)/p\rceil$, the term $(1-p)^{x_\star}$ decays fast in $GB$; more precisely,
\[
(1-p)^{x_\star} \le \exp(-p x_\star) \le (GB)^{-3},
\]
so
\[
\frac{GB^2}{p^2}\,(s_{\max}+x_\star+1)\,(1-p)^{x_\star} = o_{B,G}(1)
\]
whenever $B,G\to\infty$.  Similarly, by definition of $\varepsilon_B$ and the assumption $\log G=o(B)$,
\[
\varepsilon_B B\,(s_{\max}+x_\star) = o_{B,G}(1).
\]
Absorbing these vanishing terms into $o_{B,G}(1)$, we obtain~\eqref{eq:avg-D-bound-final}. This completes the proof of Part 1.
\end{proof}

\subsubsection*{Part 2 (Lower Bound of Imbalance of \FCFS)}

We now analyze the lower bound of average imbalance of the \FCFS baseline. Under the overloaded arrival definition, \FCFS is size-agnostic: whenever a completion occurs in any slot, the slot is immediately refilled by the earliest pending request, whose prompt length is i.i.d. distributed by the distribution $S = \mathcal D_{\text{prefill}}$
with $\mathbb{E}[S]=\mu_s$ and $ \text{Var}(S)=\sigma_s^2>0$.

\paragraph{Step 2.A: Slot-level recursion and Markov structure.}
Index slots by $(g,b)$ with $g\in\{1,\dots,G\}$ and $b\in\{1,\dots,B\}$.  Let $Y_{g,b}(k)$ denote the \emph{post-admission} size in slot $(g,b)$ at the end of step $k$.  Under \FCFS, we have the slot-level recursion
\begin{equation}\label{eq:geom-slot-recursion}
Y_{g,b}(k)
\;=\;
(1-\xi_{g,b}(k))\big(Y_{g,b}(k-1)+1\big) + \xi_{g,b}(k)\,S_{g,b}(k),
\qquad k\ge1,
\end{equation}
where, for each $(g,b,k)$,
\[
\xi_{g,b}(k)\sim\mathrm{Bernoulli}(p),
\qquad
S_{g,b}(k)\stackrel{\text{i.i.d.}}{\sim} S = \mathcal D_{\text{prefill}},
\]
and all these random variables are mutually independent across $(g,b,k)$ and independent of the initial configuration $\{Y_{g,b}(0)\}$.  Thus, for each fixed slot $(g,b)$, the process $(Y_{g,b}(k))_{k\ge0}$ is a time-homogeneous Markov chain on $\mathbb{N}$, and the chains corresponding to different slots evolve independently.

For each device $g$, the post-admission load at step $k$ is
\[
S_g(k) := \sum_{b=1}^{B} Y_{g,b}(k),
\]
and the instantaneous \FCFS imbalance at step $k$ is
\[
\mathrm{Imbalance}(k;\FCFS)
:= G\cdot\max_{1\le g\le G} S_g(k) - \sum_{g=1}^G S_g(k).
\]
By symmetry across devices, for any $k$ and any initial configuration,
\[
\mathbb{E}\big[\mathrm{Imbalance}(k;\FCFS)\big]
= G\Big(\mathbb{E}\big[\max_g S_g(k)\big] - \mathbb{E}\big[S_1(k)\big]\Big).
\]

\paragraph{Step 2.B: Single-slot stationary law.}
Fix a slot $(g,b)$ and suppress the indices.  The chain $(Y_k)_{k\ge0}$ defined by~\eqref{eq:geom-slot-recursion} satisfies, for every measurable $A\subseteq\mathbb{N}$ and $y\in\mathbb{N}$,
\[
\mathbb{P}(Y_1\in A\mid Y_0=y)
\ =\ \mathbb{P}\big((1-\xi_1)(y+1)+\xi_1 S_1\in A\big)
\ \ge\ \mathbb{P}(\xi_1=1,S_1\in A)
\ =\ p\,\mathbb{P}(S\in A).
\]
Thus the transition kernel admits a Doeblin minorization with parameter $p>0$. In particular, $(Y_k)_{k\ge0}$ is uniformly ergodic and admits a unique stationary distribution, which we denote by $U$. To identify $U$ explicitly, it is convenient to track the \emph{age} process:
\[
A_k := \text{number of steps elapsed since the current job in the slot started},\qquad k\ge0.
\]
Each time a completion occurs ($\xi_k=1$), a fresh job with prompt $S_k\sim S$ starts in that slot and the age resets to zero; otherwise, the job survives and the age increases by one.  The age process alone evolves via
\[
A_{k+1} = 
\begin{cases}
0, & \text{with probability }p,\\
A_k + 1, & \text{with probability }1-p.
\end{cases}
\]
This is a Markov chain on $\mathbb{N}$ with transition kernel
\[
\mathbb{P}(A_{k+1}=0\mid A_k=a)=p,\qquad
\mathbb{P}(A_{k+1}=a+1\mid A_k=a)=1-p.
\]
A direct calculation shows that the geometric distribution
\[
A\ \sim\ \mathrm{Geo}(p)-1,
\qquad
\mathbb{P}(A=a)=p(1-p)^a,\quad a=0,1,\dots,
\]
is invariant for the age chain:
\[
\mathbb{P}(A_{k+1}=a)
=
p\,\mathbb{P}(A_k\ge a)
=
p\,\sum_{j=a}^{\infty}p(1-p)^j
=
p(1-p)^a
=
\mathbb{P}(A=a).
\]

Since the prompts $\{S_k\}$ are i.i.d.\ and independent of the completions $\{\xi_k\}$, in stationarity the prompt of the current job $S_{\mathrm{cur}}$ and the age $A$ are independent, with $S_{\mathrm{cur}}\stackrel{d}{=}S$ and $A\sim\mathrm{Geo}(p)-1$.  At a stationary snapshot, the slot size $U$ has the representation
\[
U \ \overset{d}{=}\ S_{\mathrm{cur}} + A,
\qquad S_{\mathrm{cur}}\stackrel{d}{=}S,\quad A\sim\mathrm{Geo}(p)-1,\quad S_{\mathrm{cur}}\perp\!\!\!\perp A.
\]
Consequently,
\begin{equation}\label{eq:U-mean-var}
\mu_U := \mathbb{E}[U] = \mu_s + \frac{1-p}{p},
\qquad
\sigma_{\mathrm{snap}}^2 :=  \text{Var}(U) = \sigma_s^2 + \frac{1-p}{p^2}.
\end{equation}
Because $S$ is bounded and $A$ has geometric tails, $U$ has finite moments of all orders, in particular $\mathbb{E}[|U|^3]<\infty$.

\paragraph{Step 2.C: Device-level loads at stationarity.}
Since the slot chains are independent and identically distributed, and their transition kernels do not couple different slots, the stationary distribution of the entire collection $\{Y_{g,b}(k)\}_{g,b}$ is the product measure under which all $Y_{g,b}$ are i.i.d.\ with law $U$.  Let $(U_{g,b})_{g,b}$ be i.i.d.\ copies of $U$ and define the stationary per-device loads
\[
S_g^{\mathrm{st}} := \sum_{b=1}^{B} U_{g,b},\qquad g=1,\dots,G.
\]
Then $\{S_g^{\mathrm{st}}\}_{g=1}^G$ are i.i.d.\ with
\[
\mathbb{E}[S_g^{\mathrm{st}}] = \mu_U B,\qquad
 \mathrm{Var}(S_g^{\mathrm{st}}) = \sigma_{\mathrm{snap}}^2 B.
\]

We now invoke the same Berry--Esseen and extreme-value argument as in the homogeneous-output case, but with $U$ replacing $s$.  Define the standardized sums
\[
W_g := \frac{S_g^{\mathrm{st}} - \mu_U B}{\sigma_{\mathrm{snap}}\sqrt{B}},
\qquad g=1,\dots,G.
\]
Because $U$ has finite third absolute moment, the Berry--Esseen theorem (for sums of i.i.d.\ real variables) guarantees the existence of a universal constant $C_{\mathrm{BE}}$ such that
\[
\sup_{x\in\mathbb{R}}
\Big|
\mathbb{P}(W_g\le x) - \Phi(x)
\Big|
\;\le\; \frac{C_{\mathrm{BE}}}{\sqrt{B}},
\]
for each $g$, where $\Phi$ is the standard normal distribution function.

Set $z_G := \sqrt{\frac{\log G}{2}}$ and consider the threshold $z_G$.  For a standard normal $Z\sim N(0,1)$ and all sufficiently large $G$ (say $\log G\ge 4$), the Gaussian tail bound implies 
\[
\mathbb{P}(Z\ge z_G)
\;\ge\;
\frac{1}{\sqrt{2\pi}}\,\Big(\frac{1}{z_G}-\frac{1}{z_G^3}\Big)\,\exp\!\Big(-\frac{z_G^2}{2}\Big)
\;\ge\;
\frac{1}{2\sqrt{2\pi}\,z_G}\,G^{-1/4}.
\]
Combining this with Berry--Esseen, we obtain
\[
\mathbb{P}\big(W_g\ge z_G\big)
\;\ge\;
\frac{1}{2\sqrt{2\pi}\,z_G}\,G^{-1/4}\;-\;\frac{C_{\mathrm{BE}}}{\sqrt{B}}.
\]
Since $\sqrt{G}=\tilde{o}(B)$, there exists $B_0(G)$ such that, for all $B\ge B_0(G)$,
\[
\frac{C_{\mathrm{BE}}}{\sqrt{B}}
\;\le\;
\frac{1}{2}\cdot \frac{1}{2\sqrt{2\pi}\,z_G}\,G^{-1/4},
\]
and hence
\begin{equation}\label{eq:Wg-tail-lb}
\mathbb{P}\big(W_g\ge z_G\big)
\;\ge\;
\frac{1}{4\sqrt{2\pi}\,z_G}\,G^{-1/4}
\;=:\;p_G.
\end{equation}
By independence across $g$, we have
\[
\mathbb{P}\Big(\max_{1\le g\le G} W_g \ge z_G\Big)
\;\ge\;
1 - \big(1-p_G\big)^G.
\]
Using the inequality $1-x\le e^{-x}$ for $x\ge0$, we find
\[
1-\big(1-p_G\big)^G
\;\ge\;
1-\exp(-G p_G).
\]
From~\eqref{eq:Wg-tail-lb}, $G p_G \ge \frac{1}{4\sqrt{2\pi}\,z_G}G^{3/4}$, which is bounded away from zero for every fixed $G\ge2$. Thus, there exists a universal constant $c_1\in(0,1)$ (for instance $c_1:=1-e^{-1/(4\sqrt{2\pi})}$) such that
\[
\mathbb{P}\Big(\max_{1\le g\le G} W_g \ge z_G\Big)
\;\ge\; c_1
\qquad\text{for all }\log G\ge4\text{ and all }B\ge B_0(G).
\]

Finally, for any $t>0$,
\[
\mathbb{E}\Big[\max_{1\le g\le G} S_g^{\mathrm{st}}\Big]
=
\mu_U B + \mathbb{E}\Big[\max_{1\le g\le G}(S_g^{\mathrm{st}}-\mu_U B)\Big]
\;\ge\;
\mu_U B + t\,\mathbb{P}\big(\max_g(S_g^{\mathrm{st}}-\mu_U B)\ge t\big).
\]
Choosing $t := \sigma_{\mathrm{snap}}\sqrt{B}\, z_G$, so that the event $\{\max_g(S_g^{\mathrm{st}}-\mu_U B) \ge t\}$ coincides with $\{\max_g W_g \ge z_G\}$, we obtain
\[
\mathbb{E}\Big[\max_{1\le g\le G} S_g^{\mathrm{st}}\Big]
\;\ge\;
\mu_U B + \sigma_{\mathrm{snap}}\sqrt{B}\, z_G\,c_1
\;=\;
\mu_U B + c'\,\sigma_{\mathrm{snap}}\,\sqrt{B\log G},
\]
for some absolute constant $c'>0$.  Since $\mathbb{E}[\sum_{g=1}^G S_g^{\mathrm{st}}]=G\mu_U B$, the stationary expected imbalance under \FCFS satisfies
\begin{equation}\label{eq:FCFS-stationary-lb}
\mathbb{E}\big[\mathrm{Imbalance}^{\mathrm{st}}\big]
:= \mathbb{E}\Big[G\max_g S_g^{\mathrm{st}} - \sum_{g=1}^G S_g^{\mathrm{st}}\Big]
\;\ge\;
c'\,G\,\sigma_{\mathrm{snap}}\sqrt{B\log G},
\end{equation}
for all $G\ge2$ and all $B\ge B_0(G)$, where $B_0(G)$ is as above.

\paragraph{Step 2.D: Time-average imbalance under \FCFS.}
The global collection of slot loads
\[
Y(k) := \big(Y_{g,b}(k)\big)_{g\in[G],\,b\in[B]}
\]
is a Markov chain on $\mathbb{N}^{GB}$ with transition kernel induced by~\eqref{eq:geom-slot-recursion}.  By the Doeblin minorization on each coordinate, the product chain is uniformly ergodic and admits the unique stationary distribution $\Pi$, which is precisely the product law of i.i.d.\ copies of $U$.  Moreover, by uniform ergodicity, for any initial configuration $Y(0)$ and any integrable function $f$,
\[
\lim_{k\to\infty} \mathbb{E}\big[f(Y(k))\big] = \mathbb{E}_{\Pi}[f(Y)].
\]
Applying this to $f(Y)=\mathrm{Imbalance}(k;\FCFS)$, which has finite second moment under $\Pi$ in view of~\eqref{eq:U-mean-var}, we deduce that
\[
\lim_{k\to\infty} \mathbb{E}\big[\mathrm{Imbalance}(k;\FCFS)\big]
=
\mathbb{E}\big[\mathrm{Imbalance}^{\mathrm{st}}\big].
\]
By Cesàro averaging, it follows that the long-run expected time-average imbalance under \FCFS,
\[
\text{AvgImbalance}(\FCFS;\mathcal{I})
:= \lim_{K\to\infty} \frac{1}{K}\sum_{k=0}^{K-1} \mathbb{E}\big[\mathrm{Imbalance}(k;\FCFS)\big],
\]
exists and equals the stationary expectation $\mathbb{E}\big[\mathrm{Imbalance}^{\mathrm{st}}\big]$, independently of the arrival instance $\mathcal{I}$ and the initial configuration.  Combining with~\eqref{eq:FCFS-stationary-lb}, we obtain
\begin{equation}\label{eq:FCFS-timeavg-lb}
\text{AvgImbalance}(\FCFS;\mathcal{I})
\;\ge\;
c'\,G\,\sigma_{\mathrm{snap}}\sqrt{B\log G},
\qquad
\sigma_{\mathrm{snap}}^2 = \sigma_s^2 + \frac{1-p}{p^2},
\end{equation}
for all $G\ge2$ and all $B$ sufficiently large, uniformly over all overloaded arrival instances $\mathcal{I}$.
\medskip

\subsubsection*{Part 3: \BF vs.\ \FCFS}
\label{subsubsec:geom-BF-vs-FCFS}

We now combine the \BF average-gap bound from Lemma~\ref{lem:avg-imbalance-geom-rigorous} with the \FCFS lower bound~\eqref{eq:FCFS-timeavg-lb} to obtain the imbalance--reduction rate in the geometric-output setting. By Lemma~\ref{lem:avg-imbalance-geom-rigorous}, for any overloaded arrival instance $\mathcal{I}$ we have
\[
\limsup_{K\to\infty}\frac{1}{K}\sum_{k=0}^{K-1}\mathbb{E}[D(k)]
\ \le\ \frac{s_{\max}}{p}\ +\ o_{B,G}(1),
\qquad\text{with }o_{B,G}(1)\to0\ \text{as }\ B,G\to\infty.
\]
Since, for every step $k$ and any policy,
\[
\mathrm{Imbalance}(k;\BF)
= \sum_{g=1}^{G}\big(\max_h L_h(k)-L_g(k)\big)
\ \le\ (G-1)\,D(k),
\]
it follows that
\[
\mathrm{AvgImbalance}(\BF;\mathcal{I})
= \limsup_{K\to\infty}\frac{1}{K}\sum_{k=0}^{K-1}\mathbb{E}\big[\mathrm{Imbalance}(k;\BF)\big]
\ \le\ (G-1)\,\frac{s_{\max}}{p}\ +\ o_{B,G}(1),
\]
with the same $o_{B,G}(1)$.

On the other hand, by~\eqref{eq:FCFS-timeavg-lb}, for every overloaded arrival instance $\mathcal{I}$,
\[
\mathrm{AvgImbalance}(\FCFS;\mathcal{I})
\ \ge\ c'\,G\,\sigma_{\mathrm{snap}}\sqrt{B\log G},
\qquad
\sigma_{\mathrm{snap}}^2 = \sigma_s^2 + \frac{1-p}{p^2},
\]
for some universal constant $c'>0$ and all $B$ sufficiently large.  Therefore, for every such $\mathcal{I}$,
\[
\frac{\mathrm{AvgImbalance}(\FCFS;\mathcal{I})}{\mathrm{AvgImbalance}(\BF;\mathcal{I})}
\ \gtrsim\
\frac{G\,\sigma_{\mathrm{snap}}\sqrt{B\log G}}{(G-1)\,s_{\max}/p}
\ =\ c\,\frac{p\,\sigma_{\mathrm{snap}}}{s_{\max}}\cdot\frac{G}{G-1}\,\sqrt{B\log G},
\]
for a possibly smaller universal constant $c>0$, where we absorb the $o_{B,G}(1)$ term into $c$ for $B$ large enough.  Since $\sigma_{\mathrm{snap}}\ge\sqrt{1-p}/p$, the prefactor $\frac{p\,\sigma_{\mathrm{snap}}}{s_{\max}}$ is strictly positive (up to the fixed scale $s_{\max}$), and the scaling $\sqrt{B\log G}$ is inherited from the \FCFS lower bound.  Taking the infimum over all overloaded arrival instances $\mathcal{I}$ completes the proof.

\end{proof}

\subsection{Proof of Theorem \ref{thm:general}} \label{append:proofgeneral}

\begin{proof}[Proof of Theorem \ref{thm:general}]
We work on a probability space $(\Omega,\mathcal{F},\mathbb{P})$ that supports the arrival instance $\mathcal{I}$, the prefill lengths $\mathcal{D}_{\mathrm{prefill}}$, the geometric decode lengths $\mathcal{D}_{\mathrm{decode}}=\mathrm{Geo}(p)$, and the completion indicators.  For each step $k\in\mathbb{N}$ we define the \emph{natural filtration}
\[
\mathcal{F}_k
\ :=\ \sigma\Big(\text{all arrivals, prefill, decode lengths, completions, and \BF assignments up to step }k\Big),
\]
and write $\mathbb{E}[\cdot\mid\mathcal{F}_k]$ for conditional expectations.

At the end of step $k-1$ (after \BF admission at that step), each device $g\in\{1,\dots,G\}$ holds exactly $B$ active requests, indexed by $i=1,\dots,B$, with sizes $a_{g,i}(k-1)\in\mathbb{N}$ (including prompt tokens and already generated output tokens).  Let
\[
L_g(k-1)\ :=\ \sum_{i=1}^{B} a_{g,i}(k-1),
\qquad
D(k-1)\ :=\ \max_{g} L_g(k-1)\;-\;\min_{g} L_g(k-1),
\]
and
\[
\Lambda(k)\ :=\ \max_{g,i} a_{g,i}(k-1),
\]
be, respectively, the device loads, the inter-device gap, and the one-step envelope at the end of step $k-1$.  These quantities are $\mathcal{F}_{k-1}$-measurable.

At the beginning of step $k$, every active request grows by exactly $\delta_k$ units of workload and then completes independently with probability $p$ at the end of the step. Let $\xi_{g,i}(k)\sim\mathrm{Bernoulli}(p)$ be the completion indicator of slot $(g,i)$ in step $k$, independent of $\mathcal{F}_{k-1}$ and across $(g,i)$.  The number of completions (free slots) on device $g$ in step $k$ is
\[
c_g(k)\ :=\ \sum_{i=1}^{B} \xi_{g,i}(k)
\ \sim\ \mathrm{Binomial}(B,p),
\]
independent across $g$ and of $\mathcal{F}_{k-1}$, and we write $C_k:=\sum_{g=1}^{G}c_g(k)$ for the total number of completions.

Before admitting new requests at step $k$, the total \emph{pre-admission} load on device $g$ is
\[
a_g^{\mathrm{pre}}(k)\ :=\ L_g(k-1)+B\delta_k-R_g(k),
\qquad
R_g(k)\ :=\ \sum_{i=1}^{B} \xi_{g,i}(k)\big(a_{g,i}(k-1)+\delta_k\big),
\]
where $R_g(k)$ is the load removed by completions in step $k$.  The corresponding pre-admission spread is
\[
A^{\mathrm{pre}}(k)\ :=\ \max_{g} a_g^{\mathrm{pre}}(k)\;-\;\min_{g} a_g^{\mathrm{pre}}(k).
\]
After \BF admission at step $k$, we write $L_g(k)$ for the post-admission loads and
\[
D(k)\ :=\ \max_{g} L_g(k)\;-\;\min_{g} L_g(k)
\]
for the post-admission gap.  Finally, we set
\[
\Lambda(k+1)\ :=\ \max_{g,i} a_{g,i}(k),
\]
the one-step envelope at the end of step $k$.  By construction, $L_g(k)$, $D(k)$, and $\Lambda(k+1)$ are $\mathcal{F}_k$-measurable.

\subsubsection*{Part 1 (Upper Bound of Imbalance of \BF)}

\paragraph{Step 1.A: Separation and one-step drift}

The first lemma is a geometric statement about the \BF assignment at step $k$.  Note that its proof does not depend on the specific value of $\delta_k$, only on prompt sizes and completion counts.

\begin{lemma}[Separation and gap reduction]\label{lem:sep-gap-general}
Fix $\delta\in\big(0,(2s_{\max}-1)^{-1}\big)$ and set
\[
\gamma\ :=\ 1-\delta(2s_{\max}-1) \ \in\ (0,1).
\]
Then, for every $k\ge1$,
\begin{equation}\label{eq:sep-conditional-general}
\mathbb{P}\Big(D(k)\ \le\ \max\big\{s_{\max},\,A^{\mathrm{pre}}(k)-\gamma Bp\big\}\,\Big|\,\mathcal{F}_{k-1}\Big)
\ \ge\ 1-2G\exp\Big(-\tfrac{\delta^2}{3}Bp\Big).
\end{equation}
\end{lemma}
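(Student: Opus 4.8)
The plan is to mirror the proof of Lemma~\ref{lem:sep-gap-rigorous} almost verbatim, the guiding observation being that the common per-step drift $\delta_k$ enters the dynamics only through a \emph{uniform} additive shift $B\delta_k$ of every device's pre-admission load, so that neither the inter-device separation structure nor the completion-count concentration is affected by it. Concretely, I would first condition on $\mathcal{F}_{k-1}$ and introduce the count event $E_{\mathrm{cnt}}(k) := \{|c_g(k) - Bp| \le \delta Bp \text{ for all } g\}$. Since completions are $\mathrm{Bernoulli}(p)$ trials independent of the drift, of the job sizes, and of $\mathcal{F}_{k-1}$, we have $c_g(k) \sim \mathrm{Binomial}(B,p)$ exactly as in the $+1$ model, and the Chernoff bound for binomials gives $\mathbb{P}(E_{\mathrm{cnt}}(k)^c \mid \mathcal{F}_{k-1}) \le 2G\exp(-\tfrac{\delta^2}{3}Bp)$. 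On $E_{\mathrm{cnt}}(k)$ we then have $c_{\min}(k) \ge (1-\delta)Bp$ and $c_{\max}(k) \le (1+\delta)Bp$, and it suffices to argue deterministically on this event.

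Next I would reproduce the separation (exchange) argument. Writing $L_g(k) = a_g^{\mathrm{pre}}(k) + \sum_{s \in S_g(k)} s$, where $S_g(k)$ is the multiset of fresh prompt sizes admitted to device $g$, the crucial point is that a newly admitted request contributes its \emph{leading} workload $w_i^{(1)} = s_i \in \{1,\dots,s_{\max}\}$, which is identical to the inhomogeneous LLM model regardless of the drift sequence $(\delta_k)$. Hence, for any \BF-optimal assignment $\{S_g^\star(k)\}$, if the post-admission gap $D^\star(k)$ exceeds $s_{\max}$, then for a maximizing device $p$ and a minimizing device $q$ there must exist a separating threshold $\theta \in \{0,\dots,s_{\max}-1\}$ with every $s \in S_p^\star(k)$ satisfying $s \le \theta$ and every $s \in S_q^\star(k)$ satisfying $s \ge \theta+1$; otherwise, swapping an inversion $x > y$ between $p$ and $q$ strictly reduces the maximum load (or strictly reduces the number of maximizers), contradicting optimality. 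The overloaded condition (Definition~\ref{def:overloaded}) guarantees the waiting pool supplies enough jobs of each prompt class to realize these exchanges, and since that condition is phrased purely in terms of prefill lengths, it is untouched by the drift.

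Finally I would combine separation with the count bounds: using $\sum_{s \in S_p^\star(k)} s \le \theta\, c_p(k)$ and $\sum_{s \in S_q^\star(k)} s \ge (\theta+1)\, c_q(k)$, together with $c_p(k) \le (1+\delta)Bp$, $c_q(k) \ge (1-\delta)Bp$, and $\theta \le s_{\max}-1$, to obtain $\theta\, c_p(k) - (\theta+1)\, c_q(k) \le Bp(-1 + \delta(2s_{\max}-1)) = -\gamma Bp$. Since the drift cancels in the pairwise difference $a_p^{\mathrm{pre}}(k) - a_q^{\mathrm{pre}}(k) \le A^{\mathrm{pre}}(k)$, this yields $D^\star(k) \le A^{\mathrm{pre}}(k) - \gamma Bp$ when $D^\star(k) > s_{\max}$, and $D^\star(k) \le s_{\max}$ otherwise; taking the maximum and pushing through the conditional probability bound on $E_{\mathrm{cnt}}(k)$ completes the claim. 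I do not anticipate a genuine obstacle here: the only point requiring care is to verify that the $B\delta_k$ term, being common to all devices, disappears both from the role of $A^{\mathrm{pre}}(k)$ on the right-hand side and from the difference $a_p^{\mathrm{pre}}(k) - a_q^{\mathrm{pre}}(k)$, so that the entire contraction remains driven by completions rather than by the deterministic growth---precisely the mechanism flagged in the proof intuition for Theorem~\ref{thm:general}.
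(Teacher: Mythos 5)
Your proposal is correct and follows essentially the same route as the paper's own proof: the binomial Chernoff bound on completion counts, the exchange/separation argument yielding a threshold $\theta$, and the combination $\theta c_p(k)-(\theta+1)c_q(k)\le -\gamma Bp$, with the observation that the common drift $B\delta_k$ leaves both $A^{\mathrm{pre}}(k)$ and the pairwise difference $a_p^{\mathrm{pre}}(k)-a_q^{\mathrm{pre}}(k)$ unchanged. The paper's proof makes the same point (the argument "does not depend on the specific value of $\delta_k$, only on prompt sizes and completion counts"), so there is nothing to correct.
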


\begin{proof}[Proof of Lemma \ref{lem:sep-gap-general}]
Fix a step $k$ and condition on $\mathcal{F}_{k-1}$.  Given $\mathcal{F}_{k-1}$, the collection of sizes $\{a_{g,i}(k-1)\}$ and loads $\{L_g(k-1)\}$ is deterministic, and the randomness at step $k$ comes only from the completion indicators $\{\xi_{g,i}(k)\}$ and the prompt lengths of newly admitted requests.  Define
\[
E_{\mathrm{cnt}}(k)\ :=\ \big\{|c_g(k)-Bp|\le \delta Bp\ \ \text{for all }g\in\{1,\dots,G\}\big\}.
\]
The event $E_{\mathrm{cnt}}(k)$ depends only on $\{\xi_{g,i}(k)\}$ and thus is independent of $\mathcal{F}_{k-1}$.  By a standard Chernoff bound for binomial variables,
\[
\mathbb{P}\big(E_{\mathrm{cnt}}(k)^{c}\mid\mathcal{F}_{k-1}\big)
=
\mathbb{P}\big(E_{\mathrm{cnt}}(k)^{c}\big)
\ \le\ 2G\exp\Big(-\tfrac{\delta^2}{3}Bp\Big).
\]

On the event $E_{\mathrm{cnt}}(k)$, we have
\[
c_{\min}(k)\ :=\ \min_{g} c_g(k)\ \ge\ (1-\delta)Bp,
\qquad
c_{\max}(k)\ :=\ \max_{g} c_g(k)\ \le\ (1+\delta)Bp.
\]
Fix an outcome $\omega\in E_{\mathrm{cnt}}(k)$ and consider deterministically the \BF assignment at step $k$ on that outcome.  Let $S_g(k)$ denote the multiset of prompt sizes of the new requests admitted to device $g$ at step $k$; by construction $|S_g(k)|=c_g(k)$, and
\[
L_g(k)
\;=\;
a_g^{\mathrm{pre}}(k) + \sum_{s\in S_g(k)} s.
\]

For each feasible family of multisets $\{S_g(k)\}_{g=1}^G$ with $|S_g(k)|=c_g(k)$, define
\[
L_g(k)\ :=\ a_g^{\mathrm{pre}}(k) + \sum_{s\in S_g(k)} s,\qquad D(k)\ :=\ \max_g L_g(k)-\min_g L_g(k).
\]
Among all such assignments, \BF chooses one that minimizes $D(k)$.  Let $\{S_g^\star(k)\}$ be such a minimizer, and let
\[
L_g^\star(k)\ := a_g^{\mathrm{pre}}(k) + \sum_{s\in S_g^\star(k)} s,
\qquad
D^\star(k)\ := \max_g L_g^\star(k)-\min_g L_g^\star(k),
\]
denote the corresponding loads and gap.  If $D^\star(k)\le s_{\max}$, then $D(k)=D^\star(k)\le s_{\max}$ and the claimed inequality~\eqref{eq:sep-conditional-general} holds trivially on this outcome.  Hence it suffices to analyze the case $D^\star(k)>s_{\max}$.

Let $p\in\arg\max_{g}L_g^\star(k)$ and $q\in\arg\min_{g}L_g^\star(k)$, so that
\[
L_p^\star(k)-L_q^\star(k)\ =\ D^\star(k)\ >\ s_{\max}.
\]
We first show that, under this strict gap and the optimality of $\{S_g^\star(k)\}$, the assignment must be ``separated'' between any such heavy--light pair $(p,q)$: there exists an integer $\theta\in\{0,\dots,s_{\max}-1\}$ such that
\[
s\le\theta\quad\text{for all }s\in S_p^\star(k),
\qquad
s\ge\theta+1\quad\text{for all }s\in S_q^\star(k).
\]

Suppose not. Then there exist $x\in S_p^\star(k)$ and $y\in S_q^\star(k)$ with $x>y$.  Consider the assignment $\{\widetilde S_g(k)\}$ obtained by swapping $x$ and $y$ between devices $p$ and $q$,
\[
\widetilde S_p(k) := (S_p^\star(k)\setminus\{x\})\cup\{y\},
\qquad
\widetilde S_q(k) := (S_q^\star(k)\setminus\{y\})\cup\{x\},
\]
and $\widetilde S_r(k):=S_r^\star(k)$ for all $r\notin\{p,q\}$.  This swap preserves $|S_g(k)|=c_g(k)$ for every $g$ and uses only jobs from the waiting pool (by the overloaded definition of $\mathcal{I}$, the pool contains sufficiently many jobs of each size to support such exchanges). The new loads are
\[
\widetilde L_p(k)
\;=\; a_p^{\mathrm{pre}}(k) + \sum_{s\in\widetilde S_p(k)} s
\;=\; L_p^\star(k) - (x-y)
\;<\; L_p^\star(k),
\]
and
\[
\widetilde L_q(k)
\;=\; a_q^{\mathrm{pre}}(k) + \sum_{s\in\widetilde S_q(k)} s
\;=\; L_q^\star(k) + (x-y)
\;\le\; L_q^\star(k) + s_{\max}
\;<\; L_p^\star(k),
\]
using $x-y\le s_{\max}$ and $L_p^\star(k)-L_q^\star(k)>s_{\max}$.  All other loads $\widetilde L_r(k)$ equal $L_r^\star(k)\le L_p^\star(k)$.

If $p$ is the unique maximizer of $\{L_g^\star(k)\}$, then
\[
\max_g \widetilde L_g(k) < \max_g L_g^\star(k),
\]
contradicting the minimality of $D^\star(k)$.  If there are multiple maximizers, then the swap strictly reduces the number of devices attaining the maximum load while keeping the maximum value at most $L_p^\star(k)$.  Repeating this argument finitely many times, we either strictly reduce the maximum load or obtain a contradiction to the choice of $\{S_g^\star(k)\}$ as a minimizer with minimal number of maximizers.  Hence no such pair $(x,y)$ can exist, and the desired threshold $\theta$ exists.

Condition on $E_{\mathrm{cnt}}(k)$ and fix an outcome with $D^\star(k)>s_{\max}$. Using the separation property, for device $p$ every $s\in S_p^\star(k)$ satisfies $s\le\theta$, while for $q$ every $s\in S_q^\star(k)$ satisfies $s\ge\theta+1$.  Therefore,
\[
\sum_{s\in S_p^\star(k)} s \;\le\; \theta\,c_p(k),
\qquad
\sum_{s\in S_q^\star(k)} s \;\ge\; (\theta+1)\,c_q(k),
\]
and hence
\begin{align*}
D^\star(k)
&=\ L_p^\star(k)-L_q^\star(k) \\
&=\ \bigg(a_p^{\mathrm{pre}}(k) + \sum_{s\in S_p^\star(k)} s\bigg) -
     \bigg(a_q^{\mathrm{pre}}(k) + \sum_{s\in S_q^\star(k)} s\bigg)\\
&\le\ a_p^{\mathrm{pre}}(k)-a_q^{\mathrm{pre}}(k) + \theta c_p(k) - (\theta+1)c_q(k).
\end{align*}
Using $c_p(k)\le c_{\max}(k)$ and $c_q(k)\ge c_{\min}(k)$, we obtain
\begin{align*}
\theta c_p(k) - (\theta+1)c_q(k)
\; &\le\;
\theta c_{\max}(k) - (\theta+1)c_{\min}(k)
\;\le\;
\theta(1+\delta)Bp - (\theta+1)(1-\delta)Bp \\
&= Bp\big(\theta(1+\delta) - (\theta+1)(1-\delta)\big)
= Bp\big(-1 + \delta(2\theta+1)\big)
\\&\le Bp\big(-1 + \delta(2s_{\max}-1)\big)
= -\gamma Bp,
\end{align*}
where we used $\theta\le s_{\max}-1$ and the definition of $\gamma$.  Consequently,
\[
D^\star(k)
\le
a_p^{\mathrm{pre}}(k)-a_q^{\mathrm{pre}}(k) - \gamma Bp
\le
A^{\mathrm{pre}}(k) - \gamma Bp.
\]
Combining this with the trivial bound $D^\star(k)\le s_{\max}$ in the case $D^\star(k)\le s_{\max}$, and recalling that $D(k)=D^\star(k)$ under \BF, we obtain, for every outcome in $E_{\mathrm{cnt}}(k)$,
\[
D(k)\ \le\ \max\big\{s_{\max},\,A^{\mathrm{pre}}(k)-\gamma Bp\big\}.
\]

Since $E_{\mathrm{cnt}}(k)$ is independent of $\mathcal{F}_{k-1}$, taking conditional probability given $\mathcal{F}_{k-1}$ yields
\[
\mathbb{P}\Big(D(k)\ \le\ \max\big\{s_{\max},\,A^{\mathrm{pre}}(k)-\gamma Bp\big\}\,\Big|\,\mathcal{F}_{k-1}\Big)
\;\ge\;
\mathbb{P}\big(E_{\mathrm{cnt}}(k)\,\big|\,\mathcal{F}_{k-1}\big)
\;\ge\;
1-2G\exp\Big(-\tfrac{\delta^2}{3}Bp\Big),
\]
which is~\eqref{eq:sep-conditional-general}.
\end{proof}

\paragraph{Step 1.B: Concentration of pre-admission spread}

We now control $A^{\mathrm{pre}}(k)$ around its conditional baseline $(1-p)D(k-1)$, taking into account the general per-step increments $\{\delta_k\}$.

Define the range functional $r(x) := \max_g x_g - \min_g x_g$ for $x\in\mathbb{R}^G$.  For step $k$, recall that
\[
R_g(k) \;=\; \sum_{i=1}^{B}\xi_{g,i}(k)\big(a_{g,i}(k-1)+\delta_k\big),
\qquad
\mu_g(k)\ :=\ \mathbb{E}\big[R_g(k)\,\big|\,\mathcal{F}_{k-1}\big]
= p\sum_{i=1}^{B}(a_{g,i}(k-1)+\delta_k),
\]
and set
\[
Z_g(k) \;:=\ R_g(k) - \mu_g(k),\qquad
Z(k) := (Z_g(k))_{g=1}^G.
\]
The one-step envelope
\[
\Lambda(k) := \max_{g,i} a_{g,i}(k-1)
\]
is $\mathcal{F}_{k-1}$-measurable.  As $0\ \le\ \delta_k\ \le\ \delta_{\max}$, each summand $(\xi_{g,i}(k)-p)(a_{g,i}(k-1)+\delta_k)$ is bounded in absolute value by $\Lambda(k)+\delta_{\max}$.

\begin{lemma}[One-step concentration of $A^{\mathrm{pre}}(k)$ under general increments]\label{lem:HP-calibrated-general}
For each step $k\ge1$ and every $\mathcal{F}_{k-1}$-measurable $\Lambda(k)$, define
\[
\alpha_B := 3\log(GB),
\qquad
\widetilde{\Lambda}(k) := \Lambda(k)+\delta_{\max},
\qquad
u_A(k) := C_0\,\widetilde{\Lambda}(k)\,\max\Big\{\sqrt{p(1-p)\,B\,\alpha_B},\ \alpha_B\Big\},
\]
where $C_0>0$ is a universal numerical constant. Then, for all $k\ge1$,
\begin{equation}\label{eq:HP-calibrated-ineq-general}
\mathbb{P}\Big(A^{\mathrm{pre}}(k)-(1-p)D(k-1) > u_A(k)\,\Big|\,\mathcal{F}_{k-1}\Big)
\ \le\ 2G\,e^{-\alpha_B}\ \le\ 2\,G^{-2}B^{-3}.
\end{equation}
\end{lemma}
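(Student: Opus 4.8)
The plan is to follow the proof of Lemma~\ref{lem:HP-calibrated-rigorous} verbatim, changing only the two places where the per-step increment enters: the constant shift in the pre-admission load, and the bound on the summands of $Z_g(k)$. First I would condition on $\mathcal{F}_{k-1}$, under which the loads $L(k-1)$ are deterministic, and rewrite the pre-admission load using $\mu_g(k)=p\bigl(L_g(k-1)+B\delta_k\bigr)$ as
\[
a_g^{\mathrm{pre}}(k)\;=\;L_g(k-1)+B\delta_k-R_g(k)\;=\;(1-p)\bigl(L_g(k-1)+B\delta_k\bigr)-Z_g(k).
\]
Setting $X_g:=(1-p)(L_g(k-1)+B\delta_k)$, the crucial observation is that the common increment $B\delta_k$ is a constant shift across all coordinates, so it leaves the range unchanged: with $r(x):=\max_g x_g-\min_g x_g$ we have $r(X)=(1-p)\,r(L(k-1))=(1-p)D(k-1)$. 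This is precisely the structural fact emphasized in the proof intuition—the drift is common to every device and therefore cancels in the inter-device gap, preserving the exact contraction factor $(1-p)$.

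Next I would apply the elementary inequality $r(x-z)\le r(x)+2\|z\|_\infty$ with $x=X$ and $z=Z(k)$ to obtain
\[
A^{\mathrm{pre}}(k)-(1-p)D(k-1)\;=\;r(X-Z(k))-r(X)\;\le\;2\max_g|Z_g(k)|,
\]
and then reduce to the union bound $\mathbb{P}(A^{\mathrm{pre}}(k)-(1-p)D(k-1)>u\mid\mathcal{F}_{k-1})\le\sum_{g=1}^G\mathbb{P}(|Z_g(k)|>u/2\mid\mathcal{F}_{k-1})$ over the $G$ devices.

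The one genuine modification is the per-device tail bound. Conditional on $\mathcal{F}_{k-1}$, the summands $\{(\xi_{g,i}(k)-p)(a_{g,i}(k-1)+\delta_k)\}_{i=1}^B$ are independent, mean-zero, and now bounded in absolute value by $\Lambda(k)+\delta_k\le\Lambda(k)+\delta_{\max}=\widetilde{\Lambda}(k)$, with conditional variance $\sum_{i=1}^B p(1-p)(a_{g,i}(k-1)+\delta_k)^2\le p(1-p)B\,\widetilde{\Lambda}(k)^2$. Applying Bernstein's inequality with the enlarged envelope $\widetilde{\Lambda}(k)$ replacing the $2\Lambda(k)$ of the $\delta_k\equiv1$ case, and choosing $u=u_A(k)=C_0\widetilde{\Lambda}(k)\max\{\sqrt{p(1-p)B\alpha_B},\,\alpha_B\}$ so that both terms in the Bernstein exponent are at least $\alpha_B$, forces the per-device tail below $2e^{-\alpha_B}$. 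Summing over $g$ and substituting $\alpha_B=3\log(GB)$ yields $2Ge^{-\alpha_B}=2G^{-2}B^{-3}$, which is the claimed bound.

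I do not expect a real obstacle: once range-invariance under the common shift is noted and the summand envelope is relaxed from $2\Lambda(k)$ to $\widetilde{\Lambda}(k)$, the argument is a transcription of Lemma~\ref{lem:HP-calibrated-rigorous}. The only point requiring minor care is bookkeeping the universal constant $C_0$ so that it absorbs the Bernstein constants uniformly in $(B,G,p,\delta_{\max})$; since $\widetilde{\Lambda}(k)$ is $\mathcal{F}_{k-1}$-measurable and the calibration of $u_A(k)$ is scale-covariant in $\widetilde{\Lambda}(k)$, the same constant works for every $k$, so no step-dependent tracking is needed.
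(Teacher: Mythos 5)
Your proposal is correct and follows essentially the same route as the paper's proof: the same decomposition $a_g^{\mathrm{pre}}(k)=(1-p)(L_g(k-1)+B\delta_k)-Z_g(k)$ with the common shift $B\delta_k$ cancelling in the range, the same inequality $r(x-z)\le r(x)+2\|z\|_\infty$ followed by a union bound over devices, and the same Bernstein bound with the enlarged envelope $\widetilde{\Lambda}(k)=\Lambda(k)+\delta_{\max}$ and calibration of $u_A(k)$ via $C_0$. Your closing remark about $C_0$ absorbing the Bernstein constants uniformly matches exactly how the paper handles it (the paper chooses $C_0$ large enough, depending only on the universal Bernstein constant, so that both exponent terms exceed $2\alpha_B$).
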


\begin{proof}[Proof of Lemma \ref{lem:HP-calibrated-general}]
Given $\mathcal{F}_{k-1}$, the vector of loads $L(k-1) = (L_g(k-1))_{g=1}^G$ is deterministic, and
\[
a_g^{\mathrm{pre}}(k)
= L_g(k-1)+B\delta_k-R_g(k)
= (1-p)\big(L_g(k-1)+B\delta_k\big)\;-\;Z_g(k).
\]
Let $X_g := (1-p)(L_g(k-1)+B\delta_k)$ and $X := (X_g)_{g=1}^G$.  Then
\[
A^{\mathrm{pre}}(k) = r\big(a^{\mathrm{pre}}(k)\big) = r(X - Z(k)).
\]
For any vectors $x,z\in\mathbb{R}^G$, comparing coordinates achieving the maximum and minimum of $x-z$ gives the elementary inequality
\[
r(x-z) \;\le\; r(x) + 2\|z\|_\infty,
\]
where $\|z\|_\infty := \max_g |z_g|$.  Applying this with $x=X$ and $z=Z(k)$, and noting that $r(X)=(1-p)D(k-1)$ (since adding the common offset $(1-p)B\delta_k$ to all coordinates does not change the range), we obtain
\[
A^{\mathrm{pre}}(k) - (1-p)D(k-1)
= r(X - Z(k)) - r(X)
\ \le\ 2\|Z(k)\|_\infty
= 2\max_g |Z_g(k)|.
\]
Thus, for any $u>0$,
\[
\mathbb{P}\Big(A^{\mathrm{pre}}(k)-(1-p)D(k-1) > u\,\Big|\,\mathcal{F}_{k-1}\Big)
\ \le\ 
\sum_{g=1}^G \mathbb{P}\Big(|Z_g(k)|>u/2\,\Big|\,\mathcal{F}_{k-1}\Big).
\]

It remains to control each tail $\mathbb{P}(|Z_g(k)|>u/2\mid\mathcal{F}_{k-1})$.  Conditional on $\mathcal{F}_{k-1}$, the variables
\[
X_{g,i}(k) := (\xi_{g,i}(k)-p)\big(a_{g,i}(k-1)+\delta_k\big),\qquad i=1,\dots,B,
\]
are independent, mean-zero, and bounded in absolute value by $\widetilde{\Lambda}(k)=\Lambda(k)+\delta_{\max}$.  Their conditional variance satisfies
\[
\mathrm{Var}\Big(\sum_{i=1}^B X_{g,i}(k)\,\Big|\,\mathcal{F}_{k-1}\Big)
=
\sum_{i=1}^B p(1-p)\,(a_{g,i}(k-1)+\delta_k)^2
\le
p(1-p)\,B\,\widetilde{\Lambda}(k)^2.
\]
By Bernstein's inequality for bounded summands, there exists a universal constant $C_1>0$ such that, for every $g$ and $u>0$,
\[
\mathbb{P}\Big(|Z_g(k)|>u/2\,\Big|\,\mathcal{F}_{k-1}\Big)
\ \le\ 
2\exp\left(
-C_1\,\min\left\{
\frac{(u/2)^2}{p(1-p)\,B\,\widetilde{\Lambda}(k)^2},\ 
\frac{u/2}{\widetilde{\Lambda}(k)}
\right\}
\right).
\]
Summing over $g=1,\dots,G$ and recalling the earlier inequality, we obtain
\[
\mathbb{P}\Big(A^{\mathrm{pre}}(k)-(1-p)D(k-1) > u\,\Big|\,\mathcal{F}_{k-1}\Big)
\ \le\ 
2G\exp\left(
-C_1\,\min\left\{
\frac{(u/2)^2}{p(1-p)\,B\,\widetilde{\Lambda}(k)^2},\ 
\frac{u/2}{\widetilde{\Lambda}(k)}
\right\}
\right).
\]
Choosing
\[
u = u_A(k) = C_0\,\widetilde{\Lambda}(k)\,\max\Big\{\sqrt{p(1-p)\,B\,\alpha_B},\ \alpha_B\Big\},
\]
with $C_0$ sufficiently large (depending only on $C_1$), forces
\[
\frac{(u/2)^2}{p(1-p)\,B\,\widetilde{\Lambda}(k)^2}\ \ge\ 2\alpha_B,
\qquad
\frac{u}{2\widetilde{\Lambda}(k)}\ \ge\ 2\alpha_B,
\]
so that the minimum in the exponent is at least $2\alpha_B$.  This yields
\[
\mathbb{P}\Big(A^{\mathrm{pre}}(k)-(1-p)D(k-1) > u_A(k)\,\Big|\,\mathcal{F}_{k-1}\Big)
\ \le\ 2G e^{-\alpha_B}.
\]
With $\alpha_B=3\log(GB)$ we have $2G e^{-\alpha_B} \le 2G^{-2}B^{-3}$, completing the proof.
\end{proof}

\paragraph{Step 1.C: Average gap bound for \BF under general per-step growth}

We now combine Lemmas~\ref{lem:sep-gap-general} and~\ref{lem:HP-calibrated-general} to obtain a uniform bound on the long-run average of $D(k)$ when per-step growth is governed by $(\delta_k)$ so that $0 \leq \delta_k \leq \delta_{\max}$.  As before, at each step the instantaneous imbalance under any policy satisfies
\[
\mathrm{Imbalance}(k;\BF)
= \sum_{g=1}^{G}\big(\max_h L_h(k) - L_g(k)\big)
\ \le\ (G-1)\Big(\max_h L_h(k) - \min_h L_h(k)\Big)
= (G-1)\,D(k).
\]

\begin{lemma}[Average gap under geometric outputs with general increments]\label{lem:avg-imbalance-geom-general}
Let $\delta:=1/(4s_{\max})$, set $\gamma := 1-\delta(2s_{\max}-1)\in(0,1)$, and let $\alpha_B:=3\log(GB)$.  Define
\[
x_\star(B,G) := \Big\lceil \frac{3}{p}\,\log(GB) \Big\rceil,
\qquad
U_B := C_2\,(s_{\max}+x_\star(B,G)+\delta_{\max})\,\max\Big\{\sqrt{p(1-p)\,B\,\alpha_B},\ \alpha_B\Big\},
\]
and
\[
\varepsilon_B := 2G\,e^{-(\delta^2/3)Bp} + 2\,G^{-2}B^{-3},
\]
for a universal constant $C_2>0$.  Assume $B$ is large enough (depending on $(p,s_{\max},\delta_{\max},G)$) so that
\begin{equation}\label{eq:gammaBp-vs-UB-general}
\gamma Bp\ \ge\ 2U_B.
\end{equation}
Then, the \BF policy with $H=0$ satisfies
\begin{equation}\label{eq:avg-D-bound-final-general}
\limsup_{K\to\infty}\frac{1}{K}\sum_{k=0}^{K-1}\mathbb{E}[D(k)]
\ \le\ \frac{s_{\max}}{p}\ +\ o_{B,G}(1),
\end{equation}
where $o_{B,G}(1)\ge0$ depends only on $(B,G,p,s_{\max},\delta_{\max})$ and $o_{B,G}(1)\to0$ whenever $B,G\to\infty$ with $\sqrt{G}\,\log G = o(B)$.
\end{lemma}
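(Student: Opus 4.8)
The plan is to replicate the three-event decomposition of Lemma~\ref{lem:avg-imbalance-geom-rigorous} almost verbatim, feeding in the two ingredients already adapted to the general drift: the separation bound of Lemma~\ref{lem:sep-gap-general} and the concentration bound of Lemma~\ref{lem:HP-calibrated-general}. First I would fix $B,G$, abbreviate $x_\star:=x_\star(B,G)$, and for each step $k$ introduce the same good events $E_{\mathrm{cnt}}(k)=\{|c_g(k)-Bp|\le\delta Bp\ \forall g\}$, $E_{\mathrm{snap}}(k)=\{\Lambda(k)\le s_{\max}+x_\star\}$, and $E_{\mathrm{dev}}(k)=\{A^{\mathrm{pre}}(k)-(1-p)D(k-1)\le U_B\}$. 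The key structural observation is that the separation mechanism is genuinely drift-independent: a common increment $\delta_k$ shifts every device load by exactly $B\delta_k$ and hence leaves the range functional invariant, so $r(X)=(1-p)D(k-1)$ as in the $+1$ case. Thus Lemma~\ref{lem:sep-gap-general} still gives $\mathbb{P}(E_{\mathrm{cnt}}(k)^c\mid\mathcal F_{k-1})\le 2G e^{-\delta^2 Bp/3}$, and on $E_{\mathrm{snap}}(k)$ the enlarged-envelope version Lemma~\ref{lem:HP-calibrated-general} (with $\widetilde\Lambda(k)=\Lambda(k)+\delta_{\max}$) gives $\mathbb{P}(E_{\mathrm{dev}}(k)^c\cap E_{\mathrm{snap}}(k)\mid\mathcal F_{k-1})\le 2G^{-2}B^{-3}$.

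On the good event $E_{\mathrm{good}}(k)=E_{\mathrm{cnt}}(k)\cap E_{\mathrm{snap}}(k)\cap E_{\mathrm{dev}}(k)$, combining these two bounds yields $D(k)\le\max\{s_{\max},\,(1-p)D(k-1)+U_B-\gamma Bp\}$, and the hypothesis $\gamma Bp\ge 2U_B$ collapses this to the contraction $D(k)\le s_{\max}+(1-p)D(k-1)$, exactly as before. On the complement I would use the uniform bound $D(k)\le B\Lambda(k+1)$ together with the drift-adjusted envelope recursion $\Lambda(k+1)\le\max\{\Lambda(k)+\delta_k,\,s_{\max}\}\le s_{\max}+(\Lambda(k)+\delta_{\max}-s_{\max})_+$, which simply replaces the ``$+1$'' of the original argument by ``$+\delta_{\max}$''. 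A union bound over the three failure events, followed by taking conditional and then unconditional expectations, produces the one-step inequality $\mathbb{E}[D(k)]\le s_{\max}+(1-p)\mathbb{E}[D(k-1)]+\varepsilon_B B(s_{\max}+x_\star)+B(s_{\max}+x_\star+1)\,\mathbb{E}[(\Lambda(k)+\delta_{\max}-s_{\max}-x_\star)_+]$.

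The crux — and the step I expect to be the main obstacle — is bounding the Cesàro average of the envelope-excess term under a general drift. Here I would invoke Definition~\ref{def:drift}: an alive job of age $r$ carries size at most $s_{\max}+\delta_{\max}r$, so $\Lambda(k)-s_{\max}$ is dominated by $\delta_{\max}$ times the maximal age. Dominating the max over slots by the sum over all alive jobs and exchanging the order of summation gives $\sum_{k=0}^{K-1}(\Lambda(k)+\delta_{\max}-s_{\max}-x_\star)_+\le\sum_j\sum_{r=0}^{o_j-1}(\delta_{\max}(r+1)-x_\star)_+$, whose per-job expectation under $o_j\sim\mathrm{Geo}(p)$ is the weighted geometric tail $\sum_{r:\,\delta_{\max}(r+1)>x_\star}(\delta_{\max}(r+1)-x_\star)(1-p)^r$. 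The delicate point is that the excess now triggers at age $\approx x_\star/\delta_{\max}$ rather than $x_\star$, so the dominant factor is $(1-p)^{x_\star/\delta_{\max}}$ (times a $\delta_{\max}$-dependent polynomial), i.e.\ the decay weakens by the factor $1/\delta_{\max}$ in the exponent. Since $\delta_{\max}$ is a fixed constant, I would take the numerical constant inside $x_\star$ large enough relative to $\delta_{\max}$ (equivalently absorb a factor $\delta_{\max}$ into $x_\star$) so that $(1-p)^{x_\star/\delta_{\max}}\le(GB)^{-3}$ still holds; combined with the deterministic count bound $N_K/K\le GB$, this keeps the prefactor $\tfrac{GB^2}{p^2}(s_{\max}+x_\star+1)$ times the tail at $o_{B,G}(1)$.

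Finally I would average the one-step inequality over $k=0,\dots,K-1$, note that the right-hand side is an affine contraction with coefficient $1-p<1$ in the nonnegative horizon average $\tfrac1K\sum_k\mathbb{E}[D(k)]$, and let $K\to\infty$ to obtain $\limsup_K\tfrac1K\sum_k\mathbb{E}[D(k)]\le s_{\max}/p+o_{B,G}(1)$. The residual $\varepsilon_B B(s_{\max}+x_\star)$ vanishes once $\log G=o(B)$ (guaranteed by $\sqrt G\log G=o(B)$), and the envelope term vanishes by the tail estimate just described; together these absorb into $o_{B,G}(1)$. The only genuinely new analysis beyond Lemma~\ref{lem:avg-imbalance-geom-rigorous} is this drift-rescaled tail bound, because the contraction itself is insensitive to a common bounded increment — the gap shrinks only through completions, never through the deterministic $\delta_k$ growth.
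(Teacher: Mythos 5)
Your proposal is correct and follows essentially the same route as the paper's own proof in Appendix~\ref{append:proofgeneral}: the same three good events $E_{\mathrm{cnt}},E_{\mathrm{snap}},E_{\mathrm{dev}}$, the same contraction $D(k)\le s_{\max}+(1-p)D(k-1)$ on the good event via Lemmas~\ref{lem:sep-gap-general} and~\ref{lem:HP-calibrated-general}, the same uniform bound $D(k)\le B\Lambda(k+1)$ on the complement, and the same age-based accounting of the envelope excess with trigger age roughly $x_\star/\delta_{\max}$.

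One point is worth highlighting: your treatment of the drift-rescaled tail is more careful than the paper's, and it patches a real (though fixable) slip. The paper keeps $x_\star=\lceil 3\log(GB)/p\rceil$ unchanged, sets $y_\star=\lfloor x_\star/\delta_{\max}\rfloor$, and asserts $(1-p)^{y_\star}\le\exp(-p\,y_\star)\le(GB)^{-3}$; but $\exp(-p\,y_\star)$ is only of order $(GB)^{-3/\delta_{\max}}$, so the asserted inequality holds only when $\delta_{\max}\le 1$ (covering $\delta_k\equiv 0$ and $\delta_k\equiv 1$, but not, e.g., the speculative-decoding regime with $\delta_{\max}>1$ that the main text advertises). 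For $\delta_{\max}$ large enough (already $\delta_{\max}\ge 3/2$ with $G$ fixed), the resulting term $\frac{GB^2}{p^2}\,\delta_{\max}\,(s_{\max}+x_\star+\delta_{\max})\,(1-p)^{y_\star}$ need not vanish, which is exactly the obstacle you anticipated. Your remedy---inflating the constant in $x_\star$ by a factor of $\delta_{\max}$ so that $(1-p)^{x_\star/\delta_{\max}}\le(GB)^{-3}$---repairs this at the cost of rescaling $U_B$ by the constant factor $\delta_{\max}$, which is harmless both for the hypothesis $\gamma Bp\ge 2U_B$ and for the final asymptotics since $\delta_{\max}$ is independent of $(B,G)$. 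The only other discrepancies (your prefactors $B(s_{\max}+x_\star)$ and $B(s_{\max}+x_\star+1)$ versus the paper's $B(s_{\max}+x_\star+\delta_{\max})$) are constant-level bookkeeping and immaterial to the conclusion.
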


\begin{proof}
Fix $B,G$, and write $x_\star:=x_\star(B,G)$. For each step $k$ define the events
\[
E_{\mathrm{cnt}}(k) := \big\{|c_g(k)-Bp|\le \delta Bp\ \ \forall g\big\},
\quad
E_{\mathrm{snap}}(k) := \{\Lambda(k)\le s_{\max}+x_\star\},
\]
and
\[
E_{\mathrm{dev}}(k) := \big\{A^{\mathrm{pre}}(k)-(1-p)D(k-1)\le U_B\big\}.
\]
By Lemma~\ref{lem:sep-gap-general},
\[
\mathbb{P}\big(E_{\mathrm{cnt}}(k)^c\mid\mathcal{F}_{k-1}\big)
\le 2G\exp\Big(-\tfrac{\delta^2}{3}Bp\Big),
\]
and by Lemma~\ref{lem:HP-calibrated-general} (applied with $\widetilde{\Lambda}(k)=\Lambda(k)+\delta_{\max}$ and using $\Lambda(k)\le s_{\max}+x_\star$ on $E_{\mathrm{snap}}(k)$),
\[
\mathbb{P}\big(E_{\mathrm{dev}}(k)^c \cap E_{\mathrm{snap}}(k)\mid\mathcal{F}_{k-1}\big)
\le 2\,G^{-2}B^{-3},
\]
provided $C_2$ is chosen large enough so that $U_B$ dominates the bound $u_A(k)$ from Lemma~\ref{lem:HP-calibrated-general} on $E_{\mathrm{snap}}(k)$.  Abbreviate
\[
\varepsilon_B := 2G\exp\Big(-\tfrac{\delta^2}{3}Bp\Big) + 2\,G^{-2}B^{-3}.
\]

On the ``good'' event
\[
E_{\mathrm{good}}(k) := E_{\mathrm{cnt}}(k)\cap E_{\mathrm{snap}}(k)\cap E_{\mathrm{dev}}(k),
\]
we combine Lemma~\ref{lem:sep-gap-general} and the bound on $A^{\mathrm{pre}}(k)$ to obtain
\[
D(k) \le \max\Big\{s_{\max},\, (1-p)D(k-1)+U_B-\gamma Bp\Big\}.
\]
Under~\eqref{eq:gammaBp-vs-UB-general}, $\gamma Bp\ge 2U_B$, so
\[
(1-p)D(k-1)+U_B-\gamma Bp
\le (1-p)D(k-1)-U_B
\le (1-p)D(k-1),
\]
and hence
\begin{equation}\label{eq:Dk-good-general}
D(k)\ \le\ s_{\max} + (1-p)D(k-1)
\qquad\text{on }E_{\mathrm{good}}(k).
\end{equation}

On the complement $E_{\mathrm{good}}(k)^c$, we use a uniform bound. At the beginning of step $k$, before completions, every alive request grows by $\delta_k\le\delta_{\max}$ and new arrivals have size at most $s_{\max}$, so
\[
\Lambda(k+1) \le \max\{\Lambda(k)+\delta_k,\ s_{\max}\}
\le s_{\max} + \big(\Lambda(k)+\delta_{\max}-s_{\max}\big)_+.
\]
Therefore,
\[
B\,\Lambda(k+1)
\le B\,(s_{\max}+x_\star+\delta_{\max}) + B\big(\Lambda(k)+\delta_{\max}-s_{\max}-x_\star\big)_+,
\]
and since each device holds $B$ requests after admission, we have
\[
D(k) \le B\,\Lambda(k+1)
\qquad\text{on }E_{\mathrm{good}}(k)^c.
\]

Combining the two cases and using the union bound $\mathbf{1}_{E_{\mathrm{good}}(k)^c}\le \mathbf{1}_{E_{\mathrm{cnt}}(k)^c} + \mathbf{1}_{E_{\mathrm{snap}}(k)^c} + \mathbf{1}_{E_{\mathrm{dev}}(k)^c \cap E_{\mathrm{snap}}(k)}$, we obtain the pathwise decomposition
\begin{align*}
D(k)
&\le \big(s_{\max} + (1-p)D(k-1)\big)\,\mathbf{1}_{E_{\mathrm{good}}(k)}\\
&\quad + B\,(s_{\max}+x_\star+\delta_{\max})\big(\mathbf{1}_{E_{\mathrm{cnt}}(k)^c} + \mathbf{1}_{E_{\mathrm{dev}}(k)^c \cap E_{\mathrm{snap}}(k)} + \mathbf{1}_{E_{\mathrm{snap}}(k)^c}\big)\\
&\quad + B\big(\Lambda(k)+\delta_{\max}-s_{\max}-x_\star\big)_+.
\end{align*}

Taking conditional expectations given $\mathcal{F}_{k-1}$ and using the bounds on event probabilities, we have
\begin{align}
\mathbb{E}\big[D(k)\mid\mathcal{F}_{k-1}\big]
&\le \big(s_{\max}+(1-p)D(k-1)\big)\notag\\
&\quad + \varepsilon_B B\,(s_{\max}+x_\star+\delta_{\max})
+ B\,(s_{\max}+x_\star+\delta_{\max})\,\mathbb{P}\big(E_{\mathrm{snap}}(k)^c\mid\mathcal{F}_{k-1}\big)\notag\\
&\quad + B\,\mathbb{E}\big[(\Lambda(k)+\delta_{\max}-s_{\max}-x_\star)_+\mid\mathcal{F}_{k-1}\big].\label{eq:Dk-cond-exp-general}
\end{align}
We use the simple bound $\mathbf{1}_{\{\Lambda(k)>s_{\max}+x_\star\}}\le (\Lambda(k)+\delta_{\max}-s_{\max}-x_\star)_+$, and hence
\[
\mathbb{P}\big(E_{\mathrm{snap}}(k)^c\mid\mathcal{F}_{k-1}\big)
\le
\mathbb{E}\big[(\Lambda(k)+\delta_{\max}-s_{\max}-x_\star)_+\mid\mathcal{F}_{k-1}\big].
\]
Substituting into~\eqref{eq:Dk-cond-exp-general} and taking unconditional expectations yields the one-step inequality
\begin{align}
\mathbb{E}[D(k)]
&\le s_{\max} + (1-p)\mathbb{E}[D(k-1)]
+ \varepsilon_B B\,(s_{\max}+x_\star+\delta_{\max})\notag\\
&\quad + B\,(s_{\max}+x_\star+\delta_{\max})\,\mathbb{E}\big[(\Lambda(k)+\delta_{\max}-s_{\max}-x_\star)_+\big].\label{eq:Dk-uncond-general}
\end{align}

To bound the horizon average of the envelope term, observe that in any step $k$, the size of an alive request equals the sum of its prompt length (at most $s_{\max}$) and the cumulative increments $\sum_{r=1}^{\text{age}} \delta_r$ over its age. Since $0 \leq \delta_k \leq \delta_{\max}$, we have
\[
\text{size}(k) \;\le\; s_{\max} + \delta_{\max}\cdot\text{age}(k),
\]
so $\Lambda(k)-s_{\max}$ is bounded by $\delta_{\max}$ times the maximal age among alive requests at step $k$.  Summing over the horizon and exchanging the order of summation (first over requests, then over the steps they are alive), we obtain
\[
\sum_{k=0}^{K-1} \big(\Lambda(k)+\delta_{\max}-s_{\max}-x_\star\big)_+
\ \le\ \delta_{\max}\sum_{j} \sum_{r=0}^{o_j-1} (r+1-y_\star)_+,
\]
where the outer sum is over all requests $j$ whose lifetime intersects $[0{:}K-1]$, $o_j\sim\mathrm{Geo}(p)$ is the decode length of request $j$, and $y_\star:=\lfloor x_\star/\delta_{\max}\rfloor$ is an integer threshold on age.  For a single geometric random variable $o_j\sim\mathrm{Geo}(p)$,
\[
\mathbb{E}\Big[\sum_{r=0}^{o_j-1} (r+1-y_\star)_+\Big]
= \sum_{r=y_\star+1}^\infty (r-y_\star)\mathbb{P}(o_j\ge r)
= \sum_{r=y_\star+1}^\infty (r-y_\star)(1-p)^{r-1}
= \frac{(1-p)^{y_\star}}{p^2}.
\]
Let $N_K$ be the number of distinct requests whose lifetime intersects the horizon $[0{:}K-1]$.  At each step there are exactly $GB$ alive items, and each item counted in $N_K$ is alive for at least one step, so deterministically $N_K\le GBK$ and hence $N_K/K\le GB$.  By linearity of expectation,
\[
\frac{1}{K}\sum_{k=0}^{K-1}\mathbb{E}\big[(\Lambda(k)+\delta_{\max}-s_{\max}-x_\star)_+\big]
\le \frac{\delta_{\max}\,\mathbb{E}[N_K]}{K}\cdot\frac{(1-p)^{y_\star}}{p^2}
\le GB\,\delta_{\max}\,\frac{(1-p)^{y_\star}}{p^2}.
\]

Averaging~\eqref{eq:Dk-uncond-general} over $k=0,\dots,K-1$ and rearranging, we obtain
\begin{align}
\frac{1}{K}\sum_{k=0}^{K-1}\mathbb{E}[D(k)]
&\le s_{\max} + (1-p)\frac{1}{K}\sum_{k=0}^{K-1}\mathbb{E}[D(k-1)]\notag\\
&\quad + \varepsilon_B B\,(s_{\max}+x_\star+\delta_{\max})
+ \frac{GB^2}{p^2}\,\delta_{\max}\,(s_{\max}+x_\star+\delta_{\max})\,(1-p)^{y_\star}.\label{eq:avg-D-K-general}
\end{align}
The sequence $\{\mathbb{E}[D(k)]\}_{k\ge0}$ is nonnegative, so its Cesàro average is bounded, and the right-hand side of~\eqref{eq:avg-D-K-general} is an affine contraction in the horizon average.  Letting $K\to\infty$ and using $(1-p)<1$, we obtain
\[
\limsup_{K\to\infty}\frac{1}{K}\sum_{k=0}^{K-1}\mathbb{E}[D(k)]
\ \le\ 
\frac{s_{\max} + \varepsilon_B B\,(s_{\max}+x_\star+\delta_{\max}) + \frac{GB^2}{p^2}\,\delta_{\max}\,(s_{\max}+x_\star+\delta_{\max})\,(1-p)^{y_\star}}{p}.
\]

Finally, for $x_\star(B,G) = \lceil 3\log(GB)/p\rceil$, we have $y_\star \ge \frac{3}{p\delta_{\max}}\log(GB)$, so
\[
(1-p)^{y_\star} \le \exp(-p y_\star) \le (GB)^{-3},
\]
and thus
\[
\frac{GB^2}{p^2}\,\delta_{\max}\,(s_{\max}+x_\star+\delta_{\max})\,(1-p)^{y_\star} = o_{B,G}(1)
\]
whenever $B,G\to\infty$ with $\sqrt{G}\,\log G = o(B)$.  Similarly, by the definition of $\varepsilon_B$ and the same growth condition,
\[
\varepsilon_B B\,(s_{\max}+x_\star+\delta_{\max}) = o_{B,G}(1).
\]
Absorbing these vanishing terms into $o_{B,G}(1)$ and dividing by $p$ yields~\eqref{eq:avg-D-bound-final-general}.  This completes the proof.
\end{proof}

\subsubsection*{Part 2 (Lower bound of Imbalance of \FCFS)}

We now extend the \FCFS lower-bound analysis to the setting with general per-step drift: at each step $k$, every alive request gains a common increment $\delta_k\ge 0$, with $\delta_k$ independent of $(B,G)$ and of the stochastic primitives.  As before, each active request has an output length $o\sim\mathrm{Geo}(p)$, independent across requests and of prompt lengths $S\sim\mathcal{D}_{\mathrm{prefill}}$ supported on $\{0,1,\dots,s_{\max}\}$, with
\[
\mathbb{E}[S]=\mu_s,
\qquad
\mathrm{Var}(S)=\sigma_s^2>0.
\]

\paragraph{Step 2.A: Slot-level dynamics under \FCFS.}
Index slots by $(g,b)$ with $g\in\{1,\dots,G\}$ and $b\in\{1,\dots,B\}$.  Let $Y_{g,b}(k)$ denote the \emph{post-admission} size in slot $(g,b)$ at the end of step $k$.  Under \FCFS, when the pool is overloaded and size-agnostic, each completion in any slot is immediately refilled by the earliest pending request whose prompt length is an i.i.d.\ draw from $S$.  Thus, for each $(g,b,k)$, we have the recursion
\begin{equation}\label{eq:geom-slot-recursion-general}
Y_{g,b}(k)
\;=\;
(1-\xi_{g,b}(k))\big(Y_{g,b}(k-1)+\delta_k\big) + \xi_{g,b}(k)\,S_{g,b}(k),
\qquad k\ge1,
\end{equation}
where
\[
\xi_{g,b}(k)\sim\mathrm{Bernoulli}(p),
\qquad
S_{g,b}(k)\stackrel{\text{i.i.d.}}{\sim} S,
\]
and all $\{\xi_{g,b}(k),S_{g,b}(k)\}_{g,b,k}$ are mutually independent and independent of the initial configuration $\{Y_{g,b}(0)\}$.  The completion probability $p$ and the drift $\delta_k$ do not depend on the current size $Y_{g,b}(k-1)$.

For device $g$, define the post-admission load and the instantaneous \FCFS imbalance at step $k$ by
\[
S_g(k) := \sum_{b=1}^{B} Y_{g,b}(k),
\qquad
\mathrm{Imbalance}(k;\FCFS)
:= G\cdot\max_{1\le g\le G} S_g(k) - \sum_{g=1}^G S_g(k).
\]
By symmetry of the \FCFS rule across devices, for any arrival instance $\mathcal{I}$ and any $k$,
\[
\mathbb{E}\big[\mathrm{Imbalance}(k;\FCFS)\big]
= G\Big(\mathbb{E}\big[\max_g S_g(k)\big] - \frac{1}{G}\,\mathbb{E}\big[{\textstyle\sum_g S_g(k)}\big]\Big).
\]

We now show that, for every step $k$, the $G$ device loads have fluctuations of order $\sqrt{B}$ with variance uniformly bounded away from zero, and that the maximum across devices therefore exhibits an excess of order $\sqrt{B\log G}$.

\paragraph{Step 2.B: Per-slot decomposition and moment bounds.}
Fix a slot $(g,b)$ and suppress its indices.  For each step $k$, let $A(k)$ denote the \emph{age} of the current job in this slot at the end of step $k$: the number of steps elapsed since the last completion in this slot (and since time $0$ if no completion has occurred).  The completion indicators $\{\xi(k)\}_{k\ge1}$ form an i.i.d.\ Bernoulli$(p)$ sequence.  Therefore:
\begin{itemize}
\item At the end of step $k$, the event $\{A(k)\ge n\}$ implies that no completion occurred in the last $n$ steps, hence
\begin{equation}\label{eq:age-tail}
\mathbb{P}\big(A(k)\ge n\big) \;\le\; (1-p)^n,\qquad n\ge0,\;k\ge1.
\end{equation}
\item Every time a completion occurs in this slot, the new job’s prompt length is drawn as $S_{j}\sim S$, independently of the completion history and the past prompts.  Consequently, at any fixed time $k$, the prompt length $S_{\mathrm{cur}}(k)$ of the job currently in this slot is independent of the completion history up to time $k$, and in particular independent of $A(k)$, with
\[
S_{\mathrm{cur}}(k)\ \stackrel{d}{=}\ S,
\qquad
S_{\mathrm{cur}}(k)\perp\!\!\!\perp A(k).
\]
\end{itemize}

Given the age $A(k)$ and the sequence $(\delta_t)$, the total increment accumulated since the last completion is deterministic:
\[
H_k(A(k)) \;:=\; \sum_{r=0}^{A(k)-1} \delta_{k-r},
\]
with the convention that an empty sum is zero when $A(k)=0$.  Thus, at the end of step $k$,
\[
Y(k) \;=\; S_{\mathrm{cur}}(k) + H_k\big(A(k)\big),
\]
and $S_{\mathrm{cur}}(k)$ and $H_k(A(k))$ are independent.

From this representation we can immediately extract uniform moment bounds.

\begin{lemma}[Per-slot variance and third moment]\label{lem:slot-moments-general}
For any slot and any step $k\ge1$,
\begin{equation}\label{eq:slot-var-lb}
\mathrm{Var}\big(Y(k)\big) \;\ge\; \sigma_s^2,
\end{equation}
and there exists a constant $C_3<\infty$ depending only on $(p,\sigma_s^2,\delta_{\max},s_{\max})$ such that
\begin{equation}\label{eq:slot-third-moment}
\mathbb{E}\big[\,\big|Y(k) - \mathbb{E}[Y(k)]\big|^3\,\big] \;\le\; C_3
\qquad\text{for all }k\ge1.
\end{equation}
\end{lemma}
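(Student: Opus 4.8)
The plan is to exploit the explicit decomposition $Y(k) = S_{\mathrm{cur}}(k) + H_k\big(A(k)\big)$ established just above the lemma, together with the independence $S_{\mathrm{cur}}(k)\perp\!\!\!\perp A(k)$ and the uniform geometric tail bound \eqref{eq:age-tail}. For the variance lower bound \eqref{eq:slot-var-lb}, I would simply invoke additivity of variance for independent summands: since $H_k(A(k))$ is a deterministic function of $A(k)$, it is independent of $S_{\mathrm{cur}}(k)$, so
\[
\mathrm{Var}\big(Y(k)\big) = \mathrm{Var}\big(S_{\mathrm{cur}}(k)\big) + \mathrm{Var}\big(H_k(A(k))\big) = \sigma_s^2 + \mathrm{Var}\big(H_k(A(k))\big) \ge \sigma_s^2,
\]
where I used $S_{\mathrm{cur}}(k)\stackrel{d}{=}S$ and the nonnegativity of the second variance. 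This part is immediate.

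For the third-moment bound \eqref{eq:slot-third-moment}, I would center each summand and apply Minkowski's inequality in $L^3$. Writing $\widetilde S := S_{\mathrm{cur}}(k) - \mathbb{E}[S_{\mathrm{cur}}(k)]$ and $\widetilde H := H_k(A(k)) - \mathbb{E}[H_k(A(k))]$, we have
\[
\big\lVert Y(k) - \mathbb{E}[Y(k)]\big\rVert_3 \;\le\; \lVert \widetilde S\rVert_3 + \lVert \widetilde H\rVert_3.
\]
The prompt term is controlled by boundedness: since $S\in\{0,\dots,s_{\max}\}$ the centered prompt satisfies $\lvert\widetilde S\rvert\le s_{\max}$, hence $\lVert\widetilde S\rVert_3\le s_{\max}$. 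For the drift term, the key structural fact is $0\le H_k(A(k))\le \delta_{\max}A(k)$, which reduces everything to the moments of the age. Using $\lVert\widetilde H\rVert_3\le 2\lVert H_k(A(k))\rVert_3\le 2\delta_{\max}\lVert A(k)\rVert_3$ (the factor $2$ from $\lvert\mathbb{E}[X]\rvert\le\lVert X\rVert_3$), it remains to bound $\mathbb{E}[A(k)^3]$ uniformly in $k$.

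Here the uniform tail \eqref{eq:age-tail} does all the work. Since $\mathbb{P}(A(k)\ge n)\le(1-p)^n$ for every $k$, the layer-cake identity $\mathbb{E}[A(k)^3] = \sum_{n\ge1}\big(n^3-(n-1)^3\big)\mathbb{P}(A(k)\ge n)$ is dominated by $\sum_{n\ge1}(3n^2+1)(1-p)^n$, a convergent series whose sum $C_p$ depends only on $p$. Combining the two bounds and cubing gives $\mathbb{E}\big[\lvert Y(k)-\mathbb{E}[Y(k)]\rvert^3\big]\le\big(s_{\max}+2\delta_{\max}C_p^{1/3}\big)^3=:C_3$, a constant depending only on $(p,\sigma_s^2,\delta_{\max},s_{\max})$.

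I do not anticipate a genuine obstacle: the lemma is routine once the representation $Y(k)=S_{\mathrm{cur}}(k)+H_k(A(k))$ is in hand. The one point requiring care is \emph{uniformity in $k$}: because the drift sequence $(\delta_t)$ is time-varying, the law of $Y(k)$ is not stationary, so I must verify that every bound is expressed through $k$-free quantities, namely $s_{\max}$, $\delta_{\max}$, and the step-independent tail \eqref{eq:age-tail}. The boundedness $\delta_k\le\delta_{\max}$ is exactly what lets me pass from $H_k(A(k))$ to $\delta_{\max}A(k)$ uniformly, and the $k$-independent geometric tail is what makes $\mathbb{E}[A(k)^3]$ uniformly finite; these two facts together guarantee that $C_3$ can be taken independent of the step index.
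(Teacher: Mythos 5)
Your proposal is correct and follows essentially the same route as the paper's proof: the identical decomposition $Y(k)=S_{\mathrm{cur}}(k)+H_k(A(k))$ with variance additivity for the lower bound, the reduction $|H_k(A(k))|\le\delta_{\max}A(k)$, and the $k$-uniform geometric tail \eqref{eq:age-tail} to bound $\mathbb{E}[A(k)^3]$. The only difference is cosmetic: you combine the centered terms via Minkowski's inequality in $L^3$, whereas the paper uses the crude bound $(x+y+z)^3\le 27(x^3+y^3+z^3)$ on a three-term split; both yield the same step-independent constant $C_3$.
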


\begin{proof}[Proof of Lemma \ref{lem:slot-moments-general}]
Because $S_{\mathrm{cur}}(k)$ and $H_k(A(k))$ are independent,
\[
\mathrm{Var}\big(Y(k)\big)
= \mathrm{Var}\big(S_{\mathrm{cur}}(k)\big) + \mathrm{Var}\big(H_k(A(k))\big)
\;\ge\; \mathrm{Var}(S) = \sigma_s^2,
\]
establishing~\eqref{eq:slot-var-lb}.

For the third moment, we use that $S$ is bounded and that $H_k(A(k))$ has uniformly bounded moments.  From~\eqref{eq:age-tail},
\[
\mathbb{P}\big(A(k)\ge n\big) \;\le\; (1-p)^n,\qquad n\ge0.
\]
Hence, by the tail-sum formula,
\[
\mathbb{E}\big[A(k)^3\big]
\;\le\; 3\sum_{n=0}^{\infty} n^2\,\mathbb{P}\big(A(k)\ge n\big)
\;\le\; 3\sum_{n=0}^{\infty} n^2 (1-p)^n
\;<\;\infty,
\]
and this bound is uniform in $k$. As $0 \leq \delta_k \leq \delta_{\max}$, we have $|H_k(A(k))|\le \delta_{\max} A(k)$ and therefore
\[
\mathbb{E}\big[|H_k(A(k))|^3\big]
\;\le\; \delta_{\max}^3\,\mathbb{E}\big[A(k)^3\big]
\;\le\; C'_H,
\]
for some $C'_H$ depending only on $(p,\delta_{\max})$.  Since $S_{\mathrm{cur}}(k)$ is supported in $\{0,\dots,s_{\max}\}$, its centered third absolute moment is bounded by a constant $C'_S$ depending only on $s_{\max}$.  Writing
\[
Y(k) - \mathbb{E}[Y(k)]
= \big(S_{\mathrm{cur}}(k)-\mu_s\big)
+ \Big(H_k(A(k))-\mathbb{E}[H_k(A(k))]\Big)
+ \big(\mu_s - \mathbb{E}[Y(k)]\big),
\]
and using the triangle inequality and the elementary bound $(x+y+z)^3\le 27(x^3+y^3+z^3)$ for $x,y,z\ge0$, we obtain
\[
\mathbb{E}\big[|Y(k)-\mathbb{E}[Y(k)]|^3\big]
\;\le\; C''\Big(\mathbb{E}\big[|S_{\mathrm{cur}}(k)-\mu_s|^3\big]
+ \mathbb{E}\big[|H_k(A(k))|^3\big] + 1\Big)
\;\le\; C_3,
\]
for some finite constant $C_3$ depending only on $(p,\sigma_s^2,\delta_{\max},s_{\max})$.  This proves~\eqref{eq:slot-third-moment}.
\end{proof}

\paragraph{Step 2.C: Berry--Esseen for device-level loads.}
Fix a device $g$ and a step $k$.  The $B$ slot sizes $\{Y_{g,b}(k)\}_{b=1}^B$ are independent (across $b$) by construction: completions $\{\xi_{g,b}(k')\}$ and prompts $\{S_{g,b}(k')\}$ are independent across slots, and the \FCFS assignment is size-agnostic under the overloaded regime.  Their marginal distributions may not be identical, but Lemma~\ref{lem:slot-moments-general} provides uniform variance and third-moment bounds.

Let
\[
\mu_{g,b}(k) := \mathbb{E}\big[Y_{g,b}(k)\big],
\qquad
X_{g,b}(k) := Y_{g,b}(k) - \mu_{g,b}(k),
\]
and
\[
V_g(k) := \mathrm{Var}\big(S_g(k)\big)
= \mathrm{Var}\Big(\sum_{b=1}^B Y_{g,b}(k)\Big)
= \sum_{b=1}^{B} \mathrm{Var}\big(Y_{g,b}(k)\big).
\]
By~\eqref{eq:slot-var-lb},
\begin{equation}\label{eq:Vg-lb}
V_g(k) \;\ge\; B\,\sigma_s^2
\qquad\text{for all }g,k.
\end{equation}
Moreover,
\[
\sum_{b=1}^{B}\mathbb{E}\big[|X_{g,b}(k)|^3\big]
\;\le\;
B C_3.
\]

Define the normalized sum
\[
W_g(k)
\;:=\;
\frac{S_g(k) - \mathbb{E}[S_g(k)]}{\sqrt{V_g(k)}}.
\]
By the Berry--Esseen theorem for sums of independent, non-identically distributed random variables, there exists an absolute constant $C_{\mathrm{BE}}\in(0,\infty)$ such that, for each $g,k$,
\begin{equation}\label{eq:BE-device}
\sup_{x\in\mathbb{R}}
\Big|
\mathbb{P}\big(W_g(k)\le x\big) - \Phi(x)
\Big|
\ \le\
\frac{C_{\mathrm{BE}}\,\sum_{b=1}^B \mathbb{E}[|X_{g,b}(k)|^3]}{V_g(k)^{3/2}}
\ \le\
\frac{C_{\mathrm{BE}}\,C_3}{\sigma_s^3}\,\frac{1}{\sqrt{B}}
\;=:\; \frac{C_*}{\sqrt{B}},
\end{equation}
where $\Phi$ denotes the standard normal distribution function and $C_*:=C_{\mathrm{BE}}C_3/\sigma_s^3$ depends only on $(p,\sigma_s^2,\delta_{\max},s_{\max})$.

We now derive a uniform lower bound on the right-tail probabilities of $W_g(k)$.  Let
\[
z_G := \sqrt{\frac{\log G}{2}},
\]
and assume $\log G\ge 4$ (the case of smaller $G$ can be absorbed into constants).  For $Z\sim N(0,1)$, the Gaussian tail bound gives
\begin{equation}\label{eq:normal-tail}
\mathbb{P}(Z\ge z_G)
\;\ge\;
\frac{1}{\sqrt{2\pi}}\Big(\frac{1}{z_G} - \frac{1}{z_G^3}\Big)
\exp\Big(-\frac{z_G^2}{2}\Big)
\;\ge\;
\frac{1}{2\sqrt{2\pi}\,z_G}\,G^{-1/4}
\end{equation}
for all $\log G\ge 4$.  Combining~\eqref{eq:BE-device} and~\eqref{eq:normal-tail} yields
\begin{equation}\label{eq:Wg-tail}
\mathbb{P}\big(W_g(k)\ge z_G\big)
\;\ge\;
\frac{1}{2\sqrt{2\pi}\,z_G}\,G^{-1/4} \;-\; \frac{C_*}{\sqrt{B}}.
\end{equation}
Since we assume $\sqrt{G} = \tilde{o}(B)$, there exists $B_0<\infty$ and $G_0\ge2$ such that, for all $G\ge G_0$ and all $B\ge B_0$ satisfying this growth condition,
\[
\frac{C_*}{\sqrt{B}}
\;\le\;
\frac{1}{4\sqrt{2\pi}\,z_G}\,G^{-1/4}.
\]
Hence, for all such $(B,G)$ and for all devices $g$ and steps $k$,
\begin{equation}\label{eq:Wg-tail-uniform}
\mathbb{P}\big(W_g(k)\ge z_G\big)
\;\ge\;
\frac{1}{4\sqrt{2\pi}\,z_G}\,G^{-1/4}
\;=:\; p_G.
\end{equation}

\paragraph{Step 2.D: Extreme-value lower bound across devices.}
For a fixed step $k$, the collections $\{Y_{g,b}(k)\}_{b=1}^B$ are independent across $g$, because the slot primitives $(\xi_{g,b}(k),S_{g,b}(k))$ are independent across both $g$ and $b$ and \FCFS acts independently on each device under the overloaded regime.  Therefore, the device loads $\{S_g(k)\}_{g=1}^G$ are independent (though not necessarily identically distributed).  By~\eqref{eq:Wg-tail-uniform}, for each $g$,
\[
\mathbb{P}\big(W_g(k)\ge z_G\big)\ge p_G,
\]
so by independence across $g$,
\begin{equation}\label{eq:max-W-tail}
\mathbb{P}\Big(\max_{1\le g\le G} W_g(k)\ge z_G\Big)
\;=\;
1 - \prod_{g=1}^G \mathbb{P}\big(W_g(k)< z_G\big)
\;\ge\;
1 - (1-p_G)^G
\;\ge\;
1 - \exp(-G p_G),
\end{equation}
where we used $1-x\le e^{-x}$ for $x\ge0$.  From~\eqref{eq:Wg-tail-uniform},
\[
Gp_G
\;\ge\;
\frac{1}{4\sqrt{2\pi}\,z_G}\,G^{3/4},
\]
which is bounded away from zero for every fixed $G\ge2$.  Thus there exists an absolute $c_1\in(0,1)$ such that, for all $G\ge2$ and all $B\ge B_0$,
\begin{equation}\label{eq:max-W-tail-const}
\mathbb{P}\Big(\max_{1\le g\le G} W_g(k)\ge z_G\Big)
\;\ge\; c_1,
\qquad\forall k\ge1.
\end{equation}

Next, define
\[
t_{B,G} := \sigma_s\,\sqrt{B}\,z_G.
\]
By~\eqref{eq:Vg-lb}, for every $g,k$,
\[
\sqrt{V_g(k)} \;\ge\; \sigma_s\sqrt{B},
\]
so on the event $\{W_g(k)\ge z_G\}$ we have
\[
S_g(k) - \mathbb{E}[S_g(k)]
\;=\; W_g(k)\,\sqrt{V_g(k)}
\;\ge\; z_G\sqrt{V_g(k)}
\;\ge\; t_{B,G}.
\]
Therefore,
\[
\big\{\max_{1\le g\le G} W_g(k)\ge z_G\big\}
\;\subseteq\;
\Big\{\max_{1\le g\le G} \big(S_g(k)-\mathbb{E}[S_g(k)]\big)\ge t_{B,G}\Big\}.
\]
Using the standard inequality $\mathbb{E}[\max_i X_i]\ge \max_i a_i + t\,\mathbb{P}(\max_i(X_i-a_i)\ge t)$ for any family $\{X_i\}$ and any $t>0$, with $X_i=S_i(k)$ and $a_i=\mathbb{E}[S_i(k)]$, we obtain
\begin{align}
\mathbb{E}\Big[\max_{1\le g\le G} S_g(k)\Big]
&\ge
\max_{1\le g\le G} \mathbb{E}[S_g(k)]
+ t_{B,G}\,
\mathbb{P}\Big(\max_{1\le g\le G} \big(S_g(k)-\mathbb{E}[S_g(k)]\big)\ge t_{B,G}\Big)\notag\\
&\ge
\max_{1\le g\le G} \mathbb{E}[S_g(k)]
+ t_{B,G}\,
\mathbb{P}\Big(\max_{1\le g\le G} W_g(k)\ge z_G\Big).\label{eq:max-S-lb}
\end{align}
Combining~\eqref{eq:max-S-lb} with~\eqref{eq:max-W-tail-const}, we obtain
\begin{equation}\label{eq:max-S-lb-2}
\mathbb{E}\Big[\max_{1\le g\le G} S_g(k)\Big]
\;\ge\;
\max_{1\le g\le G} \mathbb{E}[S_g(k)]
+ c_1\,t_{B,G}
=
\max_{1\le g\le G} \mathbb{E}[S_g(k)]
+ c_1\,\sigma_s\,\sqrt{B}\,z_G.
\end{equation}

\paragraph{Step 2.E: Lower bound on \FCFS imbalance and its time average.}
For any step $k$, the expected imbalance under \FCFS is
\[
\mathbb{E}\big[\mathrm{Imbalance}(k;\FCFS)\big]
=
G\,\mathbb{E}\Big[\max_{1\le g\le G} S_g(k)\Big]
-
\sum_{g=1}^G \mathbb{E}[S_g(k)].
\]
Writing
\[
M_k := \max_{1\le g\le G} \mathbb{E}[S_g(k)],
\qquad
\overline{m}_k := \frac{1}{G}\sum_{g=1}^G \mathbb{E}[S_g(k)],
\]
we may rewrite
\[
\mathbb{E}\big[\mathrm{Imbalance}(k;\FCFS)\big]
=
G\Big(\mathbb{E}\big[\max_g S_g(k)\big] - \overline{m}_k\Big).
\]
Using~\eqref{eq:max-S-lb-2} and $M_k\ge\overline{m}_k$, we obtain
\begin{align}
\mathbb{E}\big[\mathrm{Imbalance}(k;\FCFS)\big]
&\ge
G\Big(M_k + c_1\sigma_s\sqrt{B}\,z_G - \overline{m}_k\Big)\notag\\
&=
G\big(M_k-\overline{m}_k\big) + c_1\,G\,\sigma_s\sqrt{B}\,z_G\notag\\
&\ge c_1\,G\,\sigma_s\,\sqrt{B}\,z_G.\label{eq:FCFS-step-lb}
\end{align}
Recalling $z_G=\sqrt{(\log G)/2}$, we conclude that there exists a constant $c>0$, depending only on $(p,\sigma_s^2,\delta_{\max},s_{\max})$, such that, for all $G\ge2$ and all $B$ sufficiently large with $\sqrt{G}=\tilde{o}(B)$,
\begin{equation}\label{eq:FCFS-step-lb-final}
\mathbb{E}\big[\mathrm{Imbalance}(k;\FCFS)\big]
\;\ge\;
c\,G\,\sigma_s\,\sqrt{B\log G},
\qquad\forall k\ge1.
\end{equation}

Define the long-run expected time-average imbalance under \FCFS and arrival instance $\mathcal{I}$ by
\[
\mathrm{AvgImbalance}(\FCFS;\mathcal{I})
:= \liminf_{K\to\infty} \frac{1}{K}\sum_{k=0}^{K-1} \mathbb{E}\big[\mathrm{Imbalance}(k;\FCFS)\big].
\]
Since the lower bound~\eqref{eq:FCFS-step-lb-final} holds uniformly for all $k\ge1$, we have
\[
\mathrm{AvgImbalance}(\FCFS;\mathcal{I})
\;\ge\;
c\,G\,\sigma_s\,\sqrt{B\log G},
\]
uniformly over all overloaded arrival instances $\mathcal{I}$.

In particular, as in the homogeneous-drift case, \FCFS incurs imbalance of order $G\sqrt{B\log G}$, with the constant depending only on the prompt-length dispersion $\sigma_s^2$, the geometric parameter $p$, and the drift bound $\delta_{\max}$, but not on $(B,G)$.

\subsubsection*{Part 3: \BF vs.\ \FCFS}
\label{subsubsec:geom-BF-vs-FCFS-general}

We now combine the \BF average-gap bound from Lemma~\ref{lem:avg-imbalance-geom-general} with the \FCFS lower bound from Part~2 to obtain the imbalance–reduction rate under geometric outputs with general per-step growth.

Recall the definition of the long-run expected average imbalance for a policy $\pi\in\{\BF,\FCFS\}$ and an arrival instance $\mathcal{I}$:
\[
\mathrm{AvgImbalance}(\pi;\mathcal{I})
:= \limsup_{K\to\infty} \frac{1}{K}\sum_{k=0}^{K-1} \mathbb{E}\big[\mathrm{Imbalance}(k;\pi)\big].
\]
As in the homogeneous-output case, we define the imbalance–reduction ratio of \BF against \FCFS by
\[
\mathbf{IIR}
\;:=\;
\inf_{\mathcal{I}}\;
\frac{\mathrm{AvgImbalance}(\FCFS;\mathcal{I})}{\mathrm{AvgImbalance}(\BF;\mathcal{I})},
\]
where the infimum ranges over all overloaded arrival instances $\mathcal{I}$.

From Lemma~\ref{lem:avg-imbalance-geom-general}, under our growth assumption $\sqrt{G}=\tilde{o}(B)$ we have, for any overloaded arrival instance $\mathcal{I}$,
\[
\limsup_{K\to\infty}\frac{1}{K}\sum_{k=0}^{K-1}\mathbb{E}[D(k)]
\ \le\ \frac{s_{\max}}{p}\ +\ o_{B,G}(1),
\]
where $o_{B,G}(1)\to0$ as $B,G\to\infty$ with $\sqrt{G}=\tilde{o}(B)$. Since for every step $k$,
\[
\mathrm{Imbalance}(k;\BF)
= \sum_{g=1}^{G}\big(\max_h L_h(k)-L_g(k)\big)
\ \le\ (G-1)\,D(k),
\]
we obtain
\begin{equation}\label{eq:BF-avg-imbalance-general}
\mathrm{AvgImbalance}(\BF;\mathcal{I})
\ =\ \limsup_{K\to\infty}\frac{1}{K}\sum_{k=0}^{K-1}\mathbb{E}\big[\mathrm{Imbalance}(k;\BF)\big]
\ \le\ (G-1)\,\frac{s_{\max}}{p}\ +\ (G-1)\,o_{B,G}(1).
\end{equation}

On the other hand, from the \FCFS analysis in Part~2 we showed that, for all $G\ge2$ and all $B$ sufficiently large satisfying $\sqrt{G}\,\log G=o(B)$, there exists a constant $c_{\mathrm{F}}>0$ (depending only on $\sigma_s^2,p,\delta_{\max},s_{\max}$) such that, uniformly over all overloaded arrival instances $\mathcal{I}$,
\begin{equation}\label{eq:FCFS-avg-imbalance-general}
\mathrm{AvgImbalance}(\FCFS;\mathcal{I})
\ \ge\ c_{\mathrm{F}}\,G\,\sigma_s\,\sqrt{B\log G}.
\end{equation}

Combining~\eqref{eq:BF-avg-imbalance-general} and~\eqref{eq:FCFS-avg-imbalance-general}, we obtain, for every such $\mathcal{I}$,
\begin{align*}
\frac{\mathrm{AvgImbalance}(\FCFS;\mathcal{I})}{\mathrm{AvgImbalance}(\BF;\mathcal{I})}
&\ \ge\
\frac{c_{\mathrm{F}}\,G\,\sigma_s\,\sqrt{B\log G}}{(G-1)\,\frac{s_{\max}}{p} + (G-1)\,o_{B,G}(1)}\\[0.5ex]
&\ =\
\Bigg(
\frac{c_{\mathrm{F}}\,p\,\sigma_s}{s_{\max}}\,
\frac{G}{G-1}
\Bigg)\,
\frac{\sqrt{B\log G}}{1 + \frac{p}{s_{\max}}\,o_{B,G}(1)}.
\end{align*}
Since $o_{B,G}(1)\to0$, we can choose $B$ large enough (as a function of $G$) so that $\frac{p}{s_{\max}}\,o_{B,G}(1)\le \tfrac{1}{2}$.  For all such $(B,G)$,
\[
\frac{1}{1 + \frac{p}{s_{\max}}\,o_{B,G}(1)} \;\ge\; \frac{2}{3},
\]
and we obtain the uniform lower bound
\[
\frac{\mathrm{AvgImbalance}(\FCFS;\mathcal{I})}{\mathrm{AvgImbalance}(\BF;\mathcal{I})}
\ \ge\
c\,\frac{p\,\sigma_s}{s_{\max}}\,
\frac{G}{G-1}\,\sqrt{B\log G},
\]
for some universal constant $c>0$ depending only on $(\sigma_s^2,p,\delta_{\max},s_{\max})$. Taking the infimum over all overloaded arrival instances $\mathcal{I}$ yields
\[
\mathbf{IIR}
\;=\;
\inf_{\mathcal{I}}\;
\frac{\mathrm{AvgImbalance}(\FCFS;\mathcal{I})}{\mathrm{AvgImbalance}(\BF;\mathcal{I})}
\;\ge\;
c\,\frac{p\,\sigma_s}{s_{\max}}\,
\frac{G}{G-1}\,\sqrt{B\log G}
\;=\;\Omega\!\big(\sqrt{B\log G}\big),
\]
which shows that \BF achieves the same $\sqrt{B\log G}$-order improvement over \FCFS in the general per-step drift setting as in the homogeneous $+1$ model.

\end{proof}

\subsection{Proof of Theorem~\ref{thm:energy-general}} \label{append:energy}

\begin{proof}[Proof of Theorem~\ref{thm:energy-general}]
Fix $\pi$ and $\mathcal I$. For each step $k$, recall $u_g(k)=L_g(k)/L_{g^*}(k)$. By \eqref{eq:E-discrete},
\begin{align}
E(\pi;\mathcal I)
&=\kappa_{\mathrm{ATT}}\sum_{k=0}^{K_\pi(\mathcal I)-1} L_{g^*}(k)\sum_{g=1}^G
\Bigl(P_{\mathrm{idle}}+(P_{\mathrm{max}}-P_{\mathrm{idle}})\,u_g(k)^\gamma\Bigr)\notag\\
&=\kappa_{\mathrm{ATT}}\sum_{k,g} \Bigl(P_{\mathrm{idle}}L_{g^*}(k)+(P_{\mathrm{max}}-P_{\mathrm{idle}})L_{g^*}(k)u_g(k)\Bigr)
\;\\&+\;\kappa_{\mathrm{ATT}}(P_{\mathrm{max}}-P_{\mathrm{idle}})\sum_{k,g}L_{g^*}(k)\bigl(u_g(k)^\gamma-u_g(k)\bigr),
\label{eq:E-decomp-1}
\end{align}
where $\sum_{k,g}$ abbreviates $\sum_{k=0}^{K_\pi(\mathcal I)-1}\sum_{g=1}^G$.
Since $L_{g^*}(k)u_g(k)=L_g(k)$, the first sum in \eqref{eq:E-decomp-1} becomes
\begin{align}
\kappa_{\mathrm{ATT}}\sum_{k,g}\Bigl(P_{\mathrm{idle}}L_{g^*}(k)+(P_{\mathrm{max}}-P_{\mathrm{idle}})L_g(k)\Bigr)
&=\kappa_{\mathrm{ATT}}\sum_{k,g}\Bigl(P_{\mathrm{max}}L_g(k)+P_{\mathrm{idle}}(L_{g^*}(k)-L_g(k))\Bigr)\notag\\
&=\kappa_{\mathrm{ATT}}P_{\mathrm{max}}W(\mathcal I)+\kappa_{\mathrm{ATT}}P_{\mathrm{idle}}\mathrm{ImbTot}(\pi;\mathcal I),
\label{eq:E-decomp-2}
\end{align}
using \eqref{eq:W-def} and \eqref{eq:ImbTot-def}. Substituting \eqref{eq:E-decomp-2} into \eqref{eq:E-decomp-1} yields the exact identity
\begin{equation}\label{eq:E-exact}
E(\pi;\mathcal I)
=
\kappa_{\mathrm{ATT}}P_{\mathrm{max}}W(\mathcal I)
+
\kappa_{\mathrm{ATT}}P_{\mathrm{idle}}\mathrm{ImbTot}(\pi;\mathcal I)
+
\kappa_{\mathrm{ATT}}(P_{\mathrm{max}}-P_{\mathrm{idle}})\sum_{k,g}L_{g^*}(k)\bigl(u_g(k)^\gamma-u_g(k)\bigr).
\end{equation}

We next bound the last term in \eqref{eq:E-exact}. Since $\gamma\in(0,1)$ and $u\in[0,1]$, we have $u^\gamma\ge u$, hence the last term is nonnegative. For an upper bound, concavity of $u^\gamma$ on $[0,1]$ implies the tangent inequality at $u=1$:
\[
u^\gamma \;\le\; 1+\gamma(u-1) \;=\; (1-\gamma)+\gamma u,
\]
so $u^\gamma-u\le (1-\gamma)(1-u)$. Therefore,
\[
L_{g^*}(k)\bigl(u_g(k)^\gamma-u_g(k)\bigr)
\;\le\;
(1-\gamma)\,L_{g^*}(k)\bigl(1-u_g(k)\bigr)
=(1-\gamma)\bigl(L_{g^*}(k)-L_g(k)\bigr).
\]
Summing over $k,g$ and using \eqref{eq:ImbTot-def},
\begin{equation}\label{eq:concavity-bound}
0\ \le\ \sum_{k,g}L_{g^*}(k)\bigl(u_g(k)^\gamma-u_g(k)\bigr)
\ \le\ (1-\gamma)\,\mathrm{ImbTot}(\pi;\mathcal I).
\end{equation}
Plugging \eqref{eq:concavity-bound} into \eqref{eq:E-exact} yields the sandwich bound
\begin{equation}\label{eq:E-sandwich}
\kappa_{\mathrm{ATT}}P_{\mathrm{max}}W(\mathcal I)
+
\kappa_{\mathrm{ATT}}P_{\mathrm{idle}}\mathrm{ImbTot}(\pi;\mathcal I)
\ \le\
E(\pi;\mathcal I)
\ \le\
\kappa_{\mathrm{ATT}}P_{\mathrm{max}}W(\mathcal I)
+
\kappa_{\mathrm{ATT}}C_\gamma\,\mathrm{ImbTot}(\pi;\mathcal I),
\end{equation}
where $C_\gamma$ is defined in \eqref{eq:C-D-def}.

Now apply \eqref{eq:E-sandwich} with $\pi=\pi_0$ and $\pi=\pi_1$. Using the upper bound for $E(\pi_0;\mathcal I)$ and the lower bound for $E(\pi_1;\mathcal I)$ gives
\begin{equation}\label{eq:E0-upper}
E(\pi_0;\mathcal I)\ \le\ \kappa_{\mathrm{ATT}}W(\mathcal I)\Bigl(P_{\mathrm{max}}+C_\gamma\,\eta_{\mathrm{sum}}(\pi_0;\mathcal I)\Bigr),
\end{equation}
where we used $\mathrm{ImbTot}(\pi_0;\mathcal I)=\eta_{\mathrm{sum}}(\pi_0;\mathcal I)W(\mathcal I)$.
For the energy difference, use \eqref{eq:E-exact} and \eqref{eq:concavity-bound} to obtain
\begin{align}
E(\pi_0;\mathcal I)-E(\pi_1;\mathcal I)
&=\kappa_{\mathrm{ATT}}P_{\mathrm{idle}}\bigl(\mathrm{ImbTot}(\pi_0;\mathcal I)-\mathrm{ImbTot}(\pi_1;\mathcal I)\bigr) \notag
\\&+\kappa_{\mathrm{ATT}}(P_{\mathrm{max}}-P_{\mathrm{idle}})\bigl(X(\pi_0)-X(\pi_1)\bigr),
\label{eq:diff-start}
\end{align}
where $X(\pi):=\sum_{k,g}L_{g^*}(k)\bigl(u_g(k)^\gamma-u_g(k)\bigr)$.
By \eqref{eq:concavity-bound}, $X(\pi_0)\ge 0$ and $X(\pi_1)\le (1-\gamma)\mathrm{ImbTot}(\pi_1;\mathcal I)$. Hence
\begin{align}
E(\pi_0;\mathcal I)-E(\pi_1;\mathcal I)
&\ge
\kappa_{\mathrm{ATT}}P_{\mathrm{idle}}\bigl(\mathrm{ImbTot}(\pi_0;\mathcal I)-\mathrm{ImbTot}(\pi_1;\mathcal I)\bigr) \notag
\\&-\kappa_{\mathrm{ATT}}(P_{\mathrm{max}}-P_{\mathrm{idle}})(1-\gamma)\,\mathrm{ImbTot}(\pi_1;\mathcal I)\notag\\
&=
\kappa_{\mathrm{ATT}}\left(
P_{\mathrm{idle}}\bigl(\mathrm{ImbTot}(\pi_0;\mathcal I)-\mathrm{ImbTot}(\pi_1;\mathcal I)\bigr)
-D_\gamma\,\mathrm{ImbTot}(\pi_1;\mathcal I)
\right),
\label{eq:diff-lb}
\end{align}
where $D_\gamma$ is defined in \eqref{eq:C-D-def}. Under the assumption \eqref{eq:alpha-assump},
$\mathrm{ImbTot}(\pi_1;\mathcal I)\le \mathrm{ImbTot}(\pi_0;\mathcal I)/\alpha$, so \eqref{eq:diff-lb} implies
\begin{equation}\label{eq:diff-lb-2}
E(\pi_0;\mathcal I)-E(\pi_1;\mathcal I)
\ \ge\
\kappa_{\mathrm{ATT}}\mathrm{ImbTot}(\pi_0;\mathcal I)
\left(
P_{\mathrm{idle}}\Bigl(1-\frac{1}{\alpha}\Bigr)-\frac{D_\gamma}{\alpha}
\right).
\end{equation}
Combining
\eqref{eq:diff-lb-2} with \eqref{eq:E0-upper} and using $\mathbb{E}\bigl[\mathrm{ImbTot}(\pi_0;\mathcal I)\bigr]=\eta_{\mathrm{sum}}(\pi_0;\mathcal I)\mathbb{E}\bigl[W(\mathcal I)\bigr]$ gives \eqref{eq:energy-saving-frac}. Taking the expectation completes the proof. 
\end{proof}

\subsection{Proof of Inequality \eqref{eq:eta-fcfs-lb}} \label{append:ineq}
\begin{proof}[Proof of Inequality \eqref{eq:eta-fcfs-lb}]
Under $\FCFS$ in the overloaded regime, the slot-level Markov recursion \eqref{eq:geom-slot-recursion} (Section~\ref{sec:proof}) is uniformly ergodic with unique stationary slot law
$U\overset{d}{=}S_{\mathrm{cur}}+A$ where $S_{\mathrm{cur}}\stackrel{d}{=}S$ and $A\sim\mathrm{Geo}(p)-1$ independent, hence
\[
\mu_U:=\mathbb{E}[U]=\mu_s+\frac{1-p}{p},
\qquad
\mathrm{Var}(U)=\sigma_{\mathrm{snap}}^2.
\]
At stationarity, the total system workload in a step equals $\sum_{g=1}^G\sum_{b=1}^B U_{g,b}$, so its expectation is
$GB\mu_U$. Therefore the long-run average total workload per step under $\FCFS$ equals $GB\mu_U$ by uniform ergodicity and Ces\`aro averaging.
On the other hand, Part~2 of Appendix~\ref{append:proofinhomog-o} establishes the lower bound
\[
\mathbb{E}\bigl[\mathrm{AvgImbalance}(\FCFS;\mathcal I)\bigr]\ \ge\ c'\,G\,\sigma_{\mathrm{snap}}\sqrt{B\log G}
\]
for a universal constant $c'>0$ and all $B$ sufficiently large. Dividing by $GB\mu_U$ yields
\[
\frac{\mathbb{E}\bigl[\mathrm{AvgImbalance}(\FCFS;\mathcal I)\bigr]}{GB\mu_U}
\ \ge\
c'\,\frac{\sigma_{\mathrm{snap}}}{\mu_U}\sqrt{\frac{\log G}{B}}.
\]
Interpreting $\eta_{\mathrm{sum}}(\FCFS;\mathcal I)$ as the corresponding normalized imbalance level (cumulative or long-run averaged; these coincide under stationarity), we obtain \eqref{eq:eta-fcfs-lb} (absorbing constants into $\gtrsim$). 
\end{proof}

\section{Supplementary Materials for Section \ref{sec:num}} \label{append:num}

\subsection{Energy and MFU Computation}

This appendix provides details on the energy consumption metric used in our experiments.

\paragraph{Model FLOPs Utilization (MFU).}
MFU measures how efficiently the GPU's computational capacity is utilized, defined as the ratio of actual floating-point operations performed to the theoretical peak:
\begin{equation}
    \mathrm{mfu} = \frac{\text{Observed FLOPs/second}}{\text{Peak FLOPs/second}}
\end{equation}
For LLM inference, the FLOPs required to generate one token is approximately $6 \times N_{\mathrm{params}}$, arising from the matrix multiplications in the forward pass (approximately $2 \times N_{\mathrm{params}}$ multiply-accumulate operations, each counting as 2 FLOPs, with an additional factor from attention computation). Thus, for a throughput of $\mathcal{T}$ tokens per second on a model with $N_{\mathrm{params}}$ parameters:
\begin{equation}
    \mathrm{mfu} \approx \frac{\mathcal{T} \times 6 \times N_{\mathrm{params}}}{\mathrm{FLOPs}_{\mathrm{peak}}}
\end{equation}
where $\mathrm{FLOPs}_{\mathrm{peak}}$ is the GPU's theoretical peak throughput (e.g., 312 TFLOPs for NVIDIA A100 at FP16/BF16 precision).

\paragraph{Power Model and Parameters.}
We adopt the power--utilization model from \cite{ozcan2025quantifying}:
\begin{equation}
    P(\mathrm{mfu}) = P_{\mathrm{idle}} + (P_{\mathrm{max}} - P_{\mathrm{idle}}) \cdot \left(\frac{\mathrm{mfu}}{\mathrm{mfu}_{\mathrm{sat}}}\right)^{\gamma}
\end{equation}
The parameters are set as follows:
\begin{itemize}
    \item $P_{\mathrm{idle}} = 100$W: baseline power draw when the GPU is powered but not computing, reflecting memory refresh, control logic, and leakage current.
    \item $P_{\mathrm{max}} = 400$W: peak power draw at full utilization, consistent with NVIDIA A100/H100 thermal design power specifications.
    \item $\mathrm{mfu}_{\mathrm{sat}} = 0.45$: the utilization level at which power consumption saturates. In practice, inference workloads rarely exceed this threshold due to memory bandwidth limitations.
    \item $\gamma = 0.7$: the sublinear scaling exponent, empirically calibrated in \cite{ozcan2025quantifying} for inference workloads. The value $\gamma < 1$ implies diminishing power increase per unit utilization gain, but also substantial power draw ($P_{\mathrm{idle}}$) even at zero useful throughput.
\end{itemize}

\paragraph{Implications for Load Balancing.}
The sublinear power model has a direct implication for barrier-synchronized systems. At each decode step, the wall-clock duration is $\Delta t = C + t_\ell \cdot \max_g L_g$. Workers with load $L_g < \max_{g'} L_{g'}$ complete their local computation early and idle until the barrier clears. During this idle period, these workers draw power near $P_{\mathrm{idle}}$ while generating zero tokens. Therefore, the total energy consumed is the discrete approximation of the integral (Equation \eqref{eq:expE}) :
\begin{equation}
    E = \sum_{k=1}^{K} \sum_{g=1}^{G} \left[ P\bigl(\mathrm{mfu}_g^{(k)}\bigr) \cdot \Delta t^{(k)} \right]
\end{equation}
where $\mathrm{mfu}_g^{(k)}$ reflects worker $g$'s effective utilization at step $k$. Workers idling at barriers have $\mathrm{mfu}_g^{(k)} \approx 0$ for a fraction of the step, drawing $P_{\mathrm{idle}}$ without contributing to throughput. By reducing load imbalance, \BF ensures that all workers remain productively utilized throughout each step, converting idle power into useful computation and thereby reducing total energy consumption.

\subsection{Results on the Lighter-Load Dataset BurstGPT} \label{append:light}

We additionally evaluate our routing policy on BurstGPT~\cite{wang2024burstgpt}, a lighter-load benchmark with shorter request lengths than LongBench. The prefill and decode length distributions are shown in Figure~\ref{fig:burstd}. Given the reduced per-request workload, 16 GPUs suffice to process this dataset.

\begin{figure}[h]
    \centering
    \includegraphics[width=0.7\linewidth]{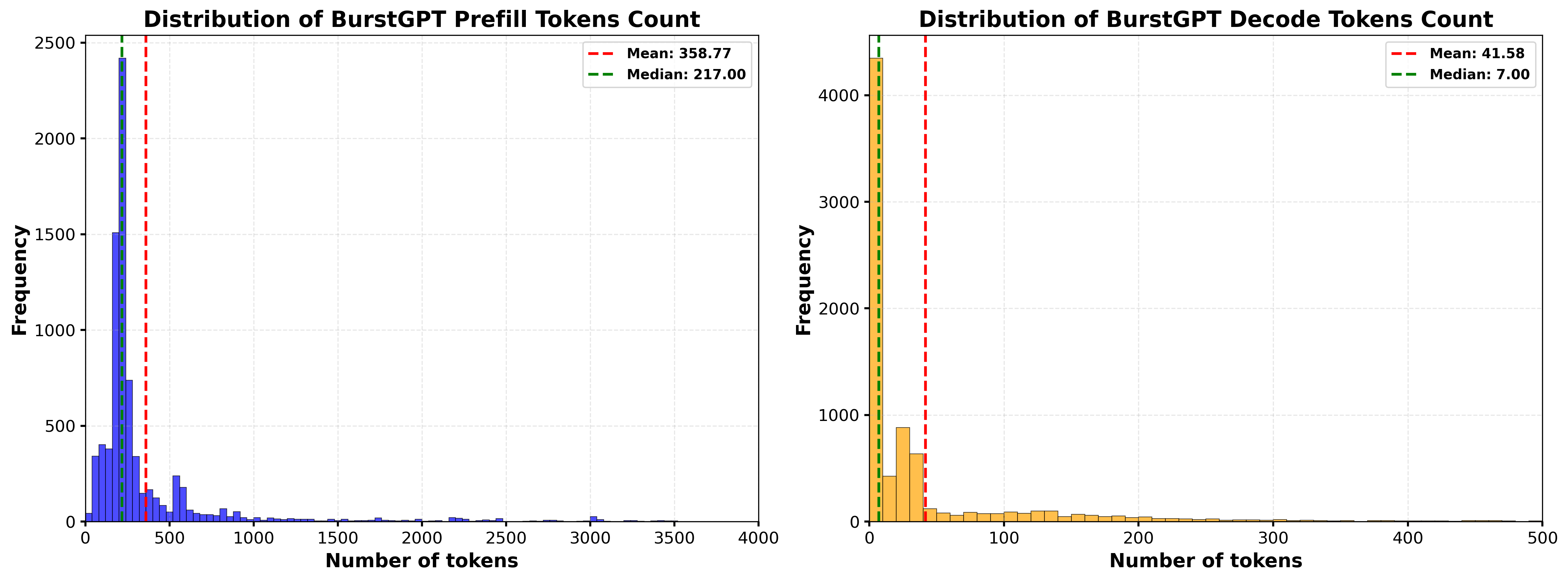}
    \caption{Distribution of prefill (input) and decode (output) lengths in the BurstGPT dataset.}
    \label{fig:burstd}
\end{figure}

\paragraph{Load Trajectory Analysis.}
Figure~\ref{fig:load_light} presents per-worker load trajectories for the BurstGPT workload. Under \FCFS, worker loads spread across 250k--400k tokens, with persistent inter-worker variance throughout execution. \textsf{JSQ} exhibits even greater imbalance, confirming that queue length is a poor proxy for actual token load when request sizes vary substantially. In contrast, \BF ($H=0$) achieves tighter clustering around 270k--320k tokens. With lookahead enabled, \BF ($H=20$) produces the most balanced distribution: all workers maintain nearly identical loads throughout execution, reducing average imbalance by a factor of $17\times$ compared to \FCFS.

\begin{figure}[h]
    \centering
    \includegraphics[width=\textwidth]{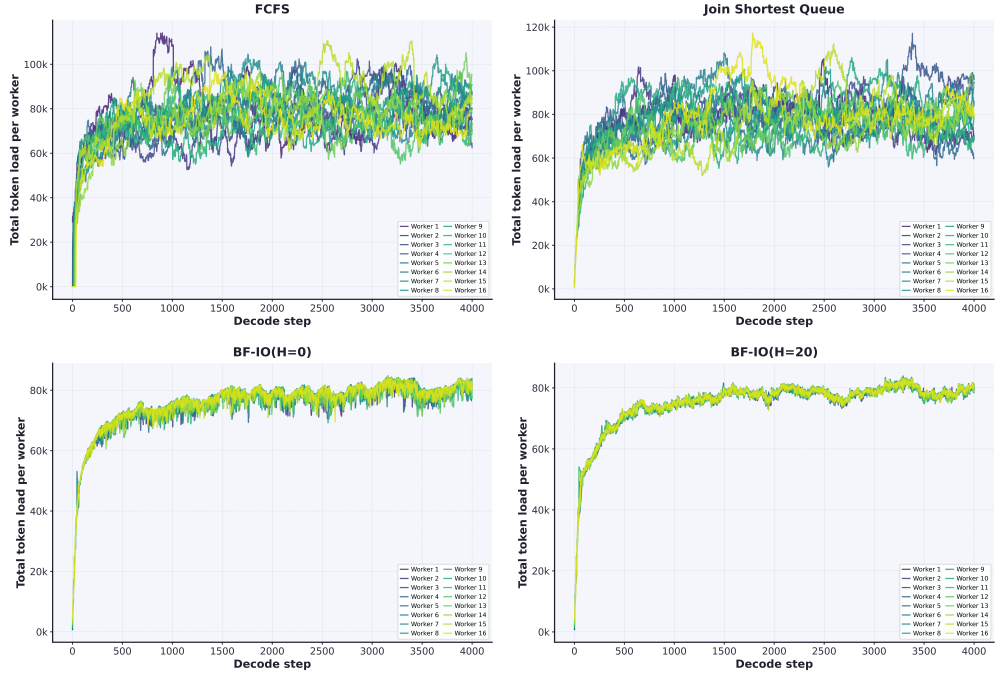}
    \caption{Per-worker token load trajectories under four routing policies (BurstGPT, $G=16$ GPUs). \FCFS exhibits persistent load spread (250k--400k tokens); \textsf{JSQ} shows even greater imbalance due to its size-agnostic assignment. \BF ($H=0$) achieves substantially tighter clustering, and \BF ($H=20$) produces near-uniform loads across all workers throughout execution.}
    \label{fig:load_light}
\end{figure}

\paragraph{Power Consumption Analysis.}
Figure~\ref{fig:power_light} illustrates average GPU power consumption over time. The baseline \FCFS policy oscillates between 330--380\,W, reflecting periods in which subsets of GPUs remain underutilized. \BF ($H=20$) maintains more stable power draw at higher sustained utilization, completing the workload faster despite drawing similar peak power—consistent with the energy-saving mechanism established in the main text.

\begin{figure}[htbp]
    \centering
    \includegraphics[width=0.8\textwidth]{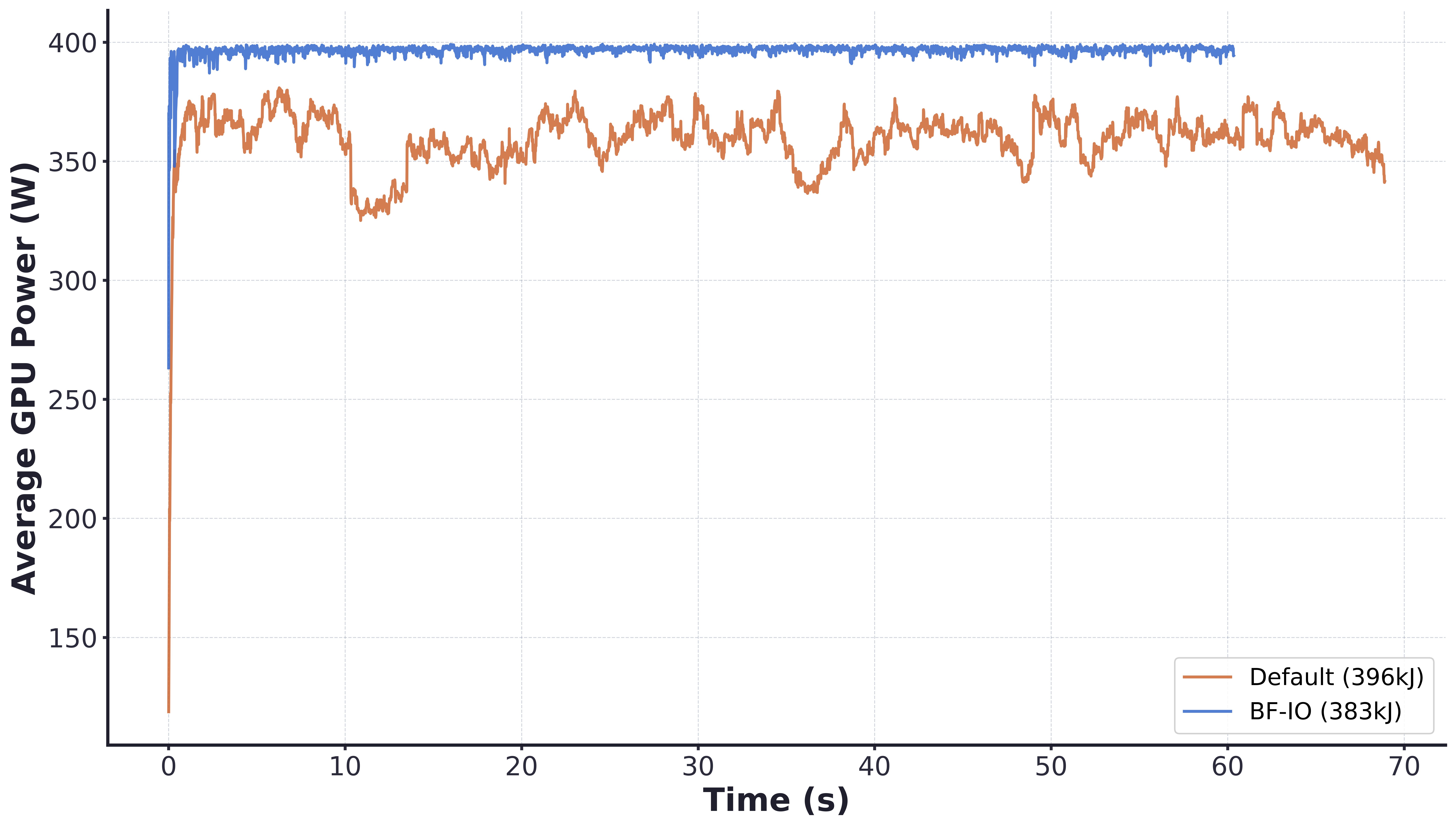}
    \caption{Average GPU power consumption over time (BurstGPT, $G=16$ GPUs). \FCFS exhibits fluctuations (330--380\,W) due to load imbalance, while \BF maintains higher sustained utilization and completes the workload in less time, yielding lower total energy consumption.}
    \label{fig:power_light}
\end{figure}

\paragraph{Quantitative Metrics Comparison.}
Table~\ref{tab:metrics_light} summarizes performance across routing policies. Even in this lighter-load regime, \BF ($H=20$) reduces average imbalance by $17\times$ relative to \FCFS, improves throughput by 14\%, reduces TPOT by 12\%, and lowers energy consumption by 3.3\%. The more modest energy savings compared to LongBench (3.3\% vs.\ 29\%) reflect the lower baseline imbalance in this easier workload; nevertheless, the improvements remain consistent and statistically significant across all metrics.

Notably, \FCFS and \textsf{JSQ} exhibit nearly identical performance on this dataset, confirming that both policies are effectively size-agnostic: \textsf{JSQ} balances queue lengths rather than token loads, providing no advantage when request sizes vary.

\begin{table}[h]
\centering
\caption{Performance comparison on BurstGPT ($G=16$ GPUs, $B=72$ batch size). Arrows indicate preferred direction.}
\label{tab:metrics_light}
\setlength{\tabcolsep}{8pt}
\begin{tabular}{lcccc}
\toprule
\multirow{2}{*}{\textbf{Policy}} & \textbf{Avg Imbalance}$\downarrow$ & \textbf{Throughput}$\uparrow$ & \textbf{TPOT}$\downarrow$ & \textbf{Energy}$\downarrow$ \\
 & ($\times 10^4$) & ($10^4$ tok/s)  & ($10^{-2}$ s/tok)  & (kJ)  \\
\midrule
\FCFS            & 27.9 & 8.00 & 1.42 & 396 \\
\textsf{JSQ}     & 28.2 & 7.99 & 1.42 & 396 \\
\BF ($H=0$)      & 2.92 & 9.03 & 1.26 & 386 \\
\BF ($H=20$)     & \textbf{1.65} & \textbf{9.13} & \textbf{1.25} & \textbf{383} \\
\bottomrule
\end{tabular}
\end{table}




\end{appendices}


\bibliography{sn-bibliography}

\end{document}